\documentclass[12pt,leqno,titlepage]{amsart}
\usepackage[foot]{amsaddr}
\usepackage{amsmath}
\usepackage{graphicx,psfrag,epsf}
\usepackage{enumerate}
\usepackage{natbib}
\usepackage{bibunits}
\defaultbibliographystyle{agsm}
\defaultbibliography{references}
\usepackage{url} 
\usepackage{mathtools, empheq}
\usepackage{amssymb, setspace}
\usepackage{bm, bbm}
\usepackage{graphicx, tabularx, array}
\usepackage[svgnames,table]{xcolor}
\usepackage{enumitem}
\usepackage{multirow}
\usepackage{hyperref}
\usepackage[bottom=.75in]{geometry}
\usepackage{makecell}
\usepackage{booktabs}
\usepackage{amsmath}
\usepackage{array}
\usepackage{algorithm}
\usepackage{algorithmic}
\usepackage{longtable}
\usepackage{colortbl}
\usepackage{ulem}
\usepackage{pdfpages} %
\usepackage{tikz}
\usepackage{pifont}
\DeclareRobustCommand{\ultraboldcheck}{%
  \ooalign{%
    \hfil\ding{51}\hfil\cr
    \hfil\raisebox{0.3pt}{\ding{51}}\hfil\cr
    \hfil\raisebox{-0.3pt}{\ding{51}}\hfil\cr
  }%
}

\usepackage[table]{xcolor}
\definecolor{myblue}{HTML}{8CBAD8}

\newcolumntype{C}[1]{>{\centering\arraybackslash}m{#1}}

\newcommand{\G}{\mathcal{G}}

\newcommand{\DeltaP}{\Delta_+}
\newcommand{\Ical}{\mathcal I}
\newcommand{\Mcal}{\mathcal M}

\renewcommand{\P}{\mathbb{P}}

\newcommand{\1}{\mathbbm{1}}

\newcommand{\p}{\text{plan}}

\renewcommand{\a}{\text{analysis}}

\newtheorem{theorem}{Theorem}[]
\newtheorem{corollary}{Corollary}[]

\newtheorem{lemma}{Lemma}

\newtheorem{proposition}{Proposition}

\usepackage{amssymb,amsthm,amsmath}
\usepackage{etoolbox}
\AtBeginDocument{%
    \let\orignewpage\newpage 
    \renewcommand\newpage{}
    \patchcmd{\clearpage}{\newpage}{\orignewpage}{}{}}

\makeatletter
\def\@settitle{\begin{center}\normalfont\Large\bfseries \@title\end{center}}
\makeatother
\makeatletter
\def\@setauthors{%
  \begingroup
    \let\MakeUppercase\relax
    \begin{center}
        \vspace{1em}
      \normalfont\normalsize\authors
    \end{center}%
  \endgroup
  \@setthanks
}
\makeatother

\makeatletter
\def\and{, }
\makeatother

\makeatletter
\def\@setkeywords{%
\\  \noindent \textit{Key words and phrases. }\@keywords%
}
\makeatother

\begin{document}
\def\spacingset#1{\renewcommand{\baselinestretch}%
{#1}\small\normalsize} \spacingset{1}
\begin{bibunit}

\title[Sensitivity analysis for observational studies with many outcomes]{\fontsize{15.8}{19}\selectfont Powerful Multivariate Sensitivity Analysis via Sample Splitting in an Observational Study of the Effects of Poverty on Cardiovascular Disease Risk Factors}

\author{William Bekerman$^{1}$} \email{\ignorespaces bekerman@wharton.upenn.edu ({\normalfont corresponding author})}
\author{Anurag Mehta$^{2}$}
\email{anurag.mehta@emory.edu}
\author{Rebecca E. Hasson$^{3}$}
\email{hassonr@umich.edu}
\author{Leah E. Robinson$^{3}$}
\email{lerobin@umich.edu}
\author{\newline Dylan S. Small$^{1}$}
\email{dsmall@wharton.upenn.edu}
\author{Colin B. Fogarty$^{4}$}
\email{fogartyc@umich.edu}

\dedicatory{$^{1}$Department of Statistics and Data Science, University of Pennsylvania, Philadelphia, PA, USA \\
$^{2}$Division of Cardiology, Emory University School of Medicine, Atlanta, GA, USA
\\
$^{3}$School of Kinesiology, University of Michigan, Ann Arbor, MI, USA
\\
$^{4}$Department of Statistics, University of Michigan, Ann Arbor, MI, USA}

\begin{abstract}
 When assessing the causal effect of an exposure on two or more outcomes in an observational study, a linear combination of outcomes may lessen the sensitivity of a test of the global null hypothesis to potential unmeasured biases. While all linear combinations of scored outcomes can be considered using Scheff\'e projections or constrained variants thereof, finding the combination that minimizes sensitivity to unmeasured biases requires corrections for multiple testing, which can erode power, especially when many outcomes are of interest. To mitigate this issue, we propose splitting the sample into a planning sample to identify an optimal linear combination and an analysis sample to conduct inference. We provide a novel characterization of the set of linear combinations for which this approach is guaranteed to achieve the same asymptotic power as full-sample alternatives and conduct extensive simulation studies that demonstrate enhanced power in finite samples. Finally, we apply our method to investigate the effects of poverty on the emergence of cardiovascular disease risk factors in children and adolescents. We discover adverse consequences on outcomes related to body composition, physical activity, and tobacco exposure. 
 Although the impact of poverty on elevated tobacco exposure shows some robustness to unmeasured confounding, the other findings remain sensitive to potential biases.
\end{abstract}

\keywords{Constrained statistical inference; Coherence; Exploratory data analysis; Multiple hypothesis testing; Sensitivity analysis\\\\
Bekerman was partially supported by the National Science Foundation Graduate Research Fellowship. Fogarty was partially funded by the National Science Foundation (NSF DMS-2413484). Mehta has received institutional research grants from Novartis and Amgen. The authors thank Jinho Bok for helpful suggestions. }

\maketitle
\spacingset{1.8}

\section{Socioeconomic Status and Cardiovascular Disease}

\subsection{How does poverty affect risk factors for cardiovascular disease?}

Cardiovascular disease (CVD) is a leading cause of mortality in the United States and around the world. Among the many factors that influence cardiovascular outcomes, socioeconomic status is a prominent, yet complex, underlying determinant of CVD risk that interacts closely with both behavioral and environmental risk factors. For instance, throughout the United States, children and adolescents from low-income families are less involved in sports and other fitness-oriented activities compared to their more affluent peers.  A study by the Centers for Disease Control and Prevention found that roughly 70 percent of children from families with incomes at least four times the poverty line took part in organized sports, compared to just around 31 percent among families at or below the poverty line \citep{black2022organized}. Similarly, 
\cite{rentschler2023global} documented that those in lower-income communities are particularly impacted by outdoor air pollution and disproportionately subjected to its downstream health complications. Socioeconomic disadvantage is well-established as a contributing factor to adverse health outcomes, yet growing up in or near poverty is relatively common in the United States compared to similarly affluent nations \citep{council2016poverty, shrider2023poverty}. Clarifying the role of poverty in the emergence of CVD risk factors in children and adolescents, therefore, offers significant potential for the development of new policies and interventions to mitigate negative health outcomes. 

Previous works have sought to better understand the mechanisms between growing up in poverty and the worsening risk of CVD. Typically, these studies have focused on evaluating how family income, a validated measure of income disparity that can be standardized against the federal poverty level to calculate family poverty income ratio (FPIR), affects intermediate, measurable risk factors like obesity, high cholesterol, and high blood pressure (BP) that indicate worsened risk of heart attack, stroke, and other complications. In particular, \cite{ali2011household}, \cite{jackson2018income}, and \cite{connolly2022social} used data from the National Health and Nutrition Examination Survey (NHANES) to study associations between FPIR and the emergence of CVD risk factors in children and adolescents.

\subsection{A new study design to reduce sensitivity to unobserved bias}

To investigate these impacts, we gather pre-pandemic NHANES waves from 1999 to 2016. We define exposure as growing up below the poverty line (FPIR $<$ 1) and compare ``treated'' children and adolescents to individuals with less socioeconomic disadvantage (FPIR $>$ 1). Throughout, we use the term ``poverty'' as shorthand to denote FPIR $<$ 1. Pair matching is used to adjust for measured covariates deemed plausible to affect poverty or intermediate CVD risk factors. We match exactly on the wave of data collection, participant gender, race/ethnicity, age group (8-11 years old; 12-17 years old), and health insurance status to examine possible stratum-specific effects, and match closely on age and the six-month time period when the examination was performed. Seeking to assess whether or not growing up in poverty causes a worsening risk of CVD, we collect outcomes related to nutrition, tobacco exposure, lipids, hypertension, body composition, kidney function, diabetes, and physical activity. In the full sample, we have a total of $4027$ matched pairs. These are divided into $1528$ pairs between the ages of 8 and 11 and $2499$ pairs between the ages of 12 and 17. In the younger age group, we have $736$ pairs of girls and $792$ pairs of boys, whereas the older group is divided into $1230$ pairs of girls and $1269$ pairs of boys. 

Related works provide descriptive statistics and discuss empirical patterns observed in their data, but few conduct formal statistical inference. Those analyses often use regression models that control for relevant demographic variables to test associations between low FPIR and each risk factor separately, reporting any $p$-values deemed significant. It is important to recognize that these are observational studies, so there is no reason to assume that children and adolescents from low-income families would be comparable to those from more affluent households in terms of covariates that were not measured. In practice, potential confounding variables such as parental health, inherited cardio-metabolic risk, family instability, neighborhood safety, and pubertal timing may be unmeasured or poorly measured. To alleviate concerns, it is typical to perform a sensitivity analysis to evaluate how much bias would need to be present from these covariates to alter the qualitative conclusions of an analysis that assumes we have measured all confounding variables. Examples of methods and illustrations of sensitivity analyses in observational studies include 
\cite{cornfield1959smoking} and
\cite{tan2006distributional}. 
The primary goal of our article is to clarify: Does growing up in poverty have a harmful causal effect on the emergence of CVD risk factors in children and adolescents? If so, how robust are our conclusions to potential unmeasured biases?

Relatively few strategies have been introduced to test this type of hypothesis, known alternatively as the global, intersection, or overall null hypothesis, in a sensitivity analysis framework. \cite{fogarty2016sensitivity} use the maximum of individual test statistics through a quadratically constrained linear program. \cite{rosenbaum1997signed} considers a linear combination of signed-rank statistics to create a coherent statistic, while \cite{rosenbaum2016using} extends this idea to all linear combinations using Scheff\'e projections (\citeyear{scheffe1953method}). Despite the advantages of these approaches in terms of potential gains in efficiency and insensitivity to unmeasured biases, finding the appropriate linear combination requires corrections for multiple testing that can erode power. \cite{cohen2020multivariate} look to diminish the stringency of this correction by restricting the combination vector to lie in the nonnegative orthant. In particular, this framework seems well-suited to our objective --- we worry that our exposure would affect each outcome in a pre-specified direction; that is, that poverty would induce a harmful effect on these intermediate CVD risk factors. However, the empirical performance of strategies that use Scheff\'e projections or constrained variants degrades with the number of outcomes in the global null hypothesis.
These methods utilize reference distributions whose quantiles depend upon the dimension of the linear combination to conduct valid inference. This forces practitioners to compare their deviate to a critical value that scales with the number of outcomes in the global null hypothesis, rendering it exceedingly challenging to reject the overall null hypothesis even when there is strong evidence against it.
Our observational study 
considers more outcomes than prior multivariate sensitivity analyses have handled --- can we still conduct powerful inference under concerns of unmeasured confounding?

We introduce a new sensitivity analysis framework for testing the global null hypothesis, in which we randomly split the data into a smaller pilot sample for planning the study and a larger analysis sample for making inferences. We estimate on the pilot sample an optimizing linear combination from the probability simplex and then treat this combination as pre-specified when conducting inference on the analysis sample using the Gaussian reference distribution. In principle, this approach could yield more powerful sensitivity analysis because the critical value of the Gaussian distribution does not depend on the number of outcomes included in the global null hypothesis. In defiance of traditional statistical wisdom \citep{cox1975note,wasserman2006weighted}, certain strategies that use data splitting have been shown to improve power in finite samples without suffering asymptotically in observational studies \citep{heller2009split,bekerman2026planning}. However, sample splitting does not uniformly improve performance in a sensitivity analysis in finite samples, nor does it uniformly do no harm asymptotically. As described in Section \ref{subsec:twoplayergame}, there exist techniques leveraging sample splitting that perform worse both in finite samples and asymptotically than the analogous strategy applied to the entire data. 
Currently, there is no theoretical guidance characterizing which sample splitting strategies preserve large-sample power, leaving unclear whether our approach sacrifices performance. 
To address this, we prove a novel minimax theorem which, as a corollary, guarantees that our procedure does not lose power when taking linear combinations from the probability simplex. As an additional benefit of sample splitting, exploratory data analysis can be performed on the pilot sample to incorporate data-driven insights into the study design and to refine the analysis plan \citep{rosenbaum2010design, small2024protocols}.

We randomly assign $1/4$ of the matched pairs in our study to the pilot sample and reserve the remaining pairs for formal analysis. We include in Table~\ref{tab:edatbl_summary_outcomes} summary statistics for outcomes of interest divided by FPIR stratum on the pilot sample.
Notably, we identify striking increases in cotinine levels, the primary biomarker used to assess recent tobacco use and secondhand smoke exposure, among children and adolescents growing up in poverty compared to those who do not. On the other hand, we observe either minimal discrepancies or differences that contradict our causal theory of poverty worsening each CVD risk factor, for non-HDL cholesterol levels, estimated glomerular filtration rate (eGFR), and glycated hemoglobin (HbA1c).

\begin{table}[!t]
\centering
\caption{Pilot sample summary statistics for selected outcomes by FPIR stratum, age group, and sex. Representative outcomes are chosen to assess tobacco exposure, lipids, hypertension, body composition, kidney function, diabetes, and physical activity, respectively. Cells marked with ``-'' indicate missing data for that stratum, which occurs because certain outcomes were not measured between ages 8 and 11.}
\label{tab:edatbl_summary_outcomes}
\fontsize{6.5}{7.8}\selectfont
\setlength{\tabcolsep}{4pt}
\renewcommand{\arraystretch}{1}
\setlength{\arrayrulewidth}{0.2pt}
\begin{tabular*}{\linewidth}{@{\extracolsep{\fill}}lcccc|cccc@{}}
\toprule
& \multicolumn{4}{c}{\shortstack{FPIR $<$ 1\\(N=1006)}} & \multicolumn{4}{c}{\shortstack{FPIR $>$ 1\\(N=1006)}} \\
\rule{0pt}{2.0em} & & & & & & & & \\
Outcome/Strata & \shortstack{Lower\\Quartile} & Median & \shortstack{Upper\\Quartile} & Trimean & \shortstack{Lower\\Quartile} & Median & \shortstack{Upper\\Quartile} & Trimean \\
\midrule
\hspace*{0.8em}- \textit{Cotinine (ng/mL)} & & & & & & & & \\
\hspace*{2.2em}8-11 & 0.03 & 0.10 & 0.71 & 0.24 & 0.01 & 0.04 & 0.16 & 0.06 \\
\hspace*{4.4em}Female & 0.02 & 0.07 & 0.58 & 0.18 & 0.01 & 0.04 & 0.21 & 0.08 \\
\hspace*{4.4em}Male & 0.03 & 0.14 & 0.80 & 0.28 & 0.01 & 0.03 & 0.10 & 0.04 \\
\hspace*{2.2em}12-17 & 0.04 & 0.19 & 1.27 & 0.42 & 0.02 & 0.05 & 0.32 & 0.11 \\
\hspace*{4.4em}Female & 0.03 & 0.15 & 1.03 & 0.34 & 0.02 & 0.04 & 0.29 & 0.10 \\
\hspace*{4.4em}Male & 0.04 & 0.26 & 2.12 & 0.67 & 0.02 & 0.05 & 0.33 & 0.11 \\
\rule{0pt}{1.1em} & & & & & & & & \\
\hspace*{0.8em}- \textit{Non-HDL Cholesterol (mmol/L)} & & & & & & & & \\
\hspace*{2.2em}8-11 & 2.27 & 2.71 & 3.24 & 2.74 & 2.37 & 2.75 & 3.23 & 2.77 \\
\hspace*{4.4em}Female & 2.26 & 2.81 & 3.17 & 2.76 & 2.38 & 2.70 & 3.06 & 2.71 \\
\hspace*{4.4em}Male & 2.27 & 2.65 & 3.27 & 2.71 & 2.34 & 2.81 & 3.39 & 2.84 \\
\hspace*{2.2em}12-17 & 2.17 & 2.63 & 3.26 & 2.67 & 2.27 & 2.70 & 3.21 & 2.72 \\
\hspace*{4.4em}Female & 2.22 & 2.65 & 3.23 & 2.69 & 2.23 & 2.69 & 3.20 & 2.70 \\
\hspace*{4.4em}Male & 2.15 & 2.59 & 3.29 & 2.65 & 2.30 & 2.71 & 3.23 & 2.74 \\
\rule{0pt}{1.1em} & & & & & & & & \\
\hspace*{0.8em}- \textit{Systolic BP (mm Hg)} & & & & & & & & \\
\hspace*{2.2em}8-11 & 96.00 & 101.17 & 108.00 & 101.58 & 94.75 & 100.67 & 106.00 & 100.52 \\
\hspace*{4.4em}Female & 94.50 & 101.33 & 107.50 & 101.17 & 94.67 & 101.67 & 106.67 & 101.17 \\
\hspace*{4.4em}Male & 96.00 & 100.67 & 108.00 & 101.33 & 95.08 & 100.67 & 106.00 & 100.60 \\
\hspace*{2.2em}12-17 & 102.00 & 108.67 & 115.33 & 108.67 & 102.67 & 109.33 & 116.00 & 109.33 \\
\hspace*{4.4em}Female & 101.33 & 106.67 & 113.67 & 107.08 & 100.67 & 106.00 & 112.67 & 106.33 \\
\hspace*{4.4em}Male & 103.33 & 111.33 & 116.67 & 110.67 & 104.67 & 112.00 & 120.00 & 112.17 \\
\rule{0pt}{1.1em} & & & & & & & & \\
\hspace*{0.8em}- \textit{Waist-to-Height} & & & & & & & & \\
\hspace*{2.2em}8-11 & 0.43 & 0.47 & 0.54 & 0.48 & 0.43 & 0.47 & 0.53 & 0.47 \\
\hspace*{4.4em}Female & 0.44 & 0.48 & 0.54 & 0.48 & 0.43 & 0.46 & 0.53 & 0.47 \\
\hspace*{4.4em}Male & 0.43 & 0.47 & 0.54 & 0.47 & 0.43 & 0.47 & 0.53 & 0.47 \\
\hspace*{2.2em}12-17 & 0.43 & 0.48 & 0.55 & 0.49 & 0.43 & 0.48 & 0.54 & 0.48 \\
\hspace*{4.4em}Female & 0.45 & 0.50 & 0.56 & 0.50 & 0.44 & 0.48 & 0.55 & 0.49 \\
\hspace*{4.4em}Male & 0.42 & 0.46 & 0.53 & 0.47 & 0.42 & 0.46 & 0.54 & 0.47 \\
\rule{0pt}{1.1em} & & & & & & & & \\
\hspace*{0.8em}- \textit{eGFR (mL/min/1.73m$^2$)} & & & & & & & & \\
\hspace*{2.2em}8-11 & - & - & - & - & - & - & - & - \\
\hspace*{4.4em}Female & - & - & - & - & - & - & - & - \\
\hspace*{4.4em}Male & - & - & - & - & - & - & - & - \\
\hspace*{2.2em}12-17 & 87.79 & 101.57 & 116.47 & 101.85 & 87.19 & 100.00 & 113.98 & 100.29 \\
\hspace*{4.4em}Female & 92.38 & 106.86 & 125.84 & 107.99 & 93.09 & 104.25 & 114.83 & 104.11 \\
\hspace*{4.4em}Male & 84.78 & 97.82 & 112.04 & 98.12 & 84.13 & 96.70 & 112.13 & 97.42 \\
\rule{0pt}{1.1em} & & & & & & & & \\
\hspace*{0.8em}- \textit{HbA1c (\%)} & & & & & & & & \\
\hspace*{2.2em}8-11 & - & - & - & - & - & - & - & - \\
\hspace*{4.4em}Female & - & - & - & - & - & - & - & - \\
\hspace*{4.4em}Male & - & - & - & - & - & - & - & - \\
\hspace*{2.2em}12-17 & 5.00 & 5.20 & 5.40 & 5.20 & 5.00 & 5.20 & 5.40 & 5.20 \\
\hspace*{4.4em}Female & 5.00 & 5.20 & 5.40 & 5.20 & 4.95 & 5.20 & 5.30 & 5.16 \\
\hspace*{4.4em}Male & 5.10 & 5.20 & 5.40 & 5.22 & 5.00 & 5.20 & 5.40 & 5.20 \\
\rule{0pt}{1.1em} & & & & & & & & \\
\hspace*{0.8em}- \textit{Vigorous LTPA (min/wk)} & & & & & & & & \\
\hspace*{2.2em}8-11 & - & - & - & - & - & - & - & - \\
\hspace*{4.4em}Female & - & - & - & - & - & - & - & - \\
\hspace*{4.4em}Male & - & - & - & - & - & - & - & - \\
\hspace*{2.2em}12-17 & 0.00 & 73.50 & 315.00 & 115.50 & 0.00 & 126.00 & 385.00 & 159.25 \\
\hspace*{4.4em}Female & 0.00 & 9.33 & 120.00 & 34.67 & 0.00 & 91.00 & 302.17 & 121.04 \\
\hspace*{4.4em}Male & 28.00 & 180.00 & 450.00 & 209.50 & 42.00 & 180.00 & 478.80 & 220.20 \\
\bottomrule
\end{tabular*}
\end{table}

\section{Hidden Bias in Matched Observational Studies} \label{sec: notation}

\subsection{Treatment effects and treatment assignments}

Suppose that we have an observational study with $I$ matched sets, where the $i$th set contains $n_i \geq 2$ individuals, among which one individual receives the treatment under investigation and $n_i-1$ receive the control. Denoting $Z_{ij}$ as the treatment indicator for individual $j$ in stratum $i$, we have $\sum_{j=1}^{n_i} Z_{ij} = 1$ and $N = \sum_{i=1}^{I} n_i$ total subjects. This framework readily extends to full-matched observational studies \citep{hansen2004full}. 
Individuals are matched on the basis of pretreatment covariates such that $\mathbf{x}_{ij} \approx \mathbf{x}_{ij'}$ for subjects $j \neq j'$ in stratum $i$, yet might fail to account for an unmeasured covariate $u_{ij} \in [0,1].$ In our study, we have $I = 4027$ pairs in the full sample and $n_i = 2$ for all $i \in [I].$%

There are $K$ outcome variables collected for each subject. For each outcome $k$, individual $j$ in matched set $i$ has two potential outcomes, ${r}_{T_{ijk}}$ under treatment and ${r}_{C_{ijk}}$ under control. Let $\mathbf{r}_{T_{ij}}$ and $\mathbf{r}_{C_{ij}}$ denote the $K$-dimensional vectors of potential outcomes for this subject. The observed response vector for this unit is $\mathbf{R}_{ij} = Z_{ij}\mathbf{r}_{Tij} + (1-Z_{ij})\mathbf{r}_{Cij}$ \citep{rubin1974estimating}. 
Since this individual receives treatment or control, never both, the treatment effect vector for this individual $\mathbf{r}_{T_{ij}} - \mathbf{r}_{C_{ij}}$ is never observed from the data \citep{neyman1923application,welch1937z}.

We define $\mathbf{Z} = (Z_{11}, \ldots, Z_{In_{I}})^\top$ as the lexicographically-ordered vector of treatment assignments of length $N$ and let the analogous hold for $\mathbf{u}$, $\mathbf{r}_{T_{k}}$, $\mathbf{r}_{C_{k}}$, and $\mathbf{R}_k$ ($k = 1, \ldots, K$). We take $\mathbf{R}$ as the $N \times K$ matrix of lexicographically-ordered rows containing $\mathbf{R}_{ij}^\top$. Additionally, let $\Omega$ be the set of $\prod_{i=1}^I n_i$ possible values of $\mathbf{Z}$ under the matched design and $\mathcal{Z} = \{ \mathbf{Z} \in \Omega \}$ be the event that the observed treatment assignment adheres to such a design. Our forthcoming developments consider inference conditional upon $\mathcal{Z}$ and the set $\mathcal{F}= \{ \mathbf{r}_{T_{ij}}, \mathbf{r}_{C_{ij}}, \mathbf{x}_{ij}, u_{ij} \mid i = 1, \ldots, I; \text{ } j = 1, \ldots, n_i \}$.
In a randomized experiment satisfying this design, each subject $j$ in matched set $i$ is chosen randomly to receive treatment such that $\P(Z_{ij} = 1 \mid \mathcal{F}, \mathcal{Z}) = 1/n_i$. Treatment allocation is independent across matched sets such that $\P(\mathbf{Z} = \mathbf{z} \mid \mathcal{F}, \mathcal{Z}) = 1/|\Omega|$, where we use $|\cdot|$ to denote the number of elements in a finite set.

\subsection{Randomization inference and sensitivity analysis} \label{sec: randomization-sensitivity}

We are interested in assessing whether poverty induces a harmful effect on CVD risk factors in children and adolescents, so our concern lies with testing multiple one-sided hypotheses. Without loss of generality, let us assume that our causal theory predicts positive values for treatment effects $\tau_{ijk}$ for each outcome variable $k$. We consider null hypotheses of the form
$H_k : r_{T_{ijk}} \leq r_{C_{ijk}}$ for all $ i \in [I]$ and $j \in [n_i].$
For example, the variable $k$ may represent waist-to-height ratio, a measurement that broadly reflects body composition. Then, $H_k$ posits that each individual's waist-to-height ratio under poverty is no larger than it would have been had they grown up above the poverty line.
To test each composite null $H_k$, we conduct inference on Fisher's sharp null of no treatment effect for outcome $k$ using effect-increasing sum statistics \citep[\S 4]{caughey2023randomisation}. Sum statistics take the form $T_k = \mathbf{Z}^\top\mathbf{q}_k$, where $\mathbf{q}_k$ is a pre-specified function of the observed responses $\mathbf{R}_k$ that is fixed under the null hypothesis. Sum statistics are effect-increasing if $(2z_{ij}-1)({r}^*_{ijk}-{r}_{ijk})$ implies $T_k(\mathbf{z},\mathbf{r}^*_k) \geq T_k(\mathbf{z},\mathbf{r}_k)$ for potential outcomes ${r}^*_{ijk} \neq {r}_{ijk}$ for all individuals in the study.

If our observed data were gathered from a randomized experiment and we assume that the sharp null hypothesis of no treatment effect is true for outcome $k$, then
the null distribution of its test statistic is given as its randomization distribution.
In an observational study, a researcher can conduct a sensitivity analysis of her results using the Rosenbaum model (\citeyear{rosenbaum2002covariance}, \S 4), which stipulates that individuals $j \neq j'$ in matched set $i$ may differ in their odds of assignment to treatment due to the presence of an unmeasured confounder by a factor no larger than $\Gamma$:
\begin{equation} \label{sensitivity-model1}
    \frac{1}{\Gamma} \leq \dfrac{ \P(Z_{ij}=1\mid\mathcal F)\, \P(Z_{ij'}=0\mid\mathcal F) }{ \P(Z_{ij'}=1\mid\mathcal F)\,\P(Z_{ij}=0\mid\mathcal F) } \leq \Gamma.
\end{equation}
The parameter $\Gamma$ controls the degree to which matching solely on observed covariates may return treatment assignment probabilities that differ from random allocation. Specifically, a value of $\Gamma = 1$ yields the usual randomized design, while $\Gamma > 1$ allows for a tilt in the randomization distribution by an amount controlled by $\Gamma$. Returning to the matched structure by conditioning on $\mathcal{Z}$, this is equivalent to assuming that:
\begin{equation} \label{sensitivity-model2}
    \P (\mathbf{Z} = \mathbf{z} \mid \mathcal{F}, \mathcal{Z}) = \frac{\exp(\gamma \mathbf{z}^T \mathbf{u})}{\sum_{\mathbf{b} \in \Omega} \exp(\gamma \mathbf{b}^T \mathbf{u})},
\end{equation}
where $\gamma = \log(\Gamma)$ and $\mathbf{u}$ is in the $N$-dimensional unit cube $\mathcal{U} = [0, 1]^N$. %

Following this sensitivity model and assuming that the sharp null hypothesis holds for outcome $k$, we can write the null distribution of the test statistic for $\Gamma > 1$ as:
\begin{equation} \label{null-distribution}
\P\left(T_k(\mathbf{Z},\mathbf{r}_{C_{k}}) \ge v \mid \mathcal{F}, \mathcal{Z} \right) = \sum_{\mathbf{z} \in \Omega} \1\left(T_k(\mathbf{Z}, \mathbf{r}_{C_{k}}) \geq v\right) \frac{ \exp(\gamma \mathbf{z}^T \mathbf{u})}{\sum_{\mathbf{b} \in \Omega} \exp(\gamma \mathbf{b}^T \mathbf{u})},
\end{equation}
where $\1(\cdot)$ denotes the indicator function. For $\Gamma > 1$, Equation \ref{null-distribution} is unknown due to its dependence on the nuisance vector $\mathbf{u}$. We proceed to conduct sensitivity analysis for a particular value of $\Gamma$ by finding the worst-case unmeasured confounder that maximizes the tail probability under the sensitivity model. 
A researcher often calculates a set of worst-case $p$-values under increasing values of $\Gamma$. The change-point value of $\Gamma$ at which the test fails to reject, also known as the sensitivity value \citep{zhao2019sensitivityval}, serves as a measure of the robustness of the results of an analysis to possible violations of the unconfoundedness assumption. Rosenbaum's (\citeyear{rosenbaum2004design}) design sensitivity is the stochastic limit of the sensitivity value as $I\rightarrow \infty$.

\section{A Framework for Multivariate Sensitivity Analysis} \label{sec: methods}

\subsection{Testing the global null hypothesis}

Since we intend to evaluate whether poverty causes a harmful effect on any of the CVD risk factors, we consider a directional global null hypothesis that all $K$ composite null hypotheses are true:
\begin{equation} \label{null-global}
    H_0: \bigcap_{k=1}^K H_k \qquad \text{ versus } \qquad H_1: \bigcup_{k=1}^K H^c_k.
\end{equation}
In our study, a rejection of $H_0$ would imply that at least one CVD risk factor is adversely affected by growing up in poverty.
Strategies such as closed testing and hierarchical testing procedures draw on a valid sensitivity analysis of (\ref{null-global}) to facilitate tests of individual $H_k$ while strongly controlling the family-wise error rate. This could inform us which CVD risk factors are negatively impacted by poverty following a rejection of the overall null.

\subsection{Weighted combinations of test statistics} \label{subsec:weighted-stats}

We first define the conditional treatment assignment probability $\varrho_{ij} = \P(Z_{ij} = 1 \mid \mathcal{F}, \mathcal{Z})$. Moving forward, expectations and variances for each $k$ are taken with respect to the randomization distribution of its test statistic and are implicitly conditional on $\mathcal{F}$ and $\mathcal{Z}$.
Under the global null (\ref{null-global}) that all $H_k$ are true, each $\mathbf{q}_k$ is a fixed quantity and we write the expectation $\boldsymbol{\mu}(\boldsymbol{\varrho})$ and covariance $\boldsymbol{\Sigma}(\boldsymbol{\varrho})$ for the vector of test statistics $\mathbf{T} = (T_1, \ldots, T_K)^\top$ as:
\begin{equation} \label{expectation-covariance}
\boldsymbol{\mu}(\boldsymbol{\varrho})_k 
= 
\sum_{i=1}^I \sum_{j=1}^{n_i} q_{ijk} \varrho_{ij}, 
\qquad
\boldsymbol{\Sigma}(\boldsymbol{\varrho})_{k,\ell} 
=
\sum_{i=1}^I \Biggl(
\sum_{j=1}^{n_i} q_{ijk} q_{ij\ell} \varrho_{ij}
-
\Bigl(\sum_{j=1}^{n_i} q_{ijk} \varrho_{ij}\Bigr)
\Bigl(\sum_{j=1}^{n_i} q_{ij\ell} \varrho_{ij}\Bigr)
\Biggr).
\end{equation}

Suppose that
$
I^{-1}\boldsymbol{\Sigma}(\boldsymbol{\varrho})\to \mathbf{S}(\boldsymbol{\varrho})
$
for some positive semidefinite matrix $\mathbf{S}(\boldsymbol{\varrho})$. The Lindeberg-Feller central limit theorem tells us that, for any fixed $\boldsymbol{\lambda} = (\lambda_1, \ldots, \lambda_K)^\top$ and probabilities $\boldsymbol{\varrho}$, the random variable $I^{-1/2}\,\boldsymbol{\lambda}^\top\bigl(\mathbf T-\boldsymbol{\mu}(\boldsymbol{\varrho})\bigr)$ converges to a Normal distribution with mean zero and variance $\boldsymbol{\lambda}^\top \mathbf{S}(\boldsymbol{\varrho})\boldsymbol{\lambda}$ as $I \rightarrow \infty$ under suitable regularity conditions; see, for instance, \cite[\S 8.10]{rosenbaum2025introduction}. An application of the Cramér--Wold device then implies that $I^{-1/2}\bigl( \mathbf{T}-\boldsymbol{\mu}(\boldsymbol{\varrho})\bigr)$ is asymptotically multivariate Normal with mean zero and covariance $\mathbf{S}(\boldsymbol{\varrho})$. Although we treat the vector of assignment probabilities $\boldsymbol{\varrho}$ as fixed, its actual values are unknown due to their dependence on hidden bias. Additionally, these unknown probabilities are constrained to elements of the polyhedral set $\mathcal{P}_{\Gamma}$. In our study, these probabilities encode possible hidden differences in the odds of growing up in poverty within matched pairs, while the vector $\mathbf T$ summarizes evidence across the CVD risk factors under investigation.

\subsection{Sensitivity analysis with multiple outcomes}

Tests of the global null hypothesis in a sensitivity analysis framework can be characterized as instances of a two-player adversarial game, in which the researcher determines the best choice of test statistics, subject to certain constraints, to combat nature’s presentation of the worst-case pattern of hidden bias; see, for instance, \cite{intriligator2002mathematical} and
\cite{yu2002competition} for more on these types of games.
Consider $\boldsymbol{\lambda}$ as belonging to some set $\Lambda \subseteq \mathbb{R}^K$. Taking $\Lambda = \{\mathbf{e}_k\}$, where $\mathbf{e}_k$ is the $k$th standard basis vector, returns a univariate sensitivity analysis for the $k$th outcome with a greater-than alternative. 
When $T_K$ are rank tests, defining $\Lambda = \{\mathbf{1}_K\}$, the $K$-vector of ones, returns the test of \cite{rosenbaum1997signed}. The method of \cite{fogarty2016sensitivity} with greater-than alternatives follows by choosing $\Lambda = \{\textbf{e}_1, \ldots, \textbf{e}_K\}$,
while those of \cite{rosenbaum2016using} and
\cite{cohen2020multivariate} use $\Lambda =\mathbb{R}^K \setminus\{\mathbf{0}_K\}$ and the nonnegative orthant, respectively.

Our investigation motivates the introduction of our method in the context of \cite{cohen2020multivariate}, as we seek to assess whether poverty has a harmful impact on any outcomes. Let $\mathbf{t} = (t_1, \ldots, t_K)^\top$ be the observed vector of test statistics. \cite{cohen2020multivariate} formulate their sensitivity analysis by considering the following problem:
\begin{equation} \label{cohen-sensanalysis}
\min_{\boldsymbol{\varrho} \in \mathcal{P}_{\Gamma}} 
\sup_{\boldsymbol{\lambda} \in \Lambda^+ \setminus\{\mathbf{0}\}} \; \max
\left \{
0,\;
\frac{\boldsymbol{\lambda}^\top\bigl(\mathbf{t} - \boldsymbol{\mu}(\boldsymbol{\varrho})\bigr) }
{ \sqrt{\boldsymbol{\lambda}^\top \boldsymbol{
\Sigma}(\boldsymbol{\varrho})\boldsymbol{\lambda}} }
\right \}^2,
\end{equation}
where $\Lambda^+ := \{\boldsymbol{\lambda} \mid \lambda_k \geq 0 \text{ } (k = 1,\ldots, K)\}$ denotes the positive orthant and the objective is a monotone non-decreasing transformation of the standardized deviate. As the value of $\Gamma$ increases to the sensitivity value, the outer minimization takes place over a sequence of growing feasible regions. In the sense of the two-player game, this corresponds to nature having more and more flexibility in assigning unfavorable treatment allocation distributions, which a researcher seeks to make up for by adjusting the linear combination of test statistics to try to maximize the objective value. \cite{cohen2020multivariate} prove that when replacing the optimization over $\boldsymbol{\varrho}$ by the value of the test statistic at the true treatment assignment probabilities $\tilde{\boldsymbol{\varrho}}$, the test statistic in (\ref{cohen-sensanalysis}) converges to a chi-bar-squared ($\bar{\chi}^2$) distribution under the sharp null hypothesis as $I \rightarrow \infty$. The $\bar{\chi}^2$ distribution is a common family of distributions that arises in order-restricted statistical inference \citep{silvapulle2011constrained}. To proceed with valid inference of the global null hypothesis in (\ref{null-global}), the practitioner could compare her standardized deviate to the square root of the $(1-\alpha)$--quantile of $\bar{\chi}^2_{\Lambda^+\setminus\{\mathbf{0}\}}(\tilde{\boldsymbol{\varrho}})$, the chi-bar-squared distribution over the $K$--dimensional nonnegative orthant with covariance matrix yielded $\tilde{\boldsymbol{\varrho}}$. 

Importantly, the critical value $c_{\alpha,\Lambda}$ used in multivariate sensitivity analyses depends upon the structure of $\Lambda$. When $\Lambda$ is a singleton, as is the case in \cite{rosenbaum1997signed}, the asymptotic reference distribution is standard Normal. However, if the combination $\boldsymbol{\lambda} \in \Lambda$ is not pre-specified by the researcher, comparing the optimal value of the standardized deviate to the $(1-\alpha)$--quantile of the standard Normal distribution may provide a valid level--$\alpha$ test of the global null hypothesis. In the approach proposed by \cite{cohen2020multivariate}, the authors use $c_{\alpha,\Lambda^+ \setminus\{\mathbf{0}\}} = \sqrt{\bar{\chi}^2_{\Lambda^+\setminus\{\mathbf{0}\}, 1-\alpha}(\tilde{\boldsymbol{\varrho}})}.$ Likewise, \cite{rosenbaum2016using} applies a classical result on quadratic forms of the multivariate Normal distribution to motivate $c_{\alpha,\mathbb{R}^K \setminus\{\mathbf{0}_K\}} = \sqrt{{\chi}^2_{K, 1-\alpha}}$. Notably, the (possibly conservative) critical values used by each of these procedures are increasing functions of the number of outcomes $K$ in the global null hypothesis.

Although choosing a richer set $\Lambda$ allows the researcher to select among combinations that may combat the adversarial treatment assignment more successfully, thus compensating for the diminished power resulting from inflated critical values, for a fixed sample size these benefits diminish as $K$ grows. In Table \ref{tab:quantiles}, we see that the critical values for various reference distributions increase markedly with $K$, forcing these previous strategies to compare the optimized test statistic with progressively more stringent critical values for valid level--$\alpha$ inference. 
\begin{table}[!t]
\centering
\caption{Critical values for relevant distributions at values of $K$. We show ($1-\alpha$)--quantiles of the standard Normal distribution and the square root of the ($1-\alpha$)--quantiles for remaining distributions at $\alpha = 0.05$. Quantiles of the $\bar{\chi}^2$ distribution are taken over the nonnegative orthant. $\rho$ denotes the constant correlation among all $K$ outcomes.} 
\resizebox{\textwidth}{!}{
\begin{tabular}{l c c c c}
& {Standard Normal} & {$\sqrt{\text{Chi-Bar-Squared ($\rho = 0.2$)}}$} & {$\sqrt{\text{Chi-Bar-Squared ($\rho = 0$)}}$} & {$\sqrt{\text{Chi-Squared}}$} \\
\midrule
$K=5$   & $1.645$ & $2.566$ & $2.735$ & $3.327$ \\
$K=15$  & $1.645$ & $3.301$ & $3.957$ & $5.000$ \\
$K=25$  & $1.645$ & $3.663$ & $4.773$ & $6.136$ \\
$K=100$ & $1.645$ & $4.642$ & $8.340$ & $11.151$ \\
\bottomrule
\end{tabular}
}
\label{tab:quantiles}
\end{table}
Our study of the effects of poverty on CVD risk factors considers more outcomes than prior multivariate sensitivity analyses have handled. For example, we initially examine $K = 9$ outcomes among adolescents, more than any prior applications of these methods; hence, the power loss from the stringent corrections used by existing techniques is especially concerning in our context. Moreover, the quantiles of the $\bar{\chi}^2$ distribution must typically be evaluated by time-consuming simulation, even for known correlation structures. The quantiles of the $\bar{\chi}^2$ distribution with $\rho = 0.2$ were approximated using one million Monte-Carlo iterations; $\rho=0$ is a special case that admits closed-form solutions. In practice, $\boldsymbol{\tilde{\varrho}}$ is unknown, so evaluating (\ref{cohen-sensanalysis}) involves approximating an upper bound for the worst-case critical value at each $\Gamma > 1$. %

\section{Sample Splitting for a Data-Driven Design} \label{sec: splitting}

\subsection{Method description} \label{sec: ourmethod}

Randomly splitting the data offers a straightforward approach to constructing a data-informed design that enables powerful sensitivity analysis across a wider range of $K$. We incorporate into our design stage the estimation of an optimizing linear combination on the pilot sample and carry over this estimate to the analysis sample for inference. 
For fixed values of $\alpha$ and $\Gamma$, we define the approach of sample splitting for multivariate sensitivity analysis as follows:
    \begin{enumerate}[label=\textbf{Step \arabic*: }, leftmargin=*,topsep=0pt]
    \item Randomly partition the dataset into planning and analysis samples, denoted as $\mathcal{G}_{\p}$ and $\mathcal{G}_{\a}$ respectively, such that $\mathcal{G}_{\p} \cap \mathcal{G}_{\a} = \emptyset$.
    \item Choose a single linear combination from $\G_{\p}$ without looking at $\G_{\a}$. In particular, solve on $\G_{\p}$ the following optimization problem:
    \begin{equation} \label{p1}
    \sup_{\boldsymbol{\lambda} \in \Delta_+} \min_{\boldsymbol{\varrho} \in \mathcal{P}_{\Gamma_\p}} \max\left \{0,\frac{\boldsymbol{\lambda}^{\top} (\textbf{t}-\boldsymbol{\mu}(\boldsymbol{\varrho}))}{\sqrt{\boldsymbol{\lambda}^{\top} \boldsymbol{\Sigma}(\boldsymbol{\varrho})\boldsymbol{\lambda}}}\right \}^2,
    \end{equation}
    where $
\Delta_+
:=
\left\{
\boldsymbol{\lambda}\in\mathbb R^K :
\lambda_k\ge 0 \text{ for all } k,
\ \mathbf 1^\top \boldsymbol{\lambda} =1
\right\}
$
denotes the probability simplex and $\Gamma_\p$ the sensitivity value on the planning sample computed by searching over $\Gamma$, with optimizer $\widehat{\boldsymbol{\lambda}}_\p$.
    \item Solve the following optimization problem on $\G_{\a}$:
    \begin{equation} \label{p2}
    \min_{\boldsymbol{\varrho} \in \mathcal{P}_{\Gamma}} \max\left \{0,\frac{\widehat{\boldsymbol{\lambda}}_\p^{\top} (\textbf{t}_\a-\boldsymbol{\mu}_\a(\boldsymbol{\varrho}))}{\sqrt{\widehat{\boldsymbol{\lambda}}_\p^{\top} \boldsymbol{\Sigma}_\a(\boldsymbol{\varrho})\widehat{\boldsymbol{\lambda}}_\p}}\right \}^2.
    \end{equation}
    Report the square root of the objective value and compare it to the ($1-\alpha$)--quantile of the standard Normal distribution. Reject $H_0$ if the square root of the objective value exceeds the critical value, otherwise fail to reject.
    \end{enumerate}
    
\begin{proposition}
    \label{validity}
    Sample splitting for multivariate sensitivity analysis strongly controls the family-wise error rate at the nominal level $\alpha$.
\end{proposition}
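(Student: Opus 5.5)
The argument is a conditioning argument on the planning sample, followed by an appeal to the asymptotic validity of a single pre-specified linear combination. The word ``strongly'' should be read with care. The procedure tests only the global null (\ref{null-global}), so we interpret strong control of the family-wise error rate as follows: whenever the global null holds, the probability of rejecting it is at most $\alpha$, asymptotically and under the sensitivity model (\ref{sensitivity-model2}) with parameter $\Gamma$. Strong control for the individual $H_k$ then follows by using this test inside closed testing.

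\textbf{Step 1: condition on the planning sample.} Assign pairs to $\G_{\p}$ and $\G_{\a}$ at random, independently of $\mathbf{Z}$. Treatment assignments are independent across matched sets under (\ref{sensitivity-model2}). Hence, conditional on the split and on $\mathcal{F}$, $\mathcal{Z}$, the assignments $\{Z_{ij}: i\in\G_{\a}\}$ are independent of $\{Z_{ij}: i\in\G_{\p}\}$. They also still follow a model of the form (\ref{sensitivity-model2}), restricted to $\G_{\a}$ with the same $\mathbf{u}$ and $\Gamma$. Both $\widehat{\boldsymbol{\lambda}}_\p$ and $\Gamma_\p$ are functions only of planning-sample data; under the null these data are $(\mathbf{Z}_\p,\mathbf{R}_\p)$ with $\mathbf{R}_\p$ fixed. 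Therefore, conditional on $\G_{\p}$, the weight $\widehat{\boldsymbol{\lambda}}_\p\in\Delta_+$ is a fixed vector. The scores $\mathbf{q}_k$ on $\G_{\a}$ are fixed under the global null, since every $H_k$ is true and sharp nulls are tested via effect-increasing statistics. This is where the composite null is handled: by the effect-increasing property of \cite{caughey2023randomisation}, the rejection probability under any $H_k$-compatible effects is at most that under the sharp null. The same property carries over to nonnegative combinations, because $\widehat{\boldsymbol{\lambda}}_\p\ge 0$.

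\textbf{Step 2: reduce to the single-combination test.} Given a fixed $\boldsymbol{\lambda}$, the statistic (\ref{p2}) is the worst case over $\mathcal{P}_\Gamma$ of a standardized deviate. Its value is therefore at most the deviate evaluated at the true assignment probabilities $\tilde{\boldsymbol{\varrho}}\in\mathcal{P}_\Gamma$. It follows that
\[
\P\Bigl(\text{reject}\mid \G_\p\Bigr)\le \P\Bigl(\frac{\boldsymbol{\lambda}^\top(\mathbf T_\a-\boldsymbol\mu_\a(\tilde{\boldsymbol\varrho}))}{\sqrt{\boldsymbol\lambda^\top\boldsymbol\Sigma_\a(\tilde{\boldsymbol\varrho})\boldsymbol\lambda}}\ge z_{1-\alpha}\,\Bigm|\,\G_\p\Bigr)\Big|_{\boldsymbol\lambda=\widehat{\boldsymbol\lambda}_\p}.
\]
By the Lindeberg--Feller argument in Section \ref{subsec:weighted-stats} applied to the analysis sample, the right-hand side tends to $\alpha$ for each fixed $\boldsymbol\lambda$. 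Integrating over the planning sample then bounds the unconditional rejection probability by $\alpha$ asymptotically.

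\textbf{Main obstacle.} The difficulty is that $\widehat{\boldsymbol\lambda}_\p$ varies with $I$, so pointwise convergence at each fixed $\boldsymbol\lambda$ is not enough. We need the normal approximation to hold uniformly over $\boldsymbol\lambda\in\Delta_+$. We plan to obtain this from the multivariate CLT for $I_\a^{-1/2}(\mathbf T_\a-\boldsymbol\mu_\a)$ (via Cramér--Wold) together with convergence of $I_\a^{-1}\boldsymbol\Sigma_\a\to\mathbf S$. The standardized deviate is a continuous function of the vector statistic and of $\boldsymbol\lambda$ on the compact set $\Delta_+$, provided the variances $\boldsymbol\lambda^\top\mathbf S\boldsymbol\lambda$ are bounded away from zero there. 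Under that nondegeneracy, the continuous mapping theorem and Pólya-type uniform convergence of distribution functions give the required uniformity. Applying dominated convergence to the conditional probabilities then completes the argument.
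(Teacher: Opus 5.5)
Your proposal is correct and follows the same route as the paper. The paper's proof is a one-line appeal to independence between the planning and analysis splits, validity of the fixed-combination test in (\ref{p2}), and the law of iterated expectations. Your extra step, making the Gaussian approximation uniform over $\boldsymbol{\lambda}\in\Delta_+$ via compactness of $\Delta_+$ and nondegeneracy of $\boldsymbol{\lambda}^\top\mathbf{S}\boldsymbol{\lambda}$ (most cleanly done with a subsequence argument along convergent $\boldsymbol{\lambda}_I$ rather than by appeal to P\'olya), fills in a detail the paper leaves implicit even though its Normal calibration is only asymptotic.
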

\textit{Proof.} The result is a consequence of independence between data splits and validity of the test in (\ref{p2}), following immediately by the law of iterated expectations. \hfill $\qed$

Data splitting offers numerous benefits for multivariate sensitivity analysis. By introducing a pilot sample, the researcher can obtain both qualitative and quantitative insights from her data without having to correct for data snooping \citep{tukey1953multiple, tukey1977exploratory}.
In our study, we gain various insights from the planning sample that help to inform our analysis plan. For example, we use the pilot sample to modify our measure of dietary intake that we examine on the analysis sample. 
Since our approach allows us to treat a linear combination estimated on the planning sample as pre-specified when conducting inference, a valid level--$\alpha$ test of the global null proceeds by comparing the square root of the objective value to the ($1-\alpha$)--quantile of the standard Normal distribution, which crucially does not depend on $K$. By circumventing use of the $\bar{\chi}^2$ distribution to conduct  inference, our approach often recovers much of the lost power due to reduced sample size. Moreover, computational burden is reduced and the separable algorithm of \cite{gastwirth2000asymptotic} can be used to identify the worst-case assignment probabilities in $O(I)$ time.

\subsection{Changing the two-player game}\label{subsec:twoplayergame}

A remaining issue is that this split-sample sensitivity analysis formulation differs from that of \cite{cohen2020multivariate}. In particular, the minimax inequality %
indicates that sample splitting procedures can do strictly worse than full-sample alternatives; specifically, the following holds for general $\Lambda$:
\begin{equation} \label{minmax-inequality}
\min_{\boldsymbol{\varrho} \in \mathcal{P}_{\Gamma}} 
\sup_{\boldsymbol{\lambda} \in \Lambda} \;
\max
\left \{
0,\;
\frac{\boldsymbol{\lambda}^\top\bigl(\mathbf{t} - \boldsymbol{\mu}(\boldsymbol{\varrho})\bigr) }
{ \sqrt{\boldsymbol{\lambda}^\top \boldsymbol{
\Sigma}(\boldsymbol{\varrho})\boldsymbol{\lambda}} }
\right \}^2
\;\geq \;
\sup_{\boldsymbol{\lambda} \in \Lambda} 
\min_{\boldsymbol{\varrho} \in \mathcal{P}_{\Gamma}} \;
\max
\left \{
0,\;
\frac{\boldsymbol{\lambda}^\top\bigl(\mathbf{t} - \boldsymbol{\mu}(\boldsymbol{\varrho})\bigr) }
{ \sqrt{\boldsymbol{\lambda}^\top \boldsymbol{
\Sigma}(\boldsymbol{\varrho})\boldsymbol{\lambda}} }
\right \}^2,
\end{equation}
In the sense of the two-player game, the left-hand side of (\ref{minmax-inequality}), similar to \cite{cohen2020multivariate}, sees nature give out adverse treatment assignments to minimize the objective, to which a researcher responds by adjusting her linear combination to increase the value of the standardized deviate. 
It can be regarded that the left-hand side allows the researcher to see the effects of the worst-case treatment allocation offered by nature, so that it can react to any choice made by nature when maximizing the objective.
The right-hand side of the game can be interpreted as a researcher first choosing combinations of test statistics that produce increasingly strong evidence against the null hypothesis, which nature then looks to diminish by adjusting the treatment allocation distributions to reduce the objective value. 
The order reflected on the right-hand side, with the practitioner first and nature second, is realized by our sample splitting approach, where the researcher estimates a linear combination via~\eqref{p1} on the pilot sample and nature responds at this fixed direction via~\eqref{p2} on the analysis sample. 
Intuitively, the right-hand side forces the researcher to first specify a linear combination that she hopes can withstand the opposition brought by nature, prior to nature acting adversarially and assigning unfavorable treatments in reaction to the practitioner's choice. As a result, the setting on the left-hand side will always be at least as favorable to the researcher as the right-hand side.

Indeed, for certain choices of $\Lambda$, solving the optimization problem on the right-hand side yields strictly lower power compared to that on the left-hand side. \cite[\S 3.2]{fogarty2016sensitivity} propose a scenario in which testing the most promising individual hypothesis after conducting multiple sensitivity analyses, corresponding to solving the problem on the right-hand side with $\Lambda$ the set of standard basis vectors, can have strictly worse power than the left-hand side. They note that, for each outcome $k$, there can exist a worst-case confounder that minimizes evidence against $H_k$ such that each $H_k$ fails to reject, yet there may not exist a single unmeasured confounder for which all $H_k$ simultaneously fail to reject. This phenomenon is plausible in our study because a hidden bias that could explain differences in cotinine levels, such as parental smoking, need not be the same hidden bias that could explain differences in waist-to-height ratio; we expand upon this example in the Supplementary Material.
In this situation, strict inequality in (\ref{minmax-inequality}) occurs: the right-hand side, which conducts multiple sensitivity analyses separately for each $k$ then corrects for multiple comparisons, yields an objective value smaller than the critical value, while solving directly for the left-hand side would lead to an objective value greater than the critical value and a rejection of the global null hypothesis. Similarly, \cite{heng2021increasing} provide an example in which two outcomes have test statistics that are perfectly negatively correlated. Although each individual $H_k$ has a design sensitivity of three, a test of the global null yields an infinite design sensitivity.%

Previous works have implicitly acknowledged this gap and formulated their sensitivity analyses  to solve the left-hand side of (\ref{minmax-inequality}). However, our method was developed specifically to circumvent non-Normal reference distributions and to facilitate sensitivity analysis for larger values of $K$. To do so, the researcher is directed to pre-specify a linear combination prior to conducting sensitivity analysis on the analysis sample. Yet forcing the practitioner to act first and allowing nature to respond compels our problem to be formulated as the right-hand side of the inequality. To ensure that our sample splitting approach does not lag behind due to being on the right-hand side of the minimax inequality, and hence does not sacrifice power in our study,
we must strengthen the result in (\ref{minmax-inequality}) to equality when $\Lambda$ is the probability simplex, which requires us to introduce a minimax theorem that is novel in the literature of sensitivity analysis with multiple outcomes. Since our procedure does not solve the right-hand side of (\ref{minmax-inequality}) on a single sample, instead splitting maximization and minimization across planning and analysis samples, respectively, we must also verify that our strategy retains the asymptotic power afforded by the minimax equality.

\subsection{Minimax theorem and design sensitivity} \label{sec: theory}

To accommodate positive semidefinite covariance matrices, we establish the minimax theorem with a new extended-real objective. The minimax theorem and subsequent design sensitivity result together justify using sample splitting to select combinations of CVD risk factors on a pilot sample for formal testing of global null hypotheses in our poverty study.

\begin{theorem} \label{minmax-thm}
Let $\Lambda \subseteq \mathbb R^K$ be a nonempty convex set. For every $\boldsymbol{\lambda} \in \Lambda$ and $\boldsymbol{\varrho} \in \mathcal{P}_{\Gamma}$, define the extended-real payoff
\[
F(\boldsymbol{\lambda},\boldsymbol{\varrho})
:=
\begin{cases}
\max
\left \{
0,\;
\frac{\boldsymbol{\lambda}^\top\bigl(\mathbf{t} - \boldsymbol{\mu}(\boldsymbol{\varrho})\bigr) }
{ \sqrt{\boldsymbol{\lambda}^\top \boldsymbol{
\Sigma}(\boldsymbol{\varrho})\boldsymbol{\lambda}} }
\right \}^2,
& \text{if } \boldsymbol{\lambda}^\top \boldsymbol{\Sigma}(\boldsymbol{\varrho})\boldsymbol{\lambda}>0,\\[2ex]
+\infty,
& \text{if } \boldsymbol{\lambda}^\top \boldsymbol{\Sigma}(\boldsymbol{\varrho})\boldsymbol{\lambda}=0
\text{ and } \boldsymbol{\lambda}^\top(\mathbf{t}-\boldsymbol{\mu}(\boldsymbol{\varrho}))>0,\\[1ex]
0,
& \text{if } \boldsymbol{\lambda}^\top \boldsymbol{\Sigma}(\boldsymbol{\varrho})\boldsymbol{\lambda}=0
\text{ and } \boldsymbol{\lambda}^\top(\mathbf{t}-\boldsymbol{\mu}(\boldsymbol{\varrho}))\le 0.
\end{cases}
\]
Then,
$$\min_{\boldsymbol{\varrho} \in \mathcal{P}_{\Gamma}} 
\sup_{\boldsymbol{\lambda} \in \Lambda} \;
F(\boldsymbol{\lambda},\boldsymbol{\varrho})
\;= \;
\sup_{\boldsymbol{\lambda} \in \Lambda} 
\min_{\boldsymbol{\varrho} \in \mathcal{P}_{\Gamma}} \;
F(\boldsymbol{\lambda},\boldsymbol{\varrho}).$$
\end{theorem}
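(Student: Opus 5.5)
The plan is to obtain the equality from a quasi-convex/quasi-concave minimax theorem of Sion type, with $\mathcal{P}_{\Gamma}$ as the compact side. The inequality "$\geq$" is the elementary minimax inequality (\ref{minmax-inequality}), so only "$\leq$" needs proof. Two points need care: the generalized convexity of $F$ in each argument, and the semicontinuity failures caused by degenerate covariances. I would handle the latter with a regularization argument.

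\emph{Convexity structure.} Fix $\boldsymbol{\varrho}$ and a level $c \geq 0$. Write $\mathbf a = \mathbf t - \boldsymbol{\mu}(\boldsymbol{\varrho})$. For $c>0$, the superlevel set $\{\boldsymbol{\lambda} : F(\boldsymbol{\lambda},\boldsymbol{\varrho}) \geq c^2\}$ equals $\{\boldsymbol{\lambda}\in\Lambda: \boldsymbol{\lambda}^\top \mathbf a \geq c\,\|\boldsymbol{\Sigma}(\boldsymbol{\varrho})^{1/2}\boldsymbol{\lambda}\|,\ \boldsymbol{\lambda}^\top\mathbf a>0\}$. This set is convex because a linear function minus a seminorm is concave. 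The case conventions in the definition of $F$ make this description exact, including on the null space of $\boldsymbol{\Sigma}(\boldsymbol{\varrho})$. Hence $F(\cdot,\boldsymbol{\varrho})$ is quasi-concave on $\Lambda$.

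Now fix $\boldsymbol{\lambda}$. By (\ref{expectation-covariance}), $\boldsymbol{\lambda}^\top\boldsymbol{\mu}(\boldsymbol{\varrho})$ is affine in $\boldsymbol{\varrho}$. Also $\boldsymbol{\lambda}^\top\boldsymbol{\Sigma}(\boldsymbol{\varrho})\boldsymbol{\lambda} = \sum_i \bigl(\sum_j s_{ij}^2\varrho_{ij} - (\sum_j s_{ij}\varrho_{ij})^2\bigr)$, with $s_{ij}=\sum_k\lambda_k q_{ijk}$, is concave in $\boldsymbol{\varrho}$, so its square root is concave as well. Therefore each sublevel set $\{\boldsymbol{\varrho}\in\mathcal{P}_{\Gamma} : \boldsymbol{\lambda}^\top(\mathbf t-\boldsymbol{\mu}(\boldsymbol{\varrho})) \leq c\sqrt{\boldsymbol{\lambda}^\top\boldsymbol{\Sigma}(\boldsymbol{\varrho})\boldsymbol{\lambda}}\}$ is convex, being the set where an affine function lies below a concave one. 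So $F(\boldsymbol{\lambda},\cdot)$ is quasi-convex.

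The same sublevel description shows that these sets are closed, so $F(\boldsymbol{\lambda},\cdot)$ is lower semicontinuous. The $+\infty$ convention is exactly what guarantees this: a positive numerator over a vanishing variance tends to $+\infty$. Finally, $\mathcal{P}_{\Gamma}$ is a compact convex polytope.

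\emph{Semicontinuity in $\boldsymbol{\lambda}$, the main obstacle.} $F(\cdot,\boldsymbol{\varrho})$ need not be upper semicontinuous. For example, take $\boldsymbol{\lambda}_0$ with $\boldsymbol{\Sigma}\boldsymbol{\lambda}_0=\mathbf 0$ and $\boldsymbol{\lambda}_0^\top\mathbf a=0$, so that $F(\boldsymbol{\lambda}_0,\boldsymbol{\varrho})=0$. The nearby points $\boldsymbol{\lambda}_0+\varepsilon\mathbf d$ have ratio $\mathbf d^\top\mathbf a/\|\boldsymbol{\Sigma}^{1/2}\mathbf d\|$, which can be positive. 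So Sion's theorem cannot be applied directly, and I would regularize.

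For $\delta>0$, let $F_\delta$ be the same payoff with $\boldsymbol{\Sigma}(\boldsymbol{\varrho})$ replaced by $\boldsymbol{\Sigma}(\boldsymbol{\varrho})+\delta \mathbf I$. The convexity arguments above go through unchanged, and $F_\delta$ is now finite and continuous on $(\Lambda\setminus\{\mathbf 0\})\times\mathcal{P}_{\Gamma}$. The point $\boldsymbol{\lambda}=\mathbf 0$ can be discarded when computing the supremum, since $F(\mathbf 0,\cdot)\equiv 0\leq F$ elsewhere. More precisely, I would replace $\Lambda$ by $\Lambda\setminus\{\mathbf 0\}$ if that set is convex. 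Otherwise, I would use the degree-zero homogeneity of $F$ in $\boldsymbol{\lambda}$ to pass to a convex set bounded away from $\mathbf 0$ with the same values; this step needs checking case by case but appears routine.

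Sion's theorem then gives $\min_{\boldsymbol{\varrho}}\sup_{\boldsymbol{\lambda}}F_\delta=\sup_{\boldsymbol{\lambda}}\min_{\boldsymbol{\varrho}}F_\delta\leq\sup_{\boldsymbol{\lambda}}\min_{\boldsymbol{\varrho}}F$, where the last step uses $F_\delta\leq F$.

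\emph{Passing $\delta\downarrow 0$.} $F_\delta$ increases pointwise to $F$ as $\delta \downarrow 0$, including the $+\infty$ case. Consequently, $G_\delta(\boldsymbol{\varrho}):=\sup_{\boldsymbol{\lambda}}F_\delta(\boldsymbol{\lambda},\boldsymbol{\varrho})$ is a family of lower semicontinuous functions on the compact set $\mathcal{P}_{\Gamma}$ increasing to $G(\boldsymbol{\varrho}) := \sup_{\boldsymbol{\lambda}}F(\boldsymbol{\lambda},\boldsymbol{\varrho})$. The standard Dini-type argument then applies. For any $c<\min G$, the closed sets $\{G_\delta\leq c\}$ are nested and have empty intersection, so one of them is empty by compactness. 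This gives $\min_{\boldsymbol{\varrho}} G=\lim_{\delta\downarrow0}\min_{\boldsymbol{\varrho}} G_\delta\leq\sup_{\boldsymbol{\lambda}}\min_{\boldsymbol{\varrho}}F$. Together with (\ref{minmax-inequality}), this proves the theorem; the outer minimum is attained because $G$ is lower semicontinuous on a compact set.
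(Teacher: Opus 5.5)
Your route differs from the paper's. The paper never works with the ratio directly. It uses $\max\{0,a\}^2/V=\sup_{s\ge 0}(2sa-s^2V)$ to rewrite $\sup_{\boldsymbol{\lambda}\in\Lambda}F(\boldsymbol{\lambda},\boldsymbol{\varrho})$ as a supremum over the conic hull $K_\Lambda$ of the finite, jointly continuous payoff $2B^\top(\mathbf t-\boldsymbol{\mu}(\boldsymbol{\varrho}))-B^\top\boldsymbol{\Sigma}(\boldsymbol{\varrho})B$. That payoff is genuinely convex in $\boldsymbol{\varrho}$ and concave in $B$, so the ordinary Sion theorem applies on $\mathcal{P}_\Gamma\times K_\Lambda$. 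A second application on $[0,\infty)\times\mathcal{P}_\Gamma$ for each fixed $\boldsymbol{\lambda}$ returns to $F$. The degenerate-variance conventions and the origin are absorbed automatically, with no regularization or limiting step.

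Your level-set computations are correct: quasi-concavity in $\boldsymbol{\lambda}$, and quasi-convexity plus lower semicontinuity in $\boldsymbol{\varrho}$, with the $+\infty$ convention doing exactly what you say. Your Dini/compactness passage $\delta\downarrow 0$ is also correct. So whenever $\Lambda\setminus\{\mathbf 0\}$ is convex, you have a valid alternative proof that also makes the level-set geometry of $F$ explicit. That happens exactly when $\mathbf 0\notin\Lambda$ or $\mathbf 0$ is an extreme point of $\Lambda$, which covers $\Delta_+$ and $\Lambda^+$ and hence Corollary~\ref{corr1}.

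The gap is the remaining case, which the theorem as stated includes: arbitrary nonempty convex $\Lambda$, e.g.\ $\Lambda=\mathbb R^K$ in the Scheff\'e setting. Your $F_\delta$ is homogeneous of degree zero. It therefore fails upper semicontinuity at $\boldsymbol{\lambda}=\mathbf 0$ whenever some $\mathbf d$ with $\varepsilon\mathbf d\in\Lambda$ has $\mathbf d^\top(\mathbf t-\boldsymbol{\mu}(\boldsymbol{\varrho}))>0$, which forces you to delete the origin. If $\mathbf 0$ is not an extreme point, there is $\mathbf d\neq\mathbf 0$ with $\pm\varepsilon\mathbf d\in\Lambda$. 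Any convex set containing $\alpha\mathbf d$ and $-\beta\mathbf d$ with $\alpha,\beta>0$ contains $\mathbf 0$. So no convex set bounded away from the origin reproduces the values of $F$ along both rays, and the fallback you call routine is impossible, not merely unchecked.

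The repair is to give up homogeneity and divide instead by $\boldsymbol{\lambda}^\top\boldsymbol{\Sigma}(\boldsymbol{\varrho})\boldsymbol{\lambda}+\delta\|\boldsymbol{\lambda}\|^2+\delta^2$. Keep your notation $\mathbf a=\mathbf t-\boldsymbol{\mu}(\boldsymbol{\varrho})$ and $V(\boldsymbol{\varrho})=\boldsymbol{\lambda}^\top\boldsymbol{\Sigma}(\boldsymbol{\varrho})\boldsymbol{\lambda}$.
\begin{itemize}
\item This payoff is finite and jointly continuous on all of $\Lambda\times\mathcal{P}_\Gamma$, and it tends to $0$ at the origin.
\item For $c>0$, its superlevel sets in $\boldsymbol{\lambda}$ are $\{\boldsymbol{\lambda}\in\Lambda:\boldsymbol{\lambda}^\top\mathbf a\ge c\,\|((\boldsymbol{\Sigma}+\delta\mathbf I)^{1/2}\boldsymbol{\lambda},\delta)\|\}$, hence convex.
\item Its sublevel sets in $\boldsymbol{\varrho}$ stay convex, because $\sqrt{V(\boldsymbol{\varrho})+\text{const}}$ is concave.
\item It still satisfies $F_\delta\le F$ and increases to $F$ as $\delta\downarrow 0$.
\end{itemize}
With this choice your Sion and Dini steps go through verbatim for every convex $\Lambda$.
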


\textit{Proof.} The result relies on a quadratic reformulation of the objective and use of Sion's Minimax Theorem \citep{sion1958general}. Details are provided in the Supplementary Material. \hfill $\qed$

\begin{corollary} \label{corr1}
Minimax equality holds for the extended-real payoff on $\Lambda^+$ and $\Delta_+$:
\begin{align} \label{minimax-eq-posorth}
\min_{\boldsymbol{\varrho} \in \mathcal{P}_{\Gamma}} 
\sup_{\boldsymbol{\lambda} \in \Lambda^+} \;
F(\boldsymbol{\lambda},\boldsymbol{\varrho})
\;= \;
\sup_{\boldsymbol{\lambda} \in \Lambda^+} 
\min_{\boldsymbol{\varrho} \in \mathcal{P}_{\Gamma}} \;
F(\boldsymbol{\lambda},\boldsymbol{\varrho}). \tag{a}
\end{align}
\vspace{-3em}
\begin{align} \label{minimax-eq-simplex}
\min_{\boldsymbol{\varrho} \in \mathcal{P}_{\Gamma}} 
\sup_{\boldsymbol{\lambda} \in \Delta_+} \;
F(\boldsymbol{\lambda},\boldsymbol{\varrho})
\;= \;
\sup_{\boldsymbol{\lambda} \in \Delta_+} 
\min_{\boldsymbol{\varrho} \in \mathcal{P}_{\Gamma}} \;
F(\boldsymbol{\lambda},\boldsymbol{\varrho}). \tag{b}
\end{align}
\end{corollary}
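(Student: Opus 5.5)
The plan is to obtain both parts directly from Theorem~\ref{minmax-thm}. That theorem requires only that $\Lambda \subseteq \mathbb{R}^K$ be nonempty and convex, with $\mathcal{P}_{\Gamma}$ the polyhedral set of assignment probabilities fixed throughout. So it suffices to check these two hypotheses for $\Lambda = \Lambda^+$ and for $\Lambda = \Delta_+$, and then read off the equalities (a) and (b).

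\textbf{Part (a).} The nonnegative orthant $\Lambda^+ = \{\boldsymbol{\lambda} : \lambda_k \ge 0,\ k=1,\ldots,K\}$ is an intersection of $K$ closed half-spaces, so it is convex. It contains $\mathbf{0}_K$, so it is nonempty.

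One subtlety is worth addressing because (\ref{cohen-sensanalysis}) removes the origin. At $\boldsymbol{\lambda} = \mathbf{0}$ we have $\boldsymbol{\lambda}^\top\boldsymbol{\Sigma}(\boldsymbol{\varrho})\boldsymbol{\lambda} = 0$ and $\boldsymbol{\lambda}^\top(\mathbf{t}-\boldsymbol{\mu}(\boldsymbol{\varrho})) = 0$. By the third case of the definition, $F(\mathbf{0},\boldsymbol{\varrho}) = 0$. Since $F \ge 0$ everywhere, including or excluding the origin changes neither supremum, on either side of the equality. Hence (a) also holds with $\Lambda^+\setminus\{\mathbf{0}\}$, even though that set is not convex. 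This matches the formulation of \cite{cohen2020multivariate}.

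\textbf{Part (b).} The simplex $\Delta_+$ is the intersection of $\Lambda^+$ with the affine hyperplane $\{\mathbf{1}^\top\boldsymbol{\lambda} = 1\}$. It is therefore convex, and it is nonempty since it contains $\mathbf{e}_1$. Theorem~\ref{minmax-thm} then gives (b) directly.

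As a consistency check, (b) is also equivalent to (a) by scale invariance. For any $c>0$, the ratio $\boldsymbol{\lambda}^\top(\mathbf{t}-\boldsymbol{\mu})/\sqrt{\boldsymbol{\lambda}^\top\boldsymbol{\Sigma}\boldsymbol{\lambda}}$ is unchanged when $\boldsymbol{\lambda}$ is replaced by $c\boldsymbol{\lambda}$. The sign conditions in the degenerate cases of $F$ are likewise unchanged. So $F(c\boldsymbol{\lambda},\boldsymbol{\varrho}) = F(\boldsymbol{\lambda},\boldsymbol{\varrho})$, and every nonzero $\boldsymbol{\lambda}\in\Lambda^+$ can be rescaled by $1/\mathbf{1}^\top\boldsymbol{\lambda}$ into $\Delta_+$. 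Together with $F(\mathbf{0},\cdot)=0$, this shows that both the inner suprema and the outer suprema over $\Lambda^+$ and over $\Delta_+$ coincide. I would mention this in a remark rather than rely on it.

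There is no real obstacle here, since all the substantive work, including the extended-real payoff and the application of Sion's theorem, is done in Theorem~\ref{minmax-thm}. The only points needing care are checking that the value at the origin is well defined (it equals $0$) and making the scale invariance explicit for the degenerate cases of $F$.
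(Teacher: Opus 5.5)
Your proposal is correct and matches the paper's proof, which simply applies Theorem~\ref{minmax-thm} after observing that $\Lambda^+$ and $\Delta_+$ are convex. Your extra remarks are also correct: $F(\mathbf{0},\cdot)=0$ lets (a) carry over to $\Lambda^+\setminus\{\mathbf{0}\}$, and the scale invariance $F(c\boldsymbol{\lambda},\boldsymbol{\varrho})=F(\boldsymbol{\lambda},\boldsymbol{\varrho})$ makes (a) and (b) equivalent. The latter mirrors the homogeneity argument the paper uses later in the proof of Theorem~\ref{convergence-thm}.
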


\textit{Proof.} 
The equality of Theorem~\ref{minmax-thm} yields the desired results in (\ref{minimax-eq-posorth}) and (\ref{minimax-eq-simplex}) since the positive orthant and probability simplex are both convex. \hfill $\qed$

Corollary \ref{corr1} reveals that the objective value returned by both formulations of the two-player game  are equivalent without any dependence on the sample size $I$. Theorem \ref{minmax-thm} relies on the linear combinations of test statistics comprising a convex set.
This result has not previously been characterized, yet fully accommodates the procedures of \cite{rosenbaum2016using} and \cite{cohen2020multivariate} while also justifying the counterexamples presented by \cite{fogarty2016sensitivity} and \cite{heng2021increasing}. 
Importantly, this is a fairly weak condition that will hopefully facilitate the further development of new approaches for multivariate sensitivity analysis that rely on sample splitting.

To evaluate the large-sample power of our sample splitting procedure, we rely on Rosenbaum's notion of design sensitivity (\citeyear{rosenbaum2004design}). Suppose that a researcher conducts an observational study under favorable conditions, whereas the treatment being investigated has a true causal effect in the direction of the alternative and there is no unmeasured confounding.
The power of a sensitivity analysis is defined under this favorable regime and is the probability of correctly rejecting the null hypothesis while allowing up to a pre-specified amount of bias $\Gamma$. \cite{rosenbaum2004design} introduces design sensitivity $\tilde{\Gamma}$ in this setting; under mild conditions, $H_0$ is rejected with probability one for $\Gamma < \tilde{\Gamma}$ and with probability zero for $\Gamma > \tilde{\Gamma}$ as $I \rightarrow \infty$. This value $\tilde{\Gamma}$ exists under mild conditions and is a concise indicator of large-sample power, closely related to the Bahadur efficiency of a sensitivity analysis \citep{rosenbaum2015bahadur}.
The minimax theorem allows us to assess the power of our method in this favorable regime and conclude that our procedure does not lose power asymptotically.

\begin{theorem} \label{convergence-thm}
    Under regularity conditions given in Appendix~\ref{appendix:proofs}, the design sensitivity achieved by conducting sensitivity analysis via  \eqref{cohen-sensanalysis} %
    with the whole sample equals the design sensitivity attained by our sample splitting approach given in \eqref{p2}.
\end{theorem}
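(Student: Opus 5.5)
We argue in the favorable situation: a real treatment effect, no unmeasured confounding, and pairs drawn i.i.d.\ from a fixed superpopulation, with the planning and analysis samples each a fixed fraction of the $I$ pairs. Everything is reduced to deterministic population limits. For a fixed $\Gamma$, the worst-case assignment probabilities in a pair depend only on which within-pair arrangement of scores is favorable. So we introduce per-pair limits
\[
\mathbf m(\Gamma)=\lim I^{-1}\,\mathbb E\{\mathbf t-\boldsymbol\mu(\boldsymbol\varrho)\},\qquad
\mathbf S(\boldsymbol\varrho)=\lim I^{-1}\boldsymbol\Sigma(\boldsymbol\varrho),
\]
taken over the separable family of $\boldsymbol\varrho\in\mathcal P_\Gamma$. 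The regularity conditions will guarantee the following. Laws of large numbers give $I^{-1}(\mathbf t-\boldsymbol\mu(\boldsymbol\varrho))\to \mathbf m(\boldsymbol\varrho)$ and $I^{-1}\boldsymbol\Sigma(\boldsymbol\varrho)\to\mathbf S(\boldsymbol\varrho)$ uniformly over $\mathcal P_\Gamma$, which is possible because these maps are affine or quadratic in $\boldsymbol\varrho$ with bounded scores. We also require $\mathbf S(\boldsymbol\varrho)$ to be positive definite, and the critical values, namely $z_{1-\alpha}$ for \eqref{p2} and $\sqrt{\bar\chi^2_{1-\alpha}}$ for \eqref{cohen-sensanalysis}, to be $O(1)$ in $I$.

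Under these conditions, the square root of each objective divided by $\sqrt I$ converges to a deterministic limit. For the full-sample test \eqref{cohen-sensanalysis} the limit is
\[
\kappa(\Gamma)=\min_{\boldsymbol\varrho}\sup_{\boldsymbol\lambda\in\Lambda^+\setminus\{\mathbf 0\}} \frac{\max\{0,\boldsymbol\lambda^\top\mathbf m(\boldsymbol\varrho)\}}{\sqrt{\boldsymbol\lambda^\top\mathbf S(\boldsymbol\varrho)\boldsymbol\lambda}}.
\]
The deviate grows like $\sqrt I\,\kappa(\Gamma)$ and the critical value is bounded, so the full-sample design sensitivity is $\tilde\Gamma_{\rm full}=\sup\{\Gamma:\kappa(\Gamma)>0\}$. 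The objective is scale invariant in $\boldsymbol\lambda$, so $\Lambda^+\setminus\{\mathbf 0\}$ may be replaced by $\Delta_+$. We then apply Corollary~\ref{corr1}(b) to the population payoff, with $\mathbf m,\mathbf S$ in place of $\mathbf t-\boldsymbol\mu,\boldsymbol\Sigma$. The proof of Theorem~\ref{minmax-thm} uses only the affine and quadratic structure and convexity and compactness of $\mathcal P_\Gamma$, so it carries over verbatim. This gives
\[
\kappa(\Gamma)=\sup_{\boldsymbol\lambda\in\Delta_+}\kappa_{\boldsymbol\lambda}(\Gamma),\qquad
\kappa_{\boldsymbol\lambda}(\Gamma):=\min_{\boldsymbol\varrho}\frac{\max\{0,\boldsymbol\lambda^\top\mathbf m(\boldsymbol\varrho)\}}{\sqrt{\boldsymbol\lambda^\top\mathbf S(\boldsymbol\varrho)\boldsymbol\lambda}}.
\]
Hence $\tilde\Gamma_{\rm full}=\sup_{\boldsymbol\lambda\in\Delta_+}\tilde\Gamma_{\boldsymbol\lambda}$, where $\tilde\Gamma_{\boldsymbol\lambda}$ is the design sensitivity of the fixed-direction test. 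Since $\kappa_{\boldsymbol\lambda}(\Gamma)$ is nonincreasing in $\Gamma$ (the sets $\mathcal P_\Gamma$ are nested), $\tilde\Gamma_{\boldsymbol\lambda}=\sup\{\Gamma:\kappa_{\boldsymbol\lambda}(\Gamma)>0\}$.

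Next we handle the split procedure. On the planning sample, \eqref{p1} is run at $\Gamma_\p$, the planning-sample sensitivity value. We show that $\Gamma_\p\to\tilde\Gamma_{\rm full}$ in probability. The argument is the same as in the previous paragraph: by the minimax equality, the planning statistic's $\sqrt I$-normalized limit is again $\kappa(\Gamma)$, and the planning sample has a fixed fraction of the pairs. We then take a fixed $\Gamma<\tilde\Gamma_{\rm full}$ and must show rejection on the analysis sample with probability tending to one. Conditionally on the plan, the analysis deviate normalized by $\sqrt{I_\a}$ converges uniformly in $\boldsymbol\lambda$ to $\kappa_{\widehat{\boldsymbol\lambda}_\p}(\Gamma)$. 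Uniformity over the compact set $\Delta_+$ follows from the uniform laws of large numbers together with $\mathbf S$ being bounded below on $\Delta_+$. So it suffices to show that $\kappa_{\widehat{\boldsymbol\lambda}_\p}(\Gamma)$ is bounded away from zero with high probability. Conversely, for $\Gamma>\tilde\Gamma_{\rm full}$ we have $\kappa_{\boldsymbol\lambda}(\Gamma)=0$ for every $\boldsymbol\lambda$, and the minimizing $\boldsymbol\varrho$ pushes the deviate to $-\infty$ or to $O_p(1)$. Because the paper's validity argument (Proposition~\ref{validity}) gives at most $\alpha$, and the strict-negativity case gives zero, the split test's design sensitivity is at most $\tilde\Gamma_{\rm full}$.

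The main obstacle is this lower bound: showing that the estimated direction $\widehat{\boldsymbol\lambda}_\p$, computed at the random level $\Gamma_\p$ rather than at $\Gamma$, lands where $\kappa_{\boldsymbol\lambda}(\Gamma)>0$. The plan has three steps.
\begin{enumerate}[label=(\roman*)]
\item Show that $\boldsymbol\lambda\mapsto\kappa_{\boldsymbol\lambda}(\Gamma')$ is continuous on $\Delta_+$ and jointly continuous in $\Gamma'$. This uses continuity of $\mathbf m,\mathbf S$ in $(\boldsymbol\varrho,\Gamma)$ and Berge's maximum theorem over the continuously varying compact sets $\mathcal P_{\Gamma'}$.
\item With probability tending to one, $\Gamma_\p$ lies in $(\Gamma,\tilde\Gamma_{\rm full}+\epsilon)$. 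Argmax consistency then places $\widehat{\boldsymbol\lambda}_\p$ near the set of maximizers of $\kappa_{\boldsymbol\lambda}(\Gamma_\p)$.
\item Any $\boldsymbol\lambda$ with $\kappa_{\boldsymbol\lambda}(\Gamma_\p)\ge0$ near the top satisfies $\kappa_{\boldsymbol\lambda}(\Gamma)>0$ by monotonicity in $\Gamma$. The delicate point is that at $\Gamma_\p\approx\tilde\Gamma_{\rm full}$ the supremum is near zero, so "near-maximizer" must be interpreted at level $\tilde\Gamma_{\rm full}-\epsilon$, not at $\Gamma_\p$ itself.
\end{enumerate}
If this last step is too fragile, I would add a regularity condition that the maximizer $\boldsymbol\lambda^\ast$ of $\kappa_{\boldsymbol\lambda}$ is unique and stable as $\Gamma\uparrow\tilde\Gamma_{\rm full}$. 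With that condition, $\widehat{\boldsymbol\lambda}_\p\to\boldsymbol\lambda^\ast$ and $\tilde\Gamma_{\boldsymbol\lambda^\ast}=\tilde\Gamma_{\rm full}$, which gives the equality of design sensitivities.
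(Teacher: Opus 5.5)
Your reduction of the full-sample design sensitivity to $\sup_{\boldsymbol\lambda}\tilde\Gamma_{\boldsymbol\lambda}$ is the right target. The proof breaks at the step you flag as the main obstacle, and steps (ii)--(iii) cannot close it as written. By the definition of $\Gamma_\p$, the $\sqrt{I_\p}$-normalized planning criterion has maximum $z_{1-\alpha}/\sqrt{I_\p}\to0$ at $\Gamma_\p$. Its truncated limit satisfies $\sup_{\boldsymbol\lambda}\kappa_{\boldsymbol\lambda}(\Gamma_\p)=\kappa(\Gamma_\p)\to\kappa(\tilde\Gamma_{\mathrm{full}})=0$, with $\kappa_{\boldsymbol\lambda}\ge0$ everywhere. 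So the set of $o_p(1)$-near-maximizers of $\boldsymbol\lambda\mapsto\kappa_{\boldsymbol\lambda}(\Gamma_\p)$ is asymptotically all of $\Delta_+$, and argmax consistency on this scale says nothing about $\widehat{\boldsymbol\lambda}_\p$. Likewise, the condition $\kappa_{\boldsymbol\lambda}(\Gamma_\p)\ge0$ in (iii) holds for every $\boldsymbol\lambda$, so monotonicity in $\Gamma$ has nothing to transfer. Your fallback (a unique, stable maximizer) adds a hypothesis the paper's argument does not need, since the maximizer set of $\boldsymbol\lambda\mapsto\tilde\Gamma_{\boldsymbol\lambda}$ may be a non-singleton. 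It also supplies no mechanism for $\widehat{\boldsymbol\lambda}_\p\to\boldsymbol\lambda^\ast$, because the degeneracy at $\Gamma_\p$ is unchanged. Your upper bound is flawed too. Proposition~\ref{validity} bounds the rejection rate only when $H_0$ is true, but in the favorable situation $H_0$ is false, and ``at most $\alpha$'' would not give ``tends to zero'' anyway. What you need is strict decrease in $\Gamma$ of the \emph{signed} limit. For $\Gamma>\tilde\Gamma_{\mathrm{full}}$ it is then strictly negative uniformly over the compact $\Delta_+$, so both procedures reject with probability tending to zero.

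The missing idea is to work on the $\Gamma$-scale rather than the deviate scale. Since \eqref{p1} is already a $\sup$--$\min$, each $\boldsymbol\lambda$ defines a univariate sensitivity analysis with planning sensitivity value $\widehat\Gamma_\p(\boldsymbol\lambda)$. Suppose the fixed-direction worst-case deviate is continuous and strictly decreasing in $\Gamma$, as the paper assumes. Then $\Gamma_\p=\max_{\boldsymbol\lambda}\widehat\Gamma_\p(\boldsymbol\lambda)$, and at $\Gamma_\p$ each direction's deviate is at most the critical value, with equality iff $\widehat\Gamma_\p(\boldsymbol\lambda)=\Gamma_\p$. Hence the optimizers of \eqref{p1} at $\Gamma_\p$ are exactly the maximizers of $\boldsymbol\lambda\mapsto\widehat\Gamma_\p(\boldsymbol\lambda)$, which is how the paper's proof treats $\widehat{\boldsymbol\lambda}_\p$. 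The key lemma is uniform root consistency: $\sup_{\boldsymbol\lambda\in\Delta_+}|\widehat\Gamma_s(\boldsymbol\lambda)-\tilde\Gamma_{\boldsymbol\lambda}|\to0$ in probability for the planning, analysis, and whole samples. It follows from uniform convergence of the signed normalized criterion on $\Delta_+\times\mathcal I$ plus strict monotonicity, since the signed population criterion is bounded away from zero off any neighborhood of the curve $\Gamma=\tilde\Gamma_{\boldsymbol\lambda}$. Then $\tilde\Gamma_{\mathrm{full}}-\tilde\Gamma_{\widehat{\boldsymbol\lambda}_\p}\le2\sup_{\boldsymbol\lambda}|\widehat\Gamma_\p(\boldsymbol\lambda)-\tilde\Gamma_{\boldsymbol\lambda}|$ and $|\widehat\Gamma_\a(\widehat{\boldsymbol\lambda}_\p)-\tilde\Gamma_{\widehat{\boldsymbol\lambda}_\p}|\le\sup_{\boldsymbol\lambda}|\widehat\Gamma_\a(\boldsymbol\lambda)-\tilde\Gamma_{\boldsymbol\lambda}|$. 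So the split sensitivity value tends to $\tilde\Gamma_{\mathrm{full}}$ with no uniqueness assumption. (You could instead stay on the deviate scale if you drop the truncation: the signed limit at $\tilde\Gamma_{\mathrm{full}}$ has maximum $0$ over $\Delta_+$, attained exactly on the maximizer set.) Finally, the ``uniform law of large numbers over $\mathcal P_\Gamma$'' behind your population minimax is not well posed as stated. The set $\mathcal P_\Gamma\subset\mathbb R^N$ changes with $I$, and the minimizing $\boldsymbol\varrho$ is data dependent. It is also unnecessary. Applying Theorem~\ref{minmax-thm} at each finite $I$ already rewrites \eqref{cohen-sensanalysis} as a supremum over $\Delta_+$ of fixed-direction criteria. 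Their separable-algorithm worst-case moments are indexed by the finite-dimensional compact set $\Delta_+\times\mathcal I$, so the whole-sample sensitivity value is $\max_{\boldsymbol\lambda}\widehat\Gamma_{w}(\boldsymbol\lambda)$ (with $w$ the whole sample), and the same lemma applies.
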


The following result follows as a corollary to Theorem \ref{convergence-thm}:

\begin{corollary} \label{corr2}
    Suppose the regularity conditions of Theorem \ref{convergence-thm} hold. Then, the distance between any linear combination that maximizes the sensitivity value on the planning sample and the set of population maximizers of the design sensitivity converges to zero in probability.
\end{corollary}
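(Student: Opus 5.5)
The plan is to argue through the population version of the planning objective and a standard argmax-consistency argument. Under the favorable regime, no hidden bias and a true effect, the regularity conditions of Theorem~\ref{convergence-thm} give that, for each fixed $\Gamma$ and uniformly over $\boldsymbol{\lambda}\in\Delta_+$ and $\boldsymbol{\varrho}$ in the (rescaled) set $\mathcal{P}_\Gamma$, the normalized quantities $I_\p^{-1}(\mathbf{t}-\boldsymbol{\mu}(\boldsymbol{\varrho}))$ and $I_\p^{-1}\boldsymbol{\Sigma}(\boldsymbol{\varrho})$ converge in probability to deterministic limits. This holds on the planning sample because it is a random subsample of fixed proportion.

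Define the population margin
\[
m(\boldsymbol{\lambda},\Gamma)=\min_{\boldsymbol{\varrho}}\boldsymbol{\lambda}^\top\bigl(\boldsymbol{\mu}_\infty-\boldsymbol{\mu}_\infty(\boldsymbol{\varrho})\bigr).
\]
The design sensitivity attained by a fixed direction is then
\[
\tilde\Gamma(\boldsymbol{\lambda})=\sup\{\Gamma: m(\boldsymbol{\lambda},\Gamma)>0\}.
\]
The proof of Theorem~\ref{convergence-thm}, combined with Corollary~\ref{corr1}(b), identifies the full-sample design sensitivity as $\tilde\Gamma^\ast=\sup_{\boldsymbol{\lambda}\in\Delta_+}\tilde\Gamma(\boldsymbol{\lambda})$. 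The set of population maximizers is $\Lambda^\ast=\arg\max_{\Delta_+}\tilde\Gamma(\boldsymbol{\lambda})$. Since $\Delta_+$ is compact, this set is nonempty provided $\tilde\Gamma(\cdot)$ is upper semicontinuous. I expect to verify upper semicontinuity from continuity of $m$ in $(\boldsymbol{\lambda},\Gamma)$, which follows because $m$ is a minimum of functions that are linear in $\boldsymbol{\lambda}$ over a compact set that varies continuously with $\Gamma$.

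The key step is to show that $\widehat{\boldsymbol{\lambda}}_\p$, which maximizes the finite-sample sensitivity value on $\G_\p$, eventually lies in any neighborhood of $\Lambda^\ast$. I would take the usual Wald-type route. Fix $\epsilon>0$ and let $K_\epsilon=\{\boldsymbol{\lambda}\in\Delta_+: d(\boldsymbol{\lambda},\Lambda^\ast)\ge\epsilon\}$. Compactness and upper semicontinuity give a gap $\eta>0$ with $\sup_{K_\epsilon}\tilde\Gamma(\boldsymbol{\lambda})\le\tilde\Gamma^\ast-2\eta$. Now choose $\Gamma'=\tilde\Gamma^\ast-\eta$, and pick $\boldsymbol{\lambda}^\ast\in\Lambda^\ast$ with $m(\boldsymbol{\lambda}^\ast,\Gamma')>0$.

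Uniform convergence then implies two things. First, the standardized deviate at $(\boldsymbol{\lambda}^\ast,\Gamma')$ diverges like $\sqrt{I_\p}$, so the planning sensitivity value exceeds $\Gamma'$ with probability tending to one. Second, $\sup_{\boldsymbol{\lambda}\in K_\epsilon} m(\boldsymbol{\lambda},\Gamma')<0$ by compactness, so every direction in $K_\epsilon$ has a deviate tending to $-\infty$, or to zero after the $\max\{0,\cdot\}$, at $\Gamma'$. Its sensitivity value is therefore below $\Gamma'$ with probability tending to one. Together these give $P(\widehat{\boldsymbol{\lambda}}_\p\in K_\epsilon)\to0$.

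The main obstacle is the uniformity. The deviate divides by $\sqrt{\boldsymbol{\lambda}^\top\boldsymbol{\Sigma}\boldsymbol{\lambda}}$, which may degenerate for some $\boldsymbol{\lambda}$, and the sensitivity value is a discontinuous functional of the data. To handle this, I would work with the sign of the unstandardized numerator, scaled by $I_\p^{-1}$, uniformly over the compact set $\Delta_+\times\mathcal{P}_{\Gamma'}$. The denominator only enters through an $O_p(\sqrt{I_\p})$ term, and near-degenerate directions are dealt with by the extended-real convention of Theorem~\ref{minmax-thm}, which assigns value $0$ when the numerator is nonpositive. A further subtlety is that the statement refers to "any" maximizer, including non-unique or approximate ones. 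The argument above covers every element of the maximizing set simultaneously, since it only uses that the sensitivity value of each such element exceeds $\Gamma'$.
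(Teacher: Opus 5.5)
Your argument is correct in outline but takes a different route from the paper. The paper first proves uniform convergence of the standardized criterion $\Psi_s(\lambda,\Gamma)$ jointly over $\Delta_+\times\mathcal{I}$. It upgrades this to uniform root consistency, $\sup_{\lambda\in\Delta_+}|\widehat\Gamma_s(\lambda)-\widetilde\Gamma(\lambda)|\to 0$ in probability. It then applies the argmax inequality $\widetilde\Gamma(\widehat\lambda_s)\ge\Gamma^\star-2\sup_{\lambda}|\widehat\Gamma_s(\lambda)-\widetilde\Gamma(\lambda)|$ together with the separation $\sup_{A^\lambda_\varepsilon}\widetilde\Gamma<\Gamma^\star$.

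You never approximate the sensitivity-value function. Instead you compare directions at a single level $\Gamma'$, using only the sign of the scaled worst-case margin. Because $\mathcal{P}_\Gamma$ is a product over matched sets, that margin is exactly $T_p(\lambda)-\mu_{\Gamma',p}(\lambda)$ in the paper's notation. Uniform-in-$\lambda$ convergence of this quantity is what the regularity conditions supply; convergence ``uniformly over $\varrho$'' is not meaningful, since the dimension of $\varrho$ grows with $I$.

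Your route needs less than the paper's:
\begin{itemize}
\item no lower bound on the variance, only an $O_p(I_p)$ upper bound at the single direction $\lambda^\ast$;
\item no uniqueness or continuity of the sample root;
\item no appeal to (A2)(iii) for monotonicity of rejection in $\Gamma$, when the sensitivity value uses the exact minimum over $\mathcal{P}_\Gamma$ as in \eqref{p2}; there it follows from the nesting of the sets $\mathcal{P}_\Gamma$.
\end{itemize}
The paper's route gives in return the stronger uniform approximation of $\widehat\Gamma_s$. It reuses this in part (ii) to control the analysis-sample sensitivity value at the random direction $\widehat\lambda_p$.

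Two repairs are needed. First, continuity of $m$ alone makes $\tilde\Gamma$ only lower semicontinuous, since its strict superlevel sets are open. If $m(\lambda,\cdot)$ can be flat at zero, $\tilde\Gamma$ can jump upward at nearby directions. Upper semicontinuity of $\tilde\Gamma$ and the strict bound $\sup_{K_\epsilon}m(\cdot,\Gamma')<0$ both require $m(\lambda,\cdot)$ to be strictly decreasing. That is (A1)(iv), which you should invoke explicitly. Second, a tie at exactly $\Gamma'$ is possible. To rule it out, and to cover approximate maximizers, show that $\lambda^\ast$ rejects at $\tilde\Gamma^\ast-\eta/2$ while every $\lambda\in K_\epsilon$ fails to reject at $\tilde\Gamma^\ast-\eta$.
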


Theorem \ref{convergence-thm} guarantees that the design sensitivity of our test equals that of \cite{cohen2020multivariate}, implying that we lose no power asymptotically from our sample splitting procedure.
Corollary \ref{corr2} justifies our use of a combination that maximizes the sensitivity value on the pilot sample to test on the analysis sample.

\section{Simulation results} \label{sec: sims}

We demonstrated that our test has the same limiting power as competitors, so it is now of primary interest to evaluate how our sample splitting method fares when applied to finite samples.
Along with full-sample alternatives, we also implement an oracle that tests on the full sample the combination of outcomes that maximizes the design sensitivity using the standard Normal reference distribution to assess the extent of what is possible. This corresponds to a practitioner (unrealistically) knowing and pre-specifying the linear combination that yields the largest design sensitivity as $I \rightarrow \infty$ before examining any data. We compare the power of the sensitivity analyses across various settings that allow us to assess how the number of outcomes and correlation among outcomes can affect their relative performance.

We consider $K \in \{5, 15, 25\}$ outcomes which are equi-correlated with constant correlation parameter $\rho \in \{0, 0.2\}$. We assume the data $Y_{ik}$ are taken from a pair-matched study and with total number of matched pairs $I ={300}$. These regimes mimic the primary statistical challenge of our NHANES analysis, with a moderate number of CVD risk factors and mild positive dependence, where relying on dimension-dependent critical values would likely incur a substantial loss in power.
For each simulation, $I$ observations of each outcome variable are drawn from a multivariate Normal distribution with mean $\boldsymbol{\tau}$ and covariance depending on $\rho$, where $\boldsymbol{\tau}$ is the unknown vector of treatment effects for each outcome $k$. For each value of $K$, we take $\boldsymbol{\tau}$ such that $80\%$ of the outcomes have a treatment effect of $0.1$ and the remaining $20\%$ have a treatment effect of $0.5$. This corresponds to the favorable setting with true causal effects and no hidden biases. Each outcome variable employs a test statistic $T_k = \sum_{i=1}^I \text{sign}(Y_{ik})\;\min(|Y_{ik}|/s_k, 2.5)$, where $s_k$ is the median of $|Y_{ik}|$ $(i = 1,\ldots, I)$. This amounts to a choice of $m$-statistic with Huber’s $\psi$-function \citep{rosenbaum2007sensitivity}. %

\begin{figure}[t!]
    \centering
    \includegraphics[width = \textwidth]{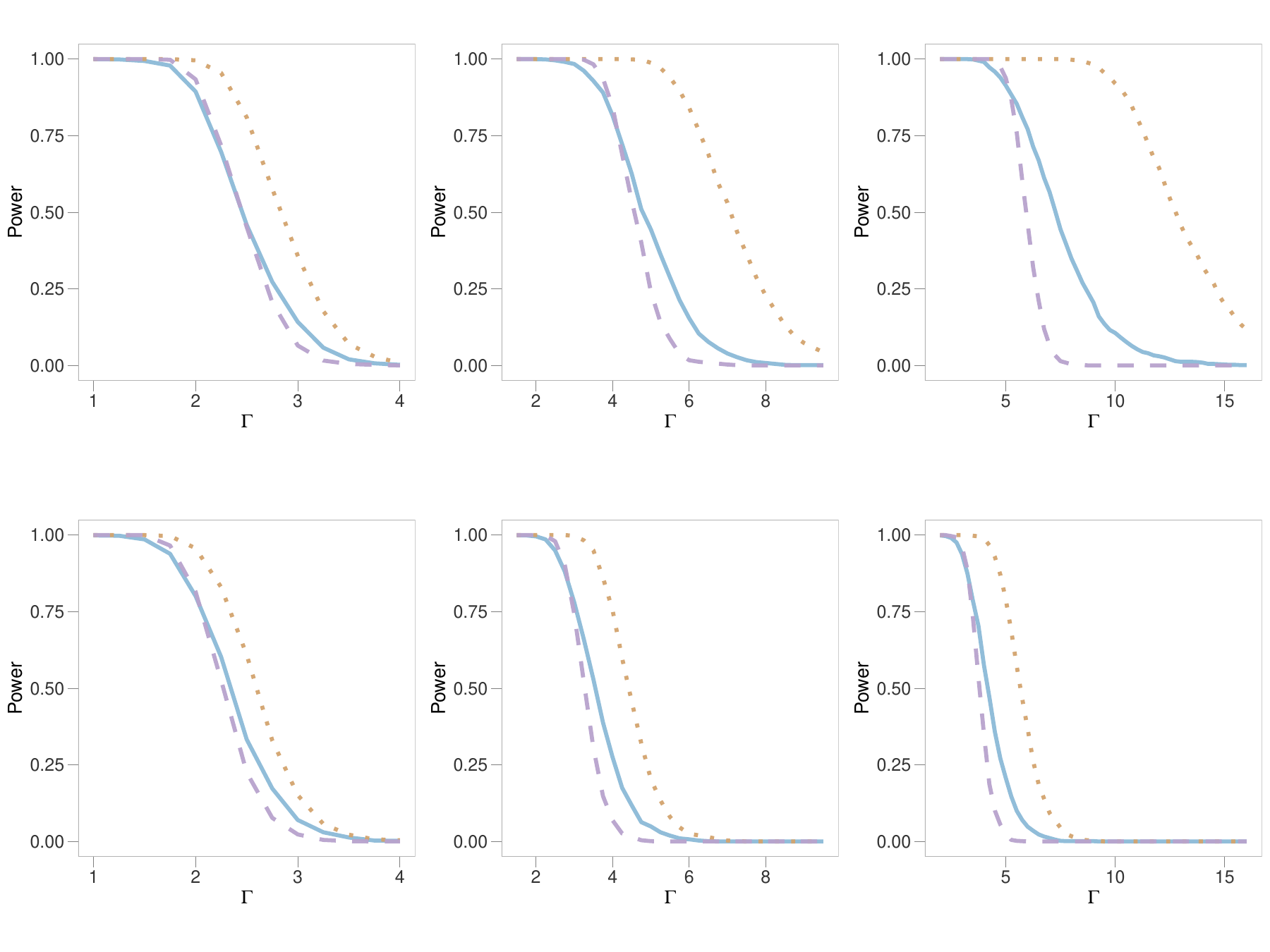}
    \caption{Power comparisons between the method of \cite{cohen2020multivariate} (dashed, purple), the sample splitting approach (solid, blue), and the oracle test (dotted, beige) as $\Gamma$ increases with $I = 300$. The left column has $K = 5$; the center has $K = 15$; and the right has $K = 25$. The first row has $\rho = 0$ and the second row has $\rho = 0.2$.}
    \label{sparse-I300}
\end{figure}

We display the estimated power curves for each test as a function of $\Gamma \geq 1$ with $1000$ simulations for each combination of parameters.
We use \cite[\S 4.1]{rosenbaum2016using} to evaluate the design sensitivity for a fixed test statistic in the pair-matched setting. Our method is implemented by splitting the data into $0.25I$ pairs to estimate the optimal linear combination and $(1-0.25)I$ samples for inference. %
Figure \ref{sparse-I300} illustrates the results, varying the number of outcomes $K$ across the columns and the correlation between paired differences down the rows. We observe across these various regimes that the oracle outperforms our method, while our method typically has higher power than the full-sample implementation of \cite{cohen2020multivariate}. Looking across the columns from left to right, we notice that there appears to be a more pronounced disparity in power among the various approaches as $K$ increases. In particular, we see that our procedure is much better suited to regimes with a higher number of outcomes than \cite{cohen2020multivariate}. However, it is also notable that these benefits start to manifest for smaller values of $K$ as well. Going from the top row of uncorrelated outcomes to the bottom row with mild to moderate positive dependence, we see that the performance gap tightens. While our procedure still outperforms the full-sample comparator, it does so by a smaller margin compared to when the outcomes are independent.
In the Supplementary Material, we compare our findings with $I=1000$ pairs. There, we see that the difference in performance appears to diminish as $I$ grows larger. This is related to Theorem \ref{convergence-thm}, which shows that all three methods will eventually approach the same limiting design sensitivity.

Despite reserving fewer samples for inference due to data splitting, our method typically makes up for the loss in power due to reduced sample size, and often surpasses the power of full-sample approaches in finite samples. This improvement is particularly evident as $K$ grows.
We draw a comparison between our work and recent literature in a more traditional setting of testing multivariate mean testing \citep{huang2015projection,liu2024projection}. Although the focus of these works differs, both use data splitting to find accurate projections in moderate and high-dimensional regimes to improve power. Likewise, \cite{barber2019knockoff} split their data to screen for potentially relevant variables before conducting inference over this reduced set when testing for associations in a high-dimensional linear model. Our work is also conceptually related to that of \cite{lei2018adapt}, who learn data-adaptive contrasts on a partially masked portion of the data to reduce the multiplicity burden and increase the power of their test, while preserving valid inference.

\section{Impacts of Poverty on Cardiovascular Disease Risk Factors} \label{sec: application}

\subsection{Data processing and preliminary exploration} \label{data-exam}

We aggregated pre-pandemic NHANES data between 1999 and 2016 and utilized publicly available demographic, dietary, examination, laboratory, and questionnaire data. In accordance with \cite{ali2011household}, we removed individuals who have been diagnosed with diabetes or are currently pregnant. We gathered variables related to personal demographics, health insurance coverage, and timing of participation in the study to identify and match similar individuals based on their measured covariates.
Before matching, we removed participants who were missing treatment information or any outcomes under investigation. We conducted nearest neighbor pair matching on the rank-based Mahalanobis distance with a propensity score caliper of $0.25$ standard deviations on the probit scale and obtained satisfactory covariate balance. %
We fixed our planning sample proportion to $0.25$ and partitioned the data into pilot and analysis samples.

Before proceeding with formal analysis, we conducted a comprehensive examination of the planning sample and consulted directly with domain specialists on our team. 
Together, we discussed our initial exploratory findings and formulated an appropriate analysis plan that was supported by our examinations of the data and the relevant literature. For example, in line with previous work by \cite{jackson2018income}, we initially sought to assess the effects of poverty on nutrition by constructing a proxy for the Healthy Eating Index (HEI) score, which registers overall adherence to the 2015 issue of Dietary Guidelines for Americans 
\citep{krebs2018update}. The HEI is divided into adequacy components, representing food groups and dietary elements that are encouraged, and moderation components, reflecting those for which there are recommended limits to consumption. Each component is assigned a standard to achieve a maximum score, and the components are weighted equally and summed to obtain the overall HEI score.
Although we observed minimal empirical evidence of a negative effect of poverty on the total HEI score in the pilot sample, we identified notable discrepancies in certain components, such as whole grains and fatty acids. This prompted us to continue to analyze a measure of dietary intake, but to construct a new, robust metric using our procedure to identify and quantify the combination of individual test statistics for each HEI component. Indeed, certain related works, such as \cite{hiza2013diet}, look to clarify the specific relationships between socioeconomic status in the pediatric population and the constituent components of the HEI score. We also originally considered a larger number of outcomes across our general categories of interest. We found on the pilot sample that certain outcomes, such as body-mass-index, diastolic BP, and creatinine, were highly correlated with other measured outcomes like waist-to-height ratio, systolic BP, and eGFR, respectively. This motivated us to maintain  the latter outcomes for investigation while removing redundant ones, since previous literature has suggested that these may be more clinically relevant to CVD prognosis \citep{kannel1971systolic, yoo2016waist}. On the other hand, we observed that certain outcomes, like vigorous and moderate leisure-time physical activity (LTPA), appeared to be differentially affected by poverty, so we continued to analyze these outcomes separately. Previous works have found that vigorous LTPA may have a greater influence on health outcomes compared to moderate activity \citep{owens2017case} and certain analyses consider them separately \citep{falese2021association}, while others pool them together \citep{nader2008moderate}. 
In addition, we discovered heterogeneity by age group and gender for certain outcomes. For instance, we observe that vigorous LTPA is lower among adolescents in poverty, yet this effect is more pronounced among girls compared to boys. 

\subsection{Data analysis}

We formulated age/gender-specific global null hypotheses by testing the intersection null for all individual outcomes recorded by matched subjects in that age/gender stratum. We also assessed age-specific global null hypotheses by pooling all matched subjects within each age stratum, and then considered an overall intersection null hypothesis that assumed no effect of poverty across age strata or their individual outcomes, using Fisher's method. Based on the logical structure of this problem, we devised our analysis as a problem of testing hypotheses on trees 
and used the inheritance procedure proposed by \cite{goeman2012inheritance} to exploit the logical structure encoded in our hypotheses to improve statistical power while controlling the family-wise error rate at level $\alpha=0.05$.

We ran our sensitivity analysis procedure and compared it with the full-sample approach of \cite{cohen2020multivariate} at multiple values of $\Gamma$. In addition to the exploratory insights provided by sample splitting,
we leveraged the pilot sample for numerous practical benefits. Based on the combination $\widehat{\boldsymbol{\lambda}}_\p$ estimated on the pilot sample, we chose not to test an individual hypothesis $j$ if $\widehat{{\lambda}}_{\p, j} = 0$. This reflected that, relative to other outcomes, the planning sample provided little evidence that this hypothesis would be robust to potential unmeasured confounding.
Similarly, for each individual and global null hypothesis, we searched on the planning sample over a class of $m$-statistics proposed by \cite{rosenbaum2007sensitivity} to identify the test statistic that yielded the largest sensitivity value. For each null hypothesis, we then employed this test statistic when conducting inference on the analysis sample. As full-sample sensitivity analysis does not enjoy the benefits of a pilot sample, the comparator used the original HEI proxy variable and $m$-statistic with Huber's $\psi$-function to test each hypothesis. 
For ease of comparison, we allowed the full-sample comparator to consider the same revised set of outcomes and stratified approach as our method, although both were refined during the planning stage of our analysis. We varied $\Gamma$ over the values in \{1.00, 1.25, 1.50, 2.00\} and illustrated our results in Table \ref{tab:rejections}. We include further details and additional results in the Supplementary Material.

\DeclareRobustCommand{\ultraboldcheck}{%
  \ooalign{%
    \hfil\ding{51}\hfil\cr
    \hfil\raisebox{0.3pt}{\ding{51}}\hfil\cr
    \hfil\raisebox{-0.3pt}{\ding{51}}\hfil\cr
  }%
}
\begin{table}[htbp]
\centering
\caption{Application of sample-splitting approach to study effects of poverty on cardiovascular disease risk factors using NHANES data. Cells marked with ``\ultraboldcheck'' denote that the particular global null hypothesis in the row was rejected by the method in the column at the given level of $\Gamma$, while those marked with ``\checkmark'' refer to rejections of individual hypothesis. The shaded column reflects proposed method given by \ref{p2}.}
\label{tab:rejections}
\fontsize{6.5}{7.8}\selectfont
\setlength{\tabcolsep}{5pt}
\renewcommand{\arraystretch}{0.85}

\resizebox{\textwidth-30pt}{!}{%
\begin{tabular}{lcccccccc}
\toprule
\noalign{\global\def\arraystretch{0.05}}
\rule{0pt}{0.01ex} & \rule{0pt}{0.01ex} & \rule{0pt}{0.01ex} & \rule{0pt}{0.01ex} & \rule{0pt}{0.01ex} & \rule{0pt}{0.01ex} & \rule{0pt}{0.01ex} & \rule{0pt}{0.01ex} & \rule{0pt}{0.01ex} \\
\noalign{\global\def\arraystretch{1}}
  & \multicolumn{2}{c}{\tiny$\mathbf{\Gamma = 1.00}$} & \multicolumn{2}{c}{\tiny$\mathbf{\Gamma = 1.25}$} & \multicolumn{2}{c}{\tiny$\mathbf{\Gamma = 1.50}$} & \multicolumn{2}{c}{\tiny$\mathbf{\Gamma = 2.00}$} \\
\noalign{\global\def\arraystretch{0.05}}
\rule{0pt}{0.01ex} & \rule{0pt}{0.01ex} & \rule{0pt}{0.01ex} & \rule{0pt}{0.01ex} & \rule{0pt}{0.01ex} & \rule{0pt}{0.01ex} & \rule{0pt}{0.01ex} & \rule{0pt}{0.01ex} & \rule{0pt}{0.01ex} \\
\noalign{\global\def\arraystretch{1}}
\cmidrule(lr){2-3} \cmidrule(lr){4-5} \cmidrule(lr){6-7} \cmidrule(lr){8-9}
\noalign{\global\def\arraystretch{0.05}}
\rule{0pt}{0.01ex} & \rule{0pt}{0.01ex} & \rule{0pt}{0.01ex} & \rule{0pt}{0.01ex} & \rule{0pt}{0.01ex} & \rule{0pt}{0.01ex} & \rule{0pt}{0.01ex} & \rule{0pt}{0.01ex} & \rule{0pt}{0.01ex} \\
\noalign{\global\def\arraystretch{1}}
\textbf{Method} & \textbf{Full} & \cellcolor{gray!25}\textbf{Split} & \textbf{Full} & \cellcolor{gray!25}\textbf{Split} & \textbf{Full} & \cellcolor{gray!25}\textbf{Split} & \textbf{Full} & \cellcolor{gray!25}\textbf{Split} \\
\noalign{\global\def\arraystretch{0.05}}
\rule{0pt}{0.01ex} & \rule{0pt}{0.01ex} & \rule{0pt}{0.01ex} & \rule{0pt}{0.01ex} & \rule{0pt}{0.01ex} & \rule{0pt}{0.01ex} & \rule{0pt}{0.01ex} & \rule{0pt}{0.01ex} & \rule{0pt}{0.01ex} \\
\noalign{\global\def\arraystretch{1}}
\midrule
\noalign{\global\def\arraystretch{0.05}}
\rule{0pt}{0.01ex} & \rule{0pt}{0.01ex} & \cellcolor{gray!25}\rule{0pt}{0.01ex} & \rule{0pt}{0.01ex} & \cellcolor{gray!25}\rule{0pt}{0.01ex} & \rule{0pt}{0.01ex} & \cellcolor{gray!25}\rule{0pt}{0.01ex} & \rule{0pt}{0.01ex} & \cellcolor{gray!25}\rule{0pt}{0.01ex} \\
\noalign{\global\def\arraystretch{1}}
\textbf{All Outcomes} & {\tiny\ultraboldcheck} & \cellcolor{gray!25}{\tiny\ultraboldcheck} & {\tiny\ultraboldcheck} & \cellcolor{gray!25}{\tiny\ultraboldcheck} & {\tiny\ultraboldcheck} & \cellcolor{gray!25}{\tiny\ultraboldcheck} &  & \cellcolor{gray!25}{\tiny\ultraboldcheck} \\
\noalign{\global\def\arraystretch{0.05}}
\rule{0pt}{0.01ex} & \rule{0pt}{0.01ex} & \cellcolor{gray!25}\rule{0pt}{0.01ex} & \rule{0pt}{0.01ex} & \cellcolor{gray!25}\rule{0pt}{0.01ex} & \rule{0pt}{0.01ex} & \cellcolor{gray!25}\rule{0pt}{0.01ex} & \rule{0pt}{0.01ex} & \cellcolor{gray!25}\rule{0pt}{0.01ex} \\
\noalign{\global\def\arraystretch{1}}
\specialrule{0.1pt}{0pt}{0pt}
\noalign{\global\def\arraystretch{0.05}}
\rule{0pt}{0.01ex} & \rule{0pt}{0.01ex} & \cellcolor{gray!25}\rule{0pt}{0.01ex} & \rule{0pt}{0.01ex} & \cellcolor{gray!25}\rule{0pt}{0.01ex} & \rule{0pt}{0.01ex} & \cellcolor{gray!25}\rule{0pt}{0.01ex} & \rule{0pt}{0.01ex} & \cellcolor{gray!25}\rule{0pt}{0.01ex} \\
\noalign{\global\def\arraystretch{1}}
\textbf{Ages 8-11} & {\tiny\ultraboldcheck} & \cellcolor{gray!25}{\tiny\ultraboldcheck} & {\tiny\ultraboldcheck} & \cellcolor{gray!25}{\tiny\ultraboldcheck} & {\tiny\ultraboldcheck} & \cellcolor{gray!25}{\tiny\ultraboldcheck} &  & \cellcolor{gray!25}{\tiny\ultraboldcheck} \\
\noalign{\global\def\arraystretch{0.05}}
\rule{0pt}{0.01ex} & \rule{0pt}{0.01ex} & \cellcolor{gray!25}\rule{0pt}{0.01ex} & \rule{0pt}{0.01ex} & \cellcolor{gray!25}\rule{0pt}{0.01ex} & \rule{0pt}{0.01ex} & \cellcolor{gray!25}\rule{0pt}{0.01ex} & \rule{0pt}{0.01ex} & \cellcolor{gray!25}\rule{0pt}{0.01ex} \\
\noalign{\global\def\arraystretch{1}}
\specialrule{0.1pt}{0pt}{0pt}
\noalign{\global\def\arraystretch{0.05}}
\rule{0pt}{0.01ex} & \rule{0pt}{0.01ex} & \cellcolor{gray!25}\rule{0pt}{0.01ex} & \rule{0pt}{0.01ex} & \cellcolor{gray!25}\rule{0pt}{0.01ex} & \rule{0pt}{0.01ex} & \cellcolor{gray!25}\rule{0pt}{0.01ex} & \rule{0pt}{0.01ex} & \cellcolor{gray!25}\rule{0pt}{0.01ex} \\
\noalign{\global\def\arraystretch{1}}
\textbf{Ages 12-17} & {\tiny\ultraboldcheck} & \cellcolor{gray!25}{\tiny\ultraboldcheck} &  & \cellcolor{gray!25}{\tiny\ultraboldcheck} &  & \cellcolor{gray!25}{\tiny\ultraboldcheck} &  & \cellcolor{gray!25} \\
\noalign{\global\def\arraystretch{0.05}}
\rule{0pt}{0.01ex} & \rule{0pt}{0.01ex} & \cellcolor{gray!25}\rule{0pt}{0.01ex} & \rule{0pt}{0.01ex} & \cellcolor{gray!25}\rule{0pt}{0.01ex} & \rule{0pt}{0.01ex} & \cellcolor{gray!25}\rule{0pt}{0.01ex} & \rule{0pt}{0.01ex} & \cellcolor{gray!25}\rule{0pt}{0.01ex} \\
\noalign{\global\def\arraystretch{1}}
\specialrule{0.1pt}{0pt}{0pt}
\noalign{\global\def\arraystretch{0.05}}
\rule{0pt}{0.01ex} & \rule{0pt}{0.01ex} & \cellcolor{gray!25}\rule{0pt}{0.01ex} & \rule{0pt}{0.01ex} & \cellcolor{gray!25}\rule{0pt}{0.01ex} & \rule{0pt}{0.01ex} & \cellcolor{gray!25}\rule{0pt}{0.01ex} & \rule{0pt}{0.01ex} & \cellcolor{gray!25}\rule{0pt}{0.01ex} \\
\noalign{\global\def\arraystretch{1}}
\textbf{Ages 8-11, Female} &  & \cellcolor{gray!25}{\tiny\ultraboldcheck} &  & \cellcolor{gray!25}{\tiny\ultraboldcheck} &  & \cellcolor{gray!25}{\tiny\ultraboldcheck} &  & \cellcolor{gray!25}{\tiny\ultraboldcheck} \\ \rule{0pt}{0pt} & \rule{0pt}{0pt} & \cellcolor{gray!25}\rule{0pt}{0pt} & \rule{0pt}{0pt} & \cellcolor{gray!25}\rule{0pt}{0pt} & \rule{0pt}{0pt} & \cellcolor{gray!25}\rule{0pt}{0pt} & \rule{0pt}{0pt} & \cellcolor{gray!25}\rule{0pt}{0pt} \vspace{-0.805em} \\
\hspace{1.2em}Cotinine &  & \cellcolor{gray!25}{\tiny$\checkmark$} &  & \cellcolor{gray!25}{\tiny$\checkmark$} &  & \cellcolor{gray!25}{\tiny$\checkmark$} &  & \cellcolor{gray!25}{\tiny$\checkmark$} \\
\hspace{1.2em}Waist-to-Height &  & \cellcolor{gray!25}{\tiny$\checkmark$} &  & \cellcolor{gray!25} &  & \cellcolor{gray!25} &  & \cellcolor{gray!25} \\
\hspace{1.2em}Diet &  & \cellcolor{gray!25} &  & \cellcolor{gray!25} &  & \cellcolor{gray!25} &  & \cellcolor{gray!25} \\
\hspace{1.2em}Non-HDL Cholesterol &  & \cellcolor{gray!25} &  & \cellcolor{gray!25} &  & \cellcolor{gray!25} &  & \cellcolor{gray!25} \\
\hspace{1.2em}Systolic BP &  & \cellcolor{gray!25} &  & \cellcolor{gray!25} &  & \cellcolor{gray!25} &  & \cellcolor{gray!25} \\
\noalign{\global\def\arraystretch{0.05}}
\rule{0pt}{0.01ex} & \rule{0pt}{0.01ex} & \cellcolor{gray!25}\rule{0pt}{0.01ex} & \rule{0pt}{0.01ex} & \cellcolor{gray!25}\rule{0pt}{0.01ex} & \rule{0pt}{0.01ex} & \cellcolor{gray!25}\rule{0pt}{0.01ex} & \rule{0pt}{0.01ex} & \cellcolor{gray!25}\rule{0pt}{0.01ex} \\
\noalign{\global\def\arraystretch{1}}
\specialrule{0.1pt}{0pt}{0pt}
\noalign{\global\def\arraystretch{0.05}}
\rule{0pt}{0.01ex} & \rule{0pt}{0.01ex} & \cellcolor{gray!25}\rule{0pt}{0.01ex} & \rule{0pt}{0.01ex} & \cellcolor{gray!25}\rule{0pt}{0.01ex} & \rule{0pt}{0.01ex} & \cellcolor{gray!25}\rule{0pt}{0.01ex} & \rule{0pt}{0.01ex} & \cellcolor{gray!25}\rule{0pt}{0.01ex} \\
\noalign{\global\def\arraystretch{1}}
\textbf{Ages 8-11, Male} & {\tiny\ultraboldcheck} & \cellcolor{gray!25}{\tiny\ultraboldcheck} &  & \cellcolor{gray!25}{\tiny\ultraboldcheck} &  & \cellcolor{gray!25}{\tiny\ultraboldcheck} &  & \cellcolor{gray!25}{\tiny\ultraboldcheck} \\ \rule{0pt}{0pt} & \rule{0pt}{0pt} & \cellcolor{gray!25}\rule{0pt}{0pt} & \rule{0pt}{0pt} & \cellcolor{gray!25}\rule{0pt}{0pt} & \rule{0pt}{0pt} & \cellcolor{gray!25}\rule{0pt}{0pt} & \rule{0pt}{0pt} & \cellcolor{gray!25}\rule{0pt}{0pt} \vspace{-0.805em} \\
\hspace{1.2em}Cotinine & {\tiny$\checkmark$} & \cellcolor{gray!25}{\tiny$\checkmark$} &  & \cellcolor{gray!25}{\tiny$\checkmark$} &  & \cellcolor{gray!25}{\tiny$\checkmark$} &  & \cellcolor{gray!25}{\tiny$\checkmark$} \\
\hspace{1.2em}Waist-to-Height &  & \cellcolor{gray!25} &  & \cellcolor{gray!25} &  & \cellcolor{gray!25} &  & \cellcolor{gray!25} \\
\hspace{1.2em}Diet &  & \cellcolor{gray!25} &  & \cellcolor{gray!25} &  & \cellcolor{gray!25} &  & \cellcolor{gray!25} \\
\hspace{1.2em}Non-HDL Cholesterol &  & \cellcolor{gray!25} &  & \cellcolor{gray!25} &  & \cellcolor{gray!25} &  & \cellcolor{gray!25} \\
\hspace{1.2em}Systolic BP &  & \cellcolor{gray!25} &  & \cellcolor{gray!25} &  & \cellcolor{gray!25} &  & \cellcolor{gray!25} \\
\noalign{\global\def\arraystretch{0.05}}
\rule{0pt}{0.01ex} & \rule{0pt}{0.01ex} & \cellcolor{gray!25}\rule{0pt}{0.01ex} & \rule{0pt}{0.01ex} & \cellcolor{gray!25}\rule{0pt}{0.01ex} & \rule{0pt}{0.01ex} & \cellcolor{gray!25}\rule{0pt}{0.01ex} & \rule{0pt}{0.01ex} & \cellcolor{gray!25}\rule{0pt}{0.01ex} \\
\noalign{\global\def\arraystretch{1}}
\specialrule{0.1pt}{0pt}{0pt}
\noalign{\global\def\arraystretch{0.05}}
\rule{0pt}{0.01ex} & \rule{0pt}{0.01ex} & \cellcolor{gray!25}\rule{0pt}{0.01ex} & \rule{0pt}{0.01ex} & \cellcolor{gray!25}\rule{0pt}{0.01ex} & \rule{0pt}{0.01ex} & \cellcolor{gray!25}\rule{0pt}{0.01ex} & \rule{0pt}{0.01ex} & \cellcolor{gray!25}\rule{0pt}{0.01ex} \\
\noalign{\global\def\arraystretch{1}}
\textbf{Ages 12-17, Female} &  & \cellcolor{gray!25}{\tiny\ultraboldcheck} &  & \cellcolor{gray!25}{\tiny\ultraboldcheck} &  & \cellcolor{gray!25}{\tiny\ultraboldcheck} &  & \cellcolor{gray!25} \\ \rule{0pt}{0pt} & \rule{0pt}{0pt} & \cellcolor{gray!25}\rule{0pt}{0pt} & \rule{0pt}{0pt} & \cellcolor{gray!25}\rule{0pt}{0pt} & \rule{0pt}{0pt} & \cellcolor{gray!25}\rule{0pt}{0pt} & \rule{0pt}{0pt} & \cellcolor{gray!25}\rule{0pt}{0pt} \vspace{-0.805em} \\
\hspace{1.2em}Cotinine &  & \cellcolor{gray!25}{\tiny$\checkmark$} &  & \cellcolor{gray!25}{\tiny$\checkmark$} &  & \cellcolor{gray!25}{\tiny$\checkmark$} &  & \cellcolor{gray!25} \\
\hspace{1.2em}Vigorous LTPA &  & \cellcolor{gray!25}{\tiny$\checkmark$} &  & \cellcolor{gray!25} &  & \cellcolor{gray!25} &  & \cellcolor{gray!25} \\
\hspace{1.2em}Waist-to-Height &  & \cellcolor{gray!25}{\tiny$\checkmark$} &  & \cellcolor{gray!25} &  & \cellcolor{gray!25} &  & \cellcolor{gray!25} \\
\hspace{1.2em}eGFR &  & \cellcolor{gray!25} &  & \cellcolor{gray!25} &  & \cellcolor{gray!25} &  & \cellcolor{gray!25} \\
\hspace{1.2em}Diet &  & \cellcolor{gray!25} &  & \cellcolor{gray!25} &  & \cellcolor{gray!25} &  & \cellcolor{gray!25} \\
\hspace{1.2em}HbA1c &  & \cellcolor{gray!25} &  & \cellcolor{gray!25} &  & \cellcolor{gray!25} &  & \cellcolor{gray!25} \\
\hspace{1.2em}Moderate LTPA &  & \cellcolor{gray!25} &  & \cellcolor{gray!25} &  & \cellcolor{gray!25} &  & \cellcolor{gray!25} \\
\hspace{1.2em}Non-HDL Cholesterol &  & \cellcolor{gray!25} &  & \cellcolor{gray!25} &  & \cellcolor{gray!25} &  & \cellcolor{gray!25} \\
\hspace{1.2em}Systolic BP &  & \cellcolor{gray!25} &  & \cellcolor{gray!25} &  & \cellcolor{gray!25} &  & \cellcolor{gray!25} \\
\noalign{\global\def\arraystretch{0.05}}
\rule{0pt}{0.01ex} & \rule{0pt}{0.01ex} & \cellcolor{gray!25}\rule{0pt}{0.01ex} & \rule{0pt}{0.01ex} & \cellcolor{gray!25}\rule{0pt}{0.01ex} & \rule{0pt}{0.01ex} & \cellcolor{gray!25}\rule{0pt}{0.01ex} & \rule{0pt}{0.01ex} & \cellcolor{gray!25}\rule{0pt}{0.01ex} \\
\noalign{\global\def\arraystretch{1}}
\specialrule{0.1pt}{0pt}{0pt}
\noalign{\global\def\arraystretch{0.05}}
\rule{0pt}{0.01ex} & \rule{0pt}{0.01ex} & \cellcolor{gray!25}\rule{0pt}{0.01ex} & \rule{0pt}{0.01ex} & \cellcolor{gray!25}\rule{0pt}{0.01ex} & \rule{0pt}{0.01ex} & \cellcolor{gray!25}\rule{0pt}{0.01ex} & \rule{0pt}{0.01ex} & \cellcolor{gray!25}\rule{0pt}{0.01ex} \\
\noalign{\global\def\arraystretch{1}}
\textbf{Ages 12-17, Male} &  & \cellcolor{gray!25}{\tiny\ultraboldcheck} &  & \cellcolor{gray!25}{\tiny\ultraboldcheck} &  & \cellcolor{gray!25}{\tiny\ultraboldcheck} &  & \cellcolor{gray!25} \\ \rule{0pt}{0pt} & \rule{0pt}{0pt} & \cellcolor{gray!25}\rule{0pt}{0pt} & \rule{0pt}{0pt} & \cellcolor{gray!25}\rule{0pt}{0pt} & \rule{0pt}{0pt} & \cellcolor{gray!25}\rule{0pt}{0pt} & \rule{0pt}{0pt} & \cellcolor{gray!25}\rule{0pt}{0pt} \vspace{-0.805em} \\
\hspace{1.2em}Cotinine &  & \cellcolor{gray!25}{\tiny$\checkmark$} &  & \cellcolor{gray!25}{\tiny$\checkmark$} &  & \cellcolor{gray!25} &  & \cellcolor{gray!25} \\
\hspace{1.2em}Vigorous LTPA &  & \cellcolor{gray!25}{\tiny$\checkmark$} &  & \cellcolor{gray!25} &  & \cellcolor{gray!25} &  & \cellcolor{gray!25} \\
\hspace{1.2em}Waist-to-Height &  & \cellcolor{gray!25} &  & \cellcolor{gray!25} &  & \cellcolor{gray!25} &  & \cellcolor{gray!25} \\
\hspace{1.2em}eGFR &  & \cellcolor{gray!25} &  & \cellcolor{gray!25} &  & \cellcolor{gray!25} &  & \cellcolor{gray!25} \\
\hspace{1.2em}Diet &  & \cellcolor{gray!25} &  & \cellcolor{gray!25} &  & \cellcolor{gray!25} &  & \cellcolor{gray!25} \\
\hspace{1.2em}HbA1c &  & \cellcolor{gray!25} &  & \cellcolor{gray!25} &  & \cellcolor{gray!25} &  & \cellcolor{gray!25} \\
\hspace{1.2em}Moderate LTPA &  & \cellcolor{gray!25} &  & \cellcolor{gray!25} &  & \cellcolor{gray!25} &  & \cellcolor{gray!25} \\
\hspace{1.2em}Non-HDL Cholesterol &  & \cellcolor{gray!25} &  & \cellcolor{gray!25} &  & \cellcolor{gray!25} &  & \cellcolor{gray!25} \\
\hspace{1.2em}Systolic BP &  & \cellcolor{gray!25} &  & \cellcolor{gray!25} &  & \cellcolor{gray!25} &  & \cellcolor{gray!25} \\
\noalign{\global\def\arraystretch{0.05}}
\rule{0pt}{0.01ex} & \rule{0pt}{0.01ex} & \cellcolor{gray!25}\rule{0pt}{0.01ex} & \rule{0pt}{0.01ex} & \cellcolor{gray!25}\rule{0pt}{0.01ex} & \rule{0pt}{0.01ex} & \cellcolor{gray!25}\rule{0pt}{0.01ex} & \rule{0pt}{0.01ex} & \cellcolor{gray!25}\rule{0pt}{0.01ex} \\
\noalign{\global\def\arraystretch{1}}
\bottomrule
\end{tabular}%
}

\end{table}

As expected, both the split-sample and full-sample procedures rejected the largest number of individual and global null hypotheses at $\Gamma=1$. %
Each method identified significant overall differences in CVD risk profiles between children and adolescents below and above the poverty line, as reflected by the rejections of the global null hypothesis for each age stratum. However, the full-sample comparator could only reject the intersection null for boys between ages 8 and 11, while our method rejected all global nulls for each age/gender stratum. Additionally, the full-sample approach could only reject one individual null hypothesis, finding a significant difference in cotinine levels between young boys below and above the poverty line. On the other hand, our approach found evidence of this worsening tobacco exposure in every age/gender stratum.
This finding is consistent with structural accounts linking poverty to tobacco use and
secondhand smoke exposure through reduced access to health care and cessation
resources \citep{marbin2021health}.
Our method also discovered significant decreases in vigorous LTPA for both adolescent boys and girls in poverty, which warrants
concern given the well-established links between childhood physical activity,
fitness, obesity, and later-life health \citep{stodden2008developmental,owens2017case}. 
We also found a higher waist-to-height ratio in the exposed group for both younger and older girls.
This finding shows how growing up in poverty can exacerbate health disparities early in life and perpetuate a cycle of poor health in girls, with potential later-life health consequences 
\citep{kelsey2014age}. Although our new diet composite outcome was not statistically significant in any stratum after correcting for multiple comparisons, most of the individual $p$-values for this new composite were markedly lower than the standard HEI variable.%

One of our novel contributions
is acknowledging that these data are inherently observational in nature and explicitly addressing the robustness of these findings to potential unobserved confounding. Interestingly, we find that many of these results are not very resilient to potential biases. Allowing for a relatively small amount of potential unmeasured confounding at $\Gamma = 1.05$, we can no longer conclude a significant decrease in vigorous LTPA among older boys. When we set $\Gamma=1.15$, the only results we can establish among the individual hypotheses are increased cotinine levels among poorer children and adolescents across each age/gender stratum. As $\Gamma$ increases, the performance of the full-sample approach similarly degrades. At $\Gamma=1.15$, the comparator cannot reject any individual hypothesis or age/gender-specific global null hypothesis. On the other hand, our strategy identifies a harmful causal effect of poverty on CVD risk among 8 to 11-year-olds and increased tobacco exposure for both boys and girls through $\Gamma = 2.25$.  In fact, the sensitivity value for the root node, corresponding to the overall hypothesis of no effect of poverty on any outcome for any age or gender, using our procedure is approximately $3.05$. Although there is indeed some robustness of our conclusions to potential unmeasured biases, it appears to be largely driven by the outsized effect of growing up in poverty on increased tobacco exposure. If we had conducted a full-sample sensitivity analysis, the sensitivity value for the root node would only be around $1.67$. The robustness of elevated cotinine levels to potential unmeasured confounding highlights the importance of programs that aim to reduce tobacco exposure among lower-income children.%

\section{Reproducibility}
NHANES data is publicly available (\url{https://wwwn.cdc.gov/nchs/nhanes/}). An \texttt{R} package will be made available to implement our method and reproduce results.

\renewcommand{\refname}{References}
\putbib

\end{bibunit}

\orignewpage
\appendix

\begin{bibunit}

\spacingset{1.8}

\section{Additional Remarks} \label{sec: discussion}

\subsection{Data application}
We applied our method to investigate the effects of poverty on the emergence of CVD risk factors in children and adolescents. We discovered various age/gender specific effects of poverty on outcomes related to body composition, physical activity, and tobacco exposure. Our split-sample sensitivity analysis demonstrated substantially higher power compared to full-sample alternatives. However, many of these results, outside tobacco exposure, are not robust to potential unmeasured confounding, a novel finding in this line of research. We emphasize that our study design, shaped by the use of a pilot sample, yielded a markedly higher sensitivity value for the overall null hypothesis compared to the full-sample comparator. That is, by splitting our data to estimate an optimizing combination of test statistics and to conduct exploratory analysis, we greatly increased the robustness of our observational study to potential unmeasured confounding. Furthermore, our study 
provided causal interpretations on the effects of poverty in pediatric populations, which directly supports anti-poverty policy as contributing to CVD prevention. While many related works, such as \cite{ali2011household} and \cite{jackson2018income}, employ cutoffs to create binary indicators of CVD risk factors, we analyzed each variable over its full domain to quantify the underlying effect in its entirety, rather than reporting the proportion of children who cross an arbitrary clinical threshold. Finally, our analysis provided a framework for global null testing that, to the best of our knowledge, is unique in this line of research. Although individual findings can be clinically relevant depending on the focus of the study, many analyses wish to assess evidence of whether or not there is a broader effect of socioeconomic disadvantage on worsening CVD risk factors. Our new formulation of this problem may provide more clarity and detection power compared to previous works.

\subsection{A contextual counterexample}

\cite{fogarty2016sensitivity} propose a scenario in which testing the most promising individual hypothesis after conducting multiple sensitivity analyses, corresponding to solving the problem on the right-hand side, can have strictly worse power than the left-hand side. They remark that, for each outcome $k$, there can exist a worst-case confounder that minimizes evidence against $H_k$ such that each $H_k$ fails to reject, yet there may not exist a single unmeasured confounder for which all $H_k$ simultaneously fail to reject. For example, consider a more narrowly focused version of our study in which we examine the effects of poverty on cotinine and waist-to-height ratio in young girls. Both of these outcomes appear to be impacted on the pilot sample; see Table~\ref{tab:edatbl_summary_outcomes}. It could be the case that the observed difference in cotinine levels between girls with low FPIR and those out of poverty might fail to reject at some level $\Gamma$ if the worst-case unmeasured confounder is parental smoking, whereas the gap in waist-to-height ratio might separately fail to reject when the worst-case confounder reflects inherited cardiometabolic risk. However, in this scenario, parental smoking may not affect adiposity and inherited metabolic risk may not elevate cotinine levels. The global null could therefore be rejected for larger values of $\Gamma$ at which each marginal test would fail. In such situations, strict inequality in (\ref{minmax-inequality}) occurs: the right-hand side, which conducts multiple sensitivity analyses separately for each $k$ then corrects for multiple comparisons, yields an objective value smaller than the critical value, while solving directly for the left-hand side would lead to an objective value greater than the critical value and a rejection of the global null hypothesis.

\subsection{Methodological considerations}

Taking into account the advice of \cite{heller2009split} and \cite{small2024protocols}, which advocate splitting the data into a smaller planning sample for designing the study and a larger analysis sample for making inferences, we randomly partitioned our NHANES data so that we could gain exploratory insights from the data and still conduct a valid sensitivity analysis of our results. It is worth mentioning that previous frameworks also implicitly recognize the benefits of peeking at the data before testing, the difference being that they maintain the whole sample for testing and use corrections for multiple comparisons to account for their data-driven choice of test statistic.

This novel approach for sensitivity analysis both facilitates the development of flexible, data-informed designs for observational studies and, in many regimes, improves statistical power when evaluating causal theories under concerns of unmeasured confounding. Numerous remarks are now in order.

We start by randomly splitting the sample, so that we can estimate a maximizing linear combination on the planning sample in Step 2 and plug it into the analysis sample for testing in Step 3.  Often times, the procedure would be used to find the sensitivity value on the analysis sample; that is, the largest value of $\Gamma$ for which $H_0$ fails to reject in Step 3. To do so, in Step 2, we typically maximize the sensitivity value for the optimization problem on the planning sample, $\Gamma_\p$. This can be computed by a one-dimensional search over $\Gamma$ (e.g., bisection on a grid). We recommend comparing the square root of the objective value of this problem to the standard Normal distribution. The critical value to which we compare this quantity affects the sensitivity value on the planning sample and may consequently have a downstream effect on an optimizing value $\widehat{{\lambda}}_\p$. While this estimated linear combination may vary depending on the reference distribution being used, we recommend the standard Normal since it provides the same stringency as will be faced when conducting inference on the analysis sample and is far less computationally expensive than iteratively computing quantiles of the $\bar{\chi}^2$ distribution. The directions for each individual alternative hypothesis can either be pre-specified and used to assess an elaborate causal theory in the spirit of \cite{cochran1965planning}, or can be modified after looking at the planning sample in an attempt to maximize formal discoveries on the analysis sample.
In our observational study, we sought to assess whether poverty has an adverse effect on the emergence of CVD risk factors, which motivated us to pre-specify the directions for each individual alternative hypothesis $H_k^c$. Similarly, we note that the overall procedure maintains its validity if the researcher wishes to adapt the values of $\alpha$ and $\Gamma$ to be used in Step 3 after examining $\Gamma_\p$. This may be useful if, for instance, the overall strength of signals appears strong on the pilot data, possibly leading the practitioner to increase $\Gamma$ to assess robustness to greater levels of potential unobserved biases, or very weak, which may lead her to increase the overall $\alpha$--level of her analysis in an attempt to reject $H_0$ at $\Gamma=1$. We remark that Step 3 is a typical univariate sensitivity analysis, which can be solved quickly by the separable algorithm \citep{gastwirth2000asymptotic}.

We point out that our approach considers linear combinations drawn from the probability simplex, unlike \cite{cohen2020multivariate} who draw directions over the nonnegative orthant. Due to the homogeneity of the objective function, this difference is inconsequential and returns the same objective value, yet this new formulation provides certain benefits in terms of technical convenience and interpretation of linear combinations. Notably, it is crucial to the performance of the overall procedure that a good estimate for a maximizing linear combination is rendered by the planning sample. If the pilot data are too small, particularly with respect to $K$, or are unrepresentative of the analysis sample, perhaps due to the presence of few, but very notable outliers in the sample, an estimated optimal direction will be unable to effectively combat adversarial treatment allocation and will not yield powerful inference. Similarly, an estimate may be substantially different from a true optimizing direction when the true objective is highly non-smooth. However, when combinations are estimated with reasonable accuracy, often due to a large enough planning sample, their individual components can lend meaningful scientific interpretation \citep{ye2022dimensions} and be used to construct principled composite measures; see an example of this in Section \ref{data-exam}. 

Finally, we note that, while our method is amenable to closed testing \citep{marcus1976closed} for smaller values of $K$, it quickly becomes computationally infeasible to search through all possible intersection hypotheses, as the burden grows exponentially in $K$. We used a hierarchical testing method proposed by \cite{goeman2012inheritance} in our data application to conduct powerful testing of both global nulls and individual nulls. This allowed us to reject certain stratum-specific global null hypotheses in our study, for example, that growing up in poverty does not have a negative impact on the emergence of CVD risk factors in young girls, while also facilitating tests of its constituent individual nulls, such as assessing the effect of poverty on waist-to-height ratio in young girls.

\subsection{Future directions}
It would be interesting to study more recent NHANES waves of data collection and see how these impacts have evolved over time, especially after the COVID-19 pandemic. Notably, COVID-era expansions of the child tax credit and other supports contributed to a large decline in child poverty in 2021, yet its expiration the following year resulted in a marked rise in child poverty \citep{bitler2023effects}. It may also be of interest to examine potential mediation among the outcomes under investigation, for instance, whether the observed decrease in vigorous LTPA or dietary intake mediated the effect of poverty on subsequent increases in the waist-to-height ratio among girls. A future study may also look to decouple the effects of poverty on passive smoking versus active smoking among teens and young adults. It could also be worthwhile to examine possible heterogeneity in the effects of poverty across race/ethnicity.

One question in the area of multivariate sensitivity analysis that remains unresolved is how to quantify the closeness of particular linear combinations and what distance metrics would be most appropriate. New insights could lead to finite sample guarantees for the performance of our method and new means to evaluate competing methods. It could also be useful to extend our framework to conjunction or partial conjunction tests in the spirit of \cite{benjamini2008screening} and \cite{karmakar2020assessment} in order to assess the extent of corroboration of an elaborate causal theory, or to guarantee control of the false discovery rate \citep{benjamini1995controlling} in these studies with many hypotheses of interest.

\section{Technical Proofs} \label{appendix:proofs}

\begin{lemma}
\label{lem:scalar_positive_part_identity}
For every $a\in\mathbb R$ and every $V>0$,
\[
\frac{\max\{0,a\}^2}{V}
=
\sup_{s\ge 0}\left(2sa-s^2V\right).
\]
\end{lemma}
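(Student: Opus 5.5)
The plan is to treat the right-hand side as the supremum of a concave quadratic in $s$ over the half-line $[0,\infty)$, and to compute it directly by splitting on the sign of $a$. For fixed $a\in\mathbb R$ and $V>0$, set $g(s)=2sa-s^2V$. This is a strictly concave parabola with derivative $g'(s)=2a-2sV$, so its unconstrained maximizer on $\mathbb R$ is $s^\ast=a/V$, with $g(s^\ast)=a^2/V$. The only work is to account for the constraint $s\ge 0$.

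\emph{Case $a>0$.} Here $s^\ast=a/V>0$ is feasible. Since $g$ is concave, it attains its maximum over $[0,\infty)$ at $s^\ast$. Hence
\[
\sup_{s\ge 0} g(s)=\frac{a^2}{V}=\frac{\max\{0,a\}^2}{V}.
\]

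\emph{Case $a\le 0$.} For every $s\ge 0$ we have $2sa\le 0$ and $-s^2V\le 0$, so $g(s)\le 0$. Equality holds at $s=0$. Thus the supremum equals $0=\max\{0,a\}^2/V$.

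The two cases together give the identity. The supremum is in fact attained, at $s=\max\{0,a\}/V$, which may be worth recording for later use.

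There is no real obstacle; the only point needing care is the boundary case $a\le 0$, where the constraint $s\ge0$ is active. This case is exactly what produces the positive part rather than the plain square $a^2/V$. One could alternatively complete the square, writing $g(s)=a^2/V-V(s-a/V)^2$, and then minimize the distance from $s$ to $a/V$ over $[0,\infty)$; this gives the same case split. The strict inequality $V>0$ is essential, since it guarantees strict concavity and a finite supremum. The degenerate case $V=0$ is presumably why Theorem~\ref{minmax-thm} uses the extended-real payoff, and this lemma will later let that payoff be written as a supremum of functions affine in $(\boldsymbol{\mu},\boldsymbol{\Sigma})$, as Sion's theorem requires.
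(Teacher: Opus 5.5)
Your proof is correct and follows essentially the same route as the paper. Both split on the sign of $a$: when $a\le 0$, you use $g(s)\le 0$ with equality at $s=0$; when $a>0$, you use the feasible critical point $s^\ast=a/V$ of the strictly concave quadratic, which yields $a^2/V$.
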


\begin{proof}
Fix $a\in\mathbb R$ and $V>0$. For $s\ge 0$, consider the function
$
s\longmapsto 2sa-s^2V.%
$
This is a quadratic polynomial in $s$ with coefficient $-V<0$ on $s^2$, so it
is strictly concave on $[0,\infty)$.

If $a\le 0$, then for every $s\ge 0$ we have
$
2sa-s^2V \le 0,
$
where  equality holds at $s=0$. Therefore
$
\sup_{s\ge 0}(2sa-s^2V)=0.
$
Since $a\le 0$, we also have $\max\{0,a\}=0$, so
$
\frac{\max\{0,a\}^2}{V}=0.
$
Thus the desired identity holds when $a\le 0$.

Now suppose $a>0$. The derivative of $2sa-s^2V$ with respect to $s$ is
$
2a-2sV.
$
Setting this equal to $0$ gives the unique critical point
$
s^\star=\frac{a}{V}.
$
Because $a>0$ and $V>0$, we have $s^\star>0$, so $s^\star$ belongs to the
feasible set $[0,\infty)$. Since the function is strictly concave, this critical
point is the unique maximizer. Evaluating at $s^\star$ yields
\[
2s^\star a-(s^\star)^2V
=
2\frac{a}{V}a-\left(\frac{a}{V}\right)^2V
=
\frac{a^2}{V}.
\]
Since $a>0$, we have $\max\{0,a\}=a$, so
$
\frac{\max\{0,a\}^2}{V}=\frac{a^2}{V}.
$
Thus the desired identity also holds when $a>0$.
\end{proof}

\begin{lemma}
\label{lem:cone_reformulation}
Let $\Lambda\subseteq\mathbb R^K$ be a nonempty convex set. Define its
conic hull as
\[
K_\Lambda := \{s\lambda : \lambda\in\Lambda,\ s\ge 0\}.
\]
Then for every $\varrho\in P_\Gamma$,
\[
\sup_{\lambda\in\Lambda} F(\lambda,\varrho)
=
\sup_{B\in K_\Lambda}
\left\{
2B^\top(t-\mu(\varrho))-B^\top\Sigma(\varrho)B
\right\}.
\]
\end{lemma}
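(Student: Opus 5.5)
Fix $\varrho\in P_\Gamma$ and abbreviate $a(\lambda):=\lambda^\top(t-\mu(\varrho))$, $V(\lambda):=\lambda^\top\Sigma(\varrho)\lambda\ge 0$, and $G(B):=2B^\top(t-\mu(\varrho))-B^\top\Sigma(\varrho)B$. Every $B\in K_\Lambda$ has the form $B=s\lambda$ with $\lambda\in\Lambda$ and $s\ge 0$, and then $G(s\lambda)=2s\,a(\lambda)-s^2V(\lambda)$. The right-hand side can therefore be written as an iterated supremum,
\[
\sup_{B\in K_\Lambda}G(B)=\sup_{\lambda\in\Lambda}\ \sup_{s\ge 0}\bigl(2s\,a(\lambda)-s^2V(\lambda)\bigr).
\]
The claim will then follow once I show, for each fixed $\lambda\in\Lambda$, that
\[
F(\lambda,\varrho)=\sup_{s\ge 0}\bigl(2s\,a(\lambda)-s^2V(\lambda)\bigr)
\]
as extended reals. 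Note that $B=0$ is covered by $s=0$, and $0$ lies in $K_\Lambda$ because $\Lambda$ is nonempty, so no boundary case is lost. Convexity of $\Lambda$ is not needed for this identity; it matters only later, to make $K_\Lambda$ a convex cone for Sion's theorem.

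I will verify the pointwise identity by following the three cases in the definition of $F$.

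\textbf{Case 1: $V(\lambda)>0$.} Lemma~\ref{lem:scalar_positive_part_identity}, applied with $a=a(\lambda)$ and $V=V(\lambda)$, gives
\[
\sup_{s\ge0}\bigl(2sa-s^2V\bigr)=\frac{\max\{0,a\}^2}{V},
\]
and the right side is exactly the first branch of $F$.

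\textbf{Case 2: $V(\lambda)=0$ and $a(\lambda)>0$.} The inner function reduces to $2s\,a(\lambda)$, which tends to $+\infty$ as $s\to\infty$. So the supremum is $+\infty=F(\lambda,\varrho)$.

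\textbf{Case 3: $V(\lambda)=0$ and $a(\lambda)\le 0$.} The inner function is $2s\,a(\lambda)\le 0$, with equality at $s=0$. So the supremum is $0=F(\lambda,\varrho)$.

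Taking the supremum over $\lambda\in\Lambda$ then gives the lemma.

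The only point needing care is the rewriting of $\sup_{B\in K_\Lambda}$ as an iterated supremum over $(\lambda,s)$. Because $(\lambda,s)\mapsto s\lambda$ maps $\Lambda\times[0,\infty)$ onto $K_\Lambda$, the two suprema range over exactly the same set of values of $G$. This holds even though the representation $B=s\lambda$ need not be unique, and no homogeneity argument on $F$ is needed.
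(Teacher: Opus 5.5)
Your proposal is correct and follows essentially the same route as the paper. Both arguments prove the pointwise identity $F(\lambda,\varrho)=\sup_{s\ge 0}\{2s\lambda^\top(t-\mu(\varrho))-s^2\lambda^\top\Sigma(\varrho)\lambda\}$ by the three cases of the definition of $F$ (using Lemma~\ref{lem:scalar_positive_part_identity} when the variance is positive), then take the supremum over $\lambda\in\Lambda$ and identify the iterated supremum with the supremum over $K_\Lambda$. The only cosmetic difference is that the paper treats $\lambda=0$ as a separate case, whereas you correctly absorb it into your Case 3.
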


\begin{proof}
Fix $\varrho\in P_\Gamma$. We first prove that for every fixed
$\lambda\in\Lambda$,
\[
F(\lambda,\varrho)
=
\sup_{s\ge 0}
\left\{
2s\,\lambda^\top(t-\mu(\varrho))
-
s^2\,\lambda^\top\Sigma(\varrho)\lambda
\right\}.
\]
If $\lambda=0$, then by definition $F(0,\varrho)=0$. Also,
\[
2s\,0^\top(t-\mu(\varrho))-s^2\,0^\top\Sigma(\varrho)0=0
\qquad\text{for every } s\ge 0,
\]
so the supremum on the right-hand side is also $0$. Thus the identity holds
when $\lambda=0$.

Now suppose $\lambda\neq 0$.

If $\lambda^\top\Sigma(\varrho)\lambda>0$, then Lemma
\ref{lem:scalar_positive_part_identity} applied with
$a=\lambda^\top(t-\mu(\varrho))$ and $V=\lambda^\top\Sigma(\varrho)\lambda$
gives
\[
\frac{\max\{0,\lambda^\top(t-\mu(\varrho))\}^2}
{\lambda^\top\Sigma(\varrho)\lambda}
=
\sup_{s\ge 0}
\left\{
2s\,\lambda^\top(t-\mu(\varrho))
-
s^2\,\lambda^\top\Sigma(\varrho)\lambda
\right\}.
\]
This is exactly the desired identity.

If $\lambda^\top\Sigma(\varrho)\lambda=0$, then for every $s\ge 0$,
\[
2s\,\lambda^\top(t-\mu(\varrho))
-
s^2\,\lambda^\top\Sigma(\varrho)\lambda
=
2s\,\lambda^\top(t-\mu(\varrho)).
\]
If $\lambda^\top(t-\mu(\varrho))>0$, then the supremum over $s\ge 0$ of the
right-hand side is $+\infty$, which agrees with the definition of
$F(\lambda,\varrho)$. If $\lambda^\top(t-\mu(\varrho))\le 0$, then the right-hand
side is at most $0$ for every $s\ge 0$, and equals $0$ at $s=0$, so its
supremum is $0$, again agreeing with the definition of $F(\lambda,\varrho)$.
Thus the desired identity holds for every $\lambda\in\Lambda$.

We now take the supremum over $\lambda\in\Lambda$. Recall we have 
$
\{s\lambda : \lambda\in\Lambda,\ s\ge 0\} := K_\Lambda.$ Therefore,
\begin{align*}
\sup_{\lambda\in\Lambda}F(\lambda,\varrho)
&=
\sup_{\lambda\in\Lambda}\sup_{s\ge 0}
\left\{
2s\,\lambda^\top(t-\mu(\varrho))
-
s^2\,\lambda^\top\Sigma(\varrho)\lambda
\right\}\\
&=
\sup_{B\in K_\Lambda}
\left\{
2B^\top(t-\mu(\varrho))-B^\top\Sigma(\varrho)B
\right\}.
\end{align*}
This proves the lemma.
\end{proof}

\noindent \textit{Theorem \ref{minmax-thm}:} Let $\Lambda \subseteq \mathbb R^K$ be a nonempty convex set with conic hull
$K_\Lambda$. Then, %

\noindent\textup{(i)} The function
$
\varrho\longmapsto \sup_{\lambda\in\Lambda}F(\lambda,\varrho)
$
is convex and lower semicontinuous on $P_\Gamma$.

\noindent\textup{(ii)} The minimum
$
\min_{\varrho\in P_\Gamma}\sup_{\lambda\in\Lambda}F(\lambda,\varrho)
$
is attained.

\noindent\textup{(iii)} There is an exact minimax equality
\begin{align*}
\min_{\varrho\in P_\Gamma}\sup_{\lambda\in\Lambda}F(\lambda,\varrho)
&=
\min_{\varrho\in P_\Gamma}\sup_{B\in K_\Lambda}
\left\{
2B^\top(t-\mu(\varrho))-B^\top\Sigma(\varrho)B
\right\}\\
&=
\sup_{B\in K_\Lambda}\min_{\varrho\in P_\Gamma}
\left\{
2B^\top(t-\mu(\varrho))-B^\top\Sigma(\varrho)B
\right\}\\
&= 
\sup_{\lambda\in\Lambda}
\min_{\varrho\in P_\Gamma}
F(\lambda,\varrho).
\end{align*}

\begin{proof}
We begin with part (i). By Lemma \ref{lem:cone_reformulation},
\[
\sup_{\lambda\in\Lambda}F(\lambda,\varrho)
=
\sup_{B\in K_\Lambda}
\left\{
2B^\top(t-\mu(\varrho))-B^\top\Sigma(\varrho)B
\right\}
\qquad\text{for every } \varrho\in P_\Gamma.
\]
Fix $B\in K_\Lambda$. We show that the map
$
\varrho\longmapsto 2B^\top(t-\mu(\varrho))-B^\top\Sigma(\varrho)B
$
is convex and continuous on $P_\Gamma$.

The term $2B^\top t$ is constant in $\varrho$. The map
$\varrho\mapsto B^\top\mu(\varrho)$ is linear in $\varrho$, hence both convex
and continuous. By Lemma 1 of \cite{cohen2020multivariate}, the
map $\varrho\mapsto B^\top\Sigma(\varrho)B$ is concave, so its negative is
convex. It is also continuous because it is a polynomial in coordinates
$\varrho_{ij}$. Therefore
$
\varrho\longmapsto 2B^\top(t-\mu(\varrho))-B^\top\Sigma(\varrho)B
$
is convex and continuous for each fixed $B\in K_\Lambda$.

Recall that the pointwise supremum of any family of convex functions is convex. Likewise, the
pointwise supremum of any family of lower semicontinuous functions is lower
semicontinuous. Since each of the functions above is continuous, hence lower
semicontinuous, it follows that
$
\varrho\longmapsto \sup_{B\in K_\Lambda}
\left\{
2B^\top(t-\mu(\varrho))-B^\top\Sigma(\varrho)B
\right\}
$
is convex and lower semicontinuous. By Lemma
\ref{lem:cone_reformulation}, this is exactly
$
\varrho\longmapsto \sup_{\lambda\in\Lambda}F(\lambda,\varrho).
$
This proves part (i).

For part (ii), $P_\Gamma$ is compact by its definition in Section \ref{subsec:weighted-stats}. Also,  part (i) shows that
$
\varrho\longmapsto \sup_{\lambda\in\Lambda}F(\lambda,\varrho)
$
is lower semicontinuous. A lower semicontinuous function on a compact set
attains its minimum. Therefore, the minimum
$
\min_{\varrho\in P_\Gamma}\sup_{\lambda\in\Lambda}F(\lambda,\varrho)
$
is attained.

We now prove part (iii). The first equality is simply Lemma
\ref{lem:cone_reformulation} applied pointwise in $\varrho$, followed by taking
the minimum over $\varrho\in P_\Gamma$
\[
\min_{\varrho\in P_\Gamma}\sup_{\lambda\in\Lambda}F(\lambda,\varrho)
=
\min_{\varrho\in P_\Gamma}\sup_{B\in K_\Lambda}
\left\{
2B^\top(t-\mu(\varrho))-B^\top\Sigma(\varrho)B
\right\}.
\]

For the second equality, define
$
f(\varrho,B)
:=
2B^\top(t-\mu(\varrho))-B^\top\Sigma(\varrho)B.
$
Observe that, since we move from an extended-real ratio payoff to a finite
quadratic payoff, we can apply the standard version of Sion's minimax theorem
to the latter objective. We now verify the conditions of Sion's minimax theorem
on the domain $P_\Gamma \times K_\Lambda$.

$P_\Gamma$ is compact and convex by definition. $K_\Lambda$ is a cone, and in fact a convex cone: closure under nonnegative scaling is immediate from the definition. For convexity, take $s_1\lambda_1, s_2\lambda_2 \in K_\Lambda$ with $\lambda_1,\lambda_2 \in \Lambda$, $s_1,s_2 \ge 0$, and $\theta \in [0,1]$. If
$
\theta s_1 + (1-\theta)s_2 = 0,
$
then $\theta s_1 = (1-\theta)s_2 = 0$, so
$
\theta s_1 \lambda_1 + (1-\theta)s_2 \lambda_2 = 0,
$
which lies in $K_\Lambda$. Otherwise, let
$
S = \theta s_1 + (1-\theta)s_2 > 0.
$
Observe that
\[
\theta s_1 \lambda_1 + (1-\theta)s_2 \lambda_2
=
S\left(\frac{\theta s_1}{S}\lambda_1 + \frac{(1-\theta)s_2}{S}\lambda_2\right),
\]
where the term in parentheses is a convex combination of $\lambda_1$ and $\lambda_2$. Since $\Lambda$ is convex, that term lies in $\Lambda$. Therefore,
$
\theta s_1 \lambda_1 + (1-\theta)s_2 \lambda_2 \in K_\Lambda.
$

Next, fix $B\in K_\Lambda$. As shown above, the map $\varrho\mapsto f(\varrho,B)$ is convex and continuous on $P_\Gamma$. Then, fix $\varrho\in P_\Gamma$. We show that the map $B\mapsto f(\varrho,B)$ is concave and continuous on $K_\Lambda$. The term $B\mapsto 2B^\top(t-\mu(\varrho))$ is linear in $B$, hence both concave and continuous. The covariance matrix $\Sigma(\varrho)$ is positive semidefinite. Thus, the quadratic form $B\mapsto B^\top\Sigma(\varrho)B$ is convex in $B$, and its negative is concave. It is also continuous. Hence,
$
B\longmapsto 2B^\top(t-\mu(\varrho))-B^\top\Sigma(\varrho)B
$
is concave and continuous on $K_\Lambda$.

All conditions of Sion's minimax theorem are satisfied, so we can conclude that
\begin{align*}
&\min_{\varrho\in P_\Gamma}\sup_{B\in K_\Lambda}
\left\{
2B^\top(t-\mu(\varrho))-B^\top\Sigma(\varrho)B
\right\}\\
=
&\sup_{B\in K_\Lambda}\min_{\varrho\in P_\Gamma}
\left\{
2B^\top(t-\mu(\varrho))-B^\top\Sigma(\varrho)B
\right\}.
\end{align*}

For the third and final equality, take $\lambda \in \Lambda$ and $\varrho \in P_\Gamma$. We showed in Lemma~\ref{lem:cone_reformulation}
that
\[
F(\lambda,\varrho)
=
\sup_{s\ge 0}
\Bigl\{
2s\,\lambda^\top(t-\mu(\varrho))
-
s^2\,\lambda^\top\Sigma(\varrho)\lambda
\Bigr\}.
\]
For each fixed $s\ge0$, the map $\varrho \mapsto 2s\,\lambda^\top(t-\mu(\varrho))
-
s^2\,\lambda^\top\Sigma(\varrho)\lambda$ is convex and continuous on $P_\Gamma$, since $\mu(\varrho)$ is linear in $\varrho$ and
$\lambda^\top\Sigma(\varrho)\lambda$ is concave in $\varrho$.
For each fixed $\varrho$, the map $s\mapsto 2s\,\lambda^\top(t-\mu(\varrho))
-
s^2\,\lambda^\top\Sigma(\varrho)\lambda$ is a concave
quadratic polynomial on $[0,\infty)$.
Hence Sion's minimax theorem applies on the convex sets
$[0,\infty)$ and $P_\Gamma$, yielding
\begin{align*}
&\sup_{s\ge0}
\min_{\varrho\in P_\Gamma}
\Bigl\{2s\,\lambda^\top(t-\mu(\varrho))
-
s^2\,\lambda^\top\Sigma(\varrho)\lambda\Bigr\}\\
=
&\min_{\varrho\in P_\Gamma}
\sup_{s\ge0}
\Bigl\{2s\,\lambda^\top(t-\mu(\varrho))
-
s^2\,\lambda^\top\Sigma(\varrho)\lambda\Bigr\}.
\end{align*}

By definition of $F$, the right-hand side equals
$\min_{\varrho\in P_\Gamma} F(\lambda,\varrho)$.
Therefore,
\[
\sup_{s\ge0}
\min_{\varrho\in P_\Gamma}
\Bigl\{
2s\,\lambda^\top(t-\mu(\varrho))
-
s^2\,\lambda^\top\Sigma(\varrho)\lambda
\Bigr\}
=
\min_{\varrho\in P_\Gamma}
F(\lambda,\varrho).
\]

Finally, we have by the definition of $K_\Lambda$ that
\begin{align*}
&\sup_{B\in K_\Lambda}
\min_{\varrho\in P_\Gamma}
\Bigl\{
2B^\top(t-\mu(\varrho))
-
B^\top\Sigma(\varrho)B
\Bigr\}\\
=
&\sup_{\lambda\in\Lambda}
\sup_{s\ge0}
\min_{\varrho\in P_\Gamma}
\Bigl\{2s\,\lambda^\top(t-\mu(\varrho))
-
s^2\,\lambda^\top\Sigma(\varrho)\lambda\Bigr\} \\
=
&\sup_{\lambda\in\Lambda}
\min_{\varrho\in P_\Gamma}
F(\lambda,\varrho),
\end{align*}
which confirms part (iii). This completes the proof.
\end{proof}

{\noindent \textit{Theorem \ref{convergence-thm} and Corollary \ref{corr2}:}
Before beginning the proof, we first introduce some notation. For each \(\lambda\in\Delta_+\), we define
$
T_i(\lambda):=\lambda^\top T_i
$ and $
\theta_i(\lambda):=\mathbb E[T_i(\lambda)].
$
We use \(\{p,a,w\}\) as shorthand notation to denote planning, analysis, and whole samples, respectively. For each \(s\in\{p,a,w\}\), let \(S_s\) be the set of matched-set indices assigned to sample \(s\) with cardinality \(I_s:=|S_s|\). We assume that the full matched-set random elements indexed by \(i\in S_s\) are independent across \(i\), though not necessarily identically distributed. In particular, \(T_i(\lambda)\) is a measurable function of these matched-set random elements. All arguments below may be read conditionally on the realized split; the random splitting is assumed independent of the matched-set random elements. Throughout, expectations are taken over the random treatment assignments and outcomes within matched sets, holding the matching configuration and random sample partition fixed. We also take \(I_s\to\infty\) for each \(s\in\{p,a,w\}\) throughout.  Define the sample averages
$
T_s(\lambda):=\frac1{I_s}\sum_{i\in S_s}T_i(\lambda)
$ and $
\theta_s(\lambda):=\frac1{I_s}\sum_{i\in S_s}\theta_i(\lambda).
$ We also write
$
\bar T_s:=\frac1{I_s}\sum_{i\in S_s}T_i$ and $
\bar\theta_s:=\frac1{I_s}\sum_{i\in S_s}\mathbb E[T_i],
$
so that $T_s(\lambda)=\lambda^\top \bar T_s$ and $\theta_s(\lambda)=\lambda^\top \bar\theta_s$.

Fix \(\Gamma\ge 1\). Let \(\mu_{\Gamma,i}(\lambda)\) and \(\nu_{\Gamma,i}^2(\lambda)\) denote the worst-case mean and variance returned by the separable algorithm of \cite{gastwirth2000asymptotic} when applied to the scalar scores induced by \(\lambda\). We assume that \(\mu_{\Gamma,i}(\lambda)\) and \(\nu_{\Gamma,i}^2(\lambda)\) are likewise measurable functions of the matched-set random elements, and that the maps $(\omega,\lambda,\Gamma)\mapsto\mu_{\Gamma,i}(\lambda;\omega)$ and $(\omega,\lambda,\Gamma)\mapsto\nu_{\Gamma,i}^2(\lambda;\omega)$ are jointly measurable and continuous in $(\lambda,\Gamma)$ for each $\omega$. Define the random averages
$
\mu_{\Gamma,s}(\lambda)
:=
\frac1{I_s}\sum_{i\in S_s}\mu_{\Gamma,i}(\lambda)$ and $
\nu_{\Gamma,s}^2(\lambda)
:=
\frac1{I_s}\sum_{i\in S_s}\nu_{\Gamma,i}^2(\lambda).
$
We then introduce the corresponding deterministic averages
$
\bar\mu_{\Gamma,s}(\lambda)
:=
\frac1{I_s}\sum_{i\in S_s}\mathbb E[\mu_{\Gamma,i}(\lambda)]
$ and $
\bar\nu_{\Gamma,s}^2(\lambda)
:=
\frac1{I_s}\sum_{i\in S_s}\mathbb E[\nu_{\Gamma,i}^2(\lambda)].
$
For the limiting deterministic criterion, we take
$
\nu_\Gamma(\lambda):=\sqrt{\nu_\Gamma^2(\lambda)}.
$ Throughout, $\theta$, $\mu_\Gamma$, and $\nu_\Gamma^2$ denote deterministic population limits whose existence and regularity are postulated in Assumption~\textup{(A1)} below.

For each \(s\in\{p,a,w\}\), let \(c_s\) denote the fixed, finite critical value used in the sensitivity analysis on sample $s$. Throughout, \(c_s\) denotes the critical value on the standardized-deviate scale (i.e., if a procedure is written in terms of a squared statistic with squared-scale critical 
value \(d_s\), then \(c_s=\sqrt{d_s}\)). For example, the fixed-combination Gaussian test has 
\(c_s=z_{1-\alpha}\), while a chi-squared or chi-bar-squared test with squared-scale 
critical value \(d_s\) corresponds to \(c_s=\sqrt{d_s}\). For simplicity of notation, we write the critical value as \(c_s\), although our argumentation still applies
if \(c_s\) is replaced by a measurable critical-value function \(c_s(\lambda,\Gamma)\), possibly depending on the sample, after appropriate modifications.

Finally, let \(\mathcal I:=[\Gamma_L,\Gamma_U]\subset[1,\infty)\) be a compact interval over which sensitivity analysis is conducted. The results presented here concern sensitivity values restricted to \(\mathcal I\) -- if the unrestricted sensitivity values lie in \(\mathcal I\) with probability tending to one, then the same conclusions apply to the unrestricted sensitivity values. Throughout, we use the Euclidean norm on \(\mathbb R^K\) to state continuity and Lipschitz conditions on the finite-dimensional set \(\Delta_+\).%

For each \(s\in\{p,a,w\}\), let
\[
B_s^\nu
:=
\left\{
\inf_{(\lambda,\Gamma)\in\Delta_+\times\mathcal I}
\nu_{\Gamma,s}^2(\lambda)>0
\right\}.
\]
On \(B_s^\nu\), define \(\Psi_s(\lambda,\Gamma)\) by
\begin{align*}
\Psi_s(\lambda,\Gamma)
&:=
\frac{T_s(\lambda)-\theta_s(\lambda)}{\sqrt{\nu_{\Gamma,s}^2(\lambda)}}
+
\frac{\theta_s(\lambda)-\mu_{\Gamma,s}(\lambda)}{\sqrt{\nu_{\Gamma,s}^2(\lambda)}}
-
\frac{c_s}{\sqrt{I_s}}.
\end{align*}
Off \(B_s^\nu\), set \(\Psi_s(\lambda,\Gamma):=0\) for all
\((\lambda,\Gamma)\in\Delta_+\times\mathcal I\). All root and maximization
arguments below are carried out on events of probability tending to one (specifically, the event \(E_s\), as defined in the proof). On \(B_s^\nu\), for a fixed \(\lambda\), rejection at level \(\Gamma\) is 
equivalent to
\[
\frac{\sqrt{I_s}\bigl(T_s(\lambda)-\mu_{\Gamma,s}(\lambda)\bigr)}
{\sqrt{\nu_{\Gamma,s}^2(\lambda)}}
> c_s,
\]
which is equivalent to
$
\Psi_s(\lambda,\Gamma)>0.$ Moreover, define
\[
\Psi(\lambda,\Gamma)
:=
\frac{\theta(\lambda)-\mu_\Gamma(\lambda)}{\nu_\Gamma(\lambda)}.
\]

For each \(\lambda\in\Delta_+\), let \(\widetilde\Gamma(\lambda)\) denote the design sensitivity when using linear combination \(\lambda\). Since $c_s/\sqrt{I_s}\to 0$ as $I_s\to\infty$, the critical value vanishes asymptotically and the design sensitivity $\widetilde\Gamma(\lambda)$ is characterized as the unique solution to $\Psi(\lambda,\Gamma)=0$ in the population limit \citep{rosenbaum2004design}). The proof shows that \(\widetilde\Gamma\) is continuous on \(\Delta_+\) and attains its maximum. We write
$
\Gamma^\star:=\max_{\lambda\in\Delta_+}\widetilde\Gamma(\lambda)
$
and
$
\mathcal M:=\arg\max_{\lambda\in\Delta_+}\widetilde\Gamma(\lambda).
$ For each \(s\in\{p,a,w\}\), let \(\widehat\Gamma_s(\lambda)\) denote the sample sensitivity value for the fixed linear combination \(\lambda\), defined as the unique zero of
$
\Psi_s(\lambda,\Gamma)=0
$
on the event $E_s$ defined in the proof, and set to $\Gamma_L$ on $E_s^c$. Let \(\widehat\lambda_s\) be any measurable selection from
$
\arg\max_{\lambda\in\Delta_+}\widehat\Gamma_s(\lambda),
$
using a fixed deterministic tie-breaking rule whenever necessary.

We assume the following regularity conditions:

\noindent\textup{(A1)} {[Regularity for design sensitivity]}
Assume that there exist deterministic functions
$
\theta:\Delta_+\to\mathbb R,
$ as well as $
\mu:\Delta_+\times\mathcal I\to\mathbb R
$ and $
\nu^2:\Delta_+\times\mathcal I\to(0,\infty),
$
written as \(\mu_\Gamma(\lambda)\) and \(\nu_\Gamma^2(\lambda)\), such that for each \(s\in\{p,a,w\}\),
\begin{enumerate}[label=(\roman*)]
\item For every fixed \(\lambda\in\Delta_+\),
$
\theta_s(\lambda)\to\theta(\lambda)
$ as $I_s\to\infty;
$
\item For every fixed \((\lambda,\Gamma)\in\Delta_+\times\mathcal I\),
$
\bar\mu_{\Gamma,s}(\lambda)\to\mu_\Gamma(\lambda)
$ and $
\bar\nu_{\Gamma,s}^2(\lambda)\to\nu_\Gamma^2(\lambda)
$ as $I_s\to\infty;
$
\item Letting
$
g(\lambda,\Gamma):=\theta(\lambda)-\mu_\Gamma(\lambda),
$
the map \(g\) is continuous on \(\Delta_+\times\mathcal I\);
\item For every fixed \(\lambda\in\Delta_+\), the map
$
\Gamma\mapsto g(\lambda,\Gamma)
$
is strictly decreasing on \(\mathcal I\);

\item
$
\inf_{\lambda\in\Delta_+} g(\lambda,\Gamma_L)>0
$ and $
\sup_{\lambda\in\Delta_+} g(\lambda,\Gamma_U)<0.
$
\end{enumerate}

\noindent\textup{(A2)} {[Regularity for sensitivity values]}
Assume that there exists a common \(\delta\in(0,1]\) such that
\begin{enumerate}[label=(\roman*)]
\item \((\lambda,\Gamma)\mapsto \nu_\Gamma^2(\lambda)\) is continuous on \(\Delta_+\times\mathcal I\);
\item \(\nu_\Gamma^2(\lambda)>0\) for every \(\lambda\in\Delta_+\) and every \(\Gamma\in\mathcal I\);
\item For each \(s\in\{p,a,w\}\), with probability tending to one, for every \(\lambda\in\Delta_+\), the map
$
\Gamma\mapsto \Psi_s(\lambda,\Gamma)
$
is strictly decreasing on \(\mathcal I\);

\textit{Remark:} At a high level, this condition requires that increasing $\Gamma$ moves the standardized criterion $\Psi_s$ downward, so that the sample sensitivity value $\widehat\Gamma_s(\lambda)$ is uniquely defined. Although this is the sample analogue of \textup{(A1)(iv)}, it is not implied solely by monotonicity of $\mu_{\Gamma,s}(\lambda)$, as $\Psi_s$ also depends on $\Gamma$ through $\nu_{\Gamma,s}^2(\lambda)$.

\item There exist nonnegative random variables
$
M_i^\mu,\ M_i^\nu,\ L_i^\mu,\ L_i^\nu
$
such that, almost surely, for all \((\lambda,\Gamma),(\lambda',\Gamma')\in\Delta_+\times\mathcal I\),
\[
|\mu_{\Gamma,i}(\lambda)|\le M_i^\mu,
\qquad
|\nu_{\Gamma,i}^2(\lambda)|\le M_i^\nu,
\]
and
\[
|\mu_{\Gamma,i}(\lambda)-\mu_{\Gamma',i}(\lambda')|
\le
L_i^\mu\bigl(\|\lambda-\lambda'\|+|\Gamma-\Gamma'|\bigr),
\]
\[
|\nu_{\Gamma,i}^2(\lambda)-\nu_{\Gamma',i}^2(\lambda')|
\le
L_i^\nu\bigl(\|\lambda-\lambda'\|+|\Gamma-\Gamma'|\bigr),
\]
with
\[
\sup_s \frac1{I_s}\sum_{i\in S_s}\mathbb E\bigl[(M_i^\mu)^{1+\delta}\bigr]<\infty,
\qquad
\sup_s \frac1{I_s}\sum_{i\in S_s}\mathbb E\bigl[(M_i^\nu)^{1+\delta}\bigr]<\infty,
\]
\[
\sup_s \frac1{I_s}\sum_{i\in S_s}\mathbb E\bigl[(L_i^\mu)^{1+\delta}\bigr]<\infty,
\qquad
\sup_s \frac1{I_s}\sum_{i\in S_s}\mathbb E\bigl[(L_i^\nu)^{1+\delta}\bigr]<\infty.
\]

\textit{Remark:} A sufficient condition for the Lipschitz property of \((\lambda,\Gamma)\mapsto \nu_{\Gamma,i}^2(\lambda)\) is that, for each matched set \(i\), either the worst-case mean maximizer is unique at every \((\lambda,\Gamma)\in\Delta_+\times\mathcal I\), or whenever two candidates tie in worst-case mean at a point \((\lambda,\Gamma)\), they also yield the same candidate worst-case variance at that point. In that case, the tie-breaking rule used to define \(\nu_{\Gamma,i}^2(\lambda)\) does not create discontinuities, and the selected worst-case variance inherits Lipschitz regularity from the underlying candidate variance functions;

\item $
\sup_s \frac1{I_s}\sum_{i\in S_s}\mathbb E\|T_i\|^{1+\delta}<\infty.
$
\end{enumerate}

\medskip

Then,

\noindent\textup{(i)} For each \(s\in\{p,a,w\}\),
$
\sup_{\lambda\in\Delta_+}\widehat\Gamma_s(\lambda)
\xrightarrow{p}
\Gamma^\star.
$
Moreover, any measurable maximizer \(\widehat\lambda_s\in
\arg\max_{\lambda\in\Delta_+}\widehat\Gamma_s(\lambda)\) satisfies
$
\operatorname{dist}(\widehat\lambda_s,\mathcal M)\xrightarrow{p}0.
$

\noindent\textup{(ii)} The procedures \eqref{cohen-sensanalysis} and \eqref{p2} have the same design sensitivity.

\begin{proof} \text{  }

\medskip

\noindent\textit{\underline{Properties of the population design sensitivity $\widetilde\Gamma$.}}
We first establish some preliminary results which will be used to derive properties of $\widetilde\Gamma.$

Since \(F(c\lambda,\varrho)=F(\lambda,\varrho)\) for every \(c>0\), we can rescale any \(\lambda\in\Lambda^+ \setminus \{0\} \) to lie in \(\Delta_+\) without changing the value of \(F\). We therefore restrict our attention to \(\lambda\in\Delta_+\) for \eqref{cohen-sensanalysis}; procedure \eqref{p2} was formulated to explicitly confine the set of linear combinations to $\Delta_+$.

We now prove that the deterministic averages $\theta_s$, $\bar\mu_{\Gamma,s}$, and $\bar\nu_{\Gamma,s}^2$ converge uniformly to their limits $\theta$, $\mu_\Gamma$, and $\nu_\Gamma^2$. For \(\theta_s\), note that
$
\theta_s(\lambda)=\lambda^\top\left(\frac1{I_s}\sum_{i\in S_s}\mathbb E[T_i]\right).
$
Hence, for any \(\lambda,\lambda'\in\Delta_+\),
\[
|\theta_s(\lambda)-\theta_s(\lambda')|
\le
\|\lambda-\lambda'\|
\left\|\frac1{I_s}\sum_{i\in S_s}\mathbb E[T_i]\right\|.
\]
By Jensen's inequality and Assumption \textup{(A2)(v)},
\[
\left\|\frac1{I_s}\sum_{i\in S_s}\mathbb E[T_i]\right\|
\le
\frac1{I_s}\sum_{i\in S_s}\mathbb E\|T_i\|
\le
\left(\frac1{I_s}\sum_{i\in S_s}\mathbb E\|T_i\|^{1+\delta}\right)^{1/(1+\delta)},
\]
so the Lipschitz constants of \(\theta_s\) are uniformly bounded by a constant that does not depend on \(s\). The same uniform Lipschitz bound implies that the pointwise limit \(\theta\) is Lipschitz, hence uniformly continuous, on \(\Delta_+\). Since \(\theta_s(\lambda)\to\theta(\lambda)\) pointwise on the compact set \(\Delta_+\) by Assumption \textup{(A1)(i)}, a standard finite-net argument \citep[Ch. 5]{wainwright2019high} -- pointwise convergence at finitely many net points combined with the uniform Lipschitz bound to control the gap to arbitrary $\lambda$ -- yields
$
\sup_{\lambda\in\Delta_+}|\theta_s(\lambda)-\theta(\lambda)|\to 0.
$

For \(\bar\mu_{\Gamma,s}\) and \(\bar\nu_{\Gamma,s}^2\), Assumption \textup{(A2)(iv)} implies
\[
|\bar\mu_{\Gamma,s}(\lambda)-\bar\mu_{\Gamma',s}(\lambda')|
\le
\left(\frac1{I_s}\sum_{i\in S_s}\mathbb EL_i^\mu\right)
\bigl(\|\lambda-\lambda'\|+|\Gamma-\Gamma'|\bigr),
\]
\[
|\bar\nu_{\Gamma,s}^2(\lambda)-\bar\nu_{\Gamma',s}^2(\lambda')|
\le
\left(\frac1{I_s}\sum_{i\in S_s}\mathbb EL_i^\nu\right)
\bigl(\|\lambda-\lambda'\|+|\Gamma-\Gamma'|\bigr).
\]
By Hölder's inequality, the averages of \(\mathbb EL_i^\mu\) and \(\mathbb EL_i^\nu\) are uniformly bounded by constants that do not depend on \(s\). Since the Lipschitz constants are uniformly bounded in \(s\), the pointwise limits \(\mu\) and \(\nu^2\) are Lipschitz, hence uniformly continuous, on \(\Delta_+\times\mathcal I\). Since \(\bar\mu_{\Gamma,s}(\lambda)\to\mu_\Gamma(\lambda)\) and \(\bar\nu_{\Gamma,s}^2(\lambda)\to\nu_\Gamma^2(\lambda)\) pointwise by Assumption \textup{(A1)(ii)}, another finite-net argument on the compact set \(\Delta_+\times\mathcal I\) yields
\[
\sup_{(\lambda,\Gamma)\in\Delta_+\times\mathcal I}
|\bar\mu_{\Gamma,s}(\lambda)-\mu_\Gamma(\lambda)|\to 0,
\]
and
\[
\sup_{(\lambda,\Gamma)\in\Delta_+\times\mathcal I}
|\bar\nu_{\Gamma,s}^2(\lambda)-\nu_\Gamma^2(\lambda)|\to 0.
\]

Next, we define
$
g(\lambda,\Gamma):=\theta(\lambda)-\mu_\Gamma(\lambda).
$
For each fixed \(\lambda\in\Delta_+\), Assumption \textup{(A1)(v)} gives
$
g(\lambda,\Gamma_L)>0$ and $
g(\lambda,\Gamma_U)<0.
$
Since \(\Gamma\mapsto g(\lambda,\Gamma)\) is continuous on \(\mathcal I\) by Assumption \textup{(A1)(iii)}, the intermediate value theorem implies that there exists \(\widetilde\Gamma(\lambda)\in(\Gamma_L,\Gamma_U)\) such that
$
g(\lambda,\widetilde\Gamma(\lambda))=0.
$
Moreover, Assumption \textup{(A1)(iv)} implies that \(\Gamma\mapsto g(\lambda,\Gamma)\) is strictly decreasing on \(\mathcal I\), so this root is unique.

We next show that \(\lambda\mapsto \widetilde\Gamma(\lambda)\) is continuous and attains its maximum on \(\Delta_+\). To prove continuity, let \(\lambda_n\to\lambda\) in \(\Delta_+\). Since \(\mathcal I\) is compact, any subsequence of \(\widetilde\Gamma(\lambda_n)\) has a further subsequence converging to some \(\bar\Gamma\in\mathcal I\). By continuity of \(g\), we have that
$
0
=
\lim_{m\to\infty} g\bigl(\lambda_{n_m},\widetilde\Gamma(\lambda_{n_m})\bigr)
=
g(\lambda,\bar\Gamma).
$
Since \(\Gamma\mapsto g(\lambda,\Gamma)\) is strictly decreasing and has unique zero at \(\widetilde\Gamma(\lambda)\), it must hold that \(\bar\Gamma=\widetilde\Gamma(\lambda)\). Hence \(\widetilde\Gamma(\lambda_n)\to\widetilde\Gamma(\lambda)\). Since \(\DeltaP\) is compact and \(\widetilde\Gamma\) is continuous, \(\widetilde\Gamma\) attains its maximum. 
Then \(\Mcal\) is nonempty and compact, but need not be a singleton without further assumptions.

\medskip

\noindent\textit{\underline{Uniform stochastic convergence of the sample criterion $\Psi_s$ to its population limit $\Psi$.}}%

We first need two auxiliary positive constants from the population criterion $\Psi$. Assumptions \textup{(A1)(iii)}, \textup{(A2)(i)}, and \textup{(A2)(ii)} imply that \(\Psi\) is continuous on \(\Delta_+\times\mathcal I\). Since \(\Delta_+\) is compact and \(\Psi(\cdot,\Gamma_L)\) and \(\Psi(\cdot,\Gamma_U)\) are continuous, the extrema
$
\eta_L:=\inf_{\lambda\in\Delta_+}\Psi(\lambda,\Gamma_L)$ and $
\eta_U:=-\sup_{\lambda\in\Delta_+}\Psi(\lambda,\Gamma_U)
$
are attained. As \(g(\lambda,\Gamma_L)\) is uniformly bounded away from zero, \(g(\lambda,\Gamma_U)\) is uniformly bounded away from zero in the negative direction, and \(\nu_\Gamma(\lambda)\) is continuous and positive on the compact set \(\Delta_+\times\mathcal I\), it follows that
$
\eta_L>0$ and $
\eta_U>0.
$

Let us now consider \(T_s(\lambda)-\theta_s(\lambda)\). For each \(i\in S_s\), define
$
X_i(\lambda):=T_i(\lambda)-\theta_i(\lambda).
$
Then \(\mathbb EX_i(\lambda)=0\), and by linearity,
\[
|X_i(\lambda)-X_i(\lambda')|
\le
\|T_i-\mathbb ET_i\|\,\|\lambda-\lambda'\|.
\]
Also, since \(\Delta_+\) is compact, there exists \(C<\infty\) such that
\[
\sup_{\lambda\in\Delta_+}|X_i(\lambda)|
\le C \|T_i-\mathbb ET_i\|.
\]
Here and below, $C$ denotes a generic finite constant whose value may change from line to line. By the triangle inequality, the convexity bound
$(x+y)^{1+\delta}\le 2^\delta(x^{1+\delta}+y^{1+\delta})$ for
$x,y\ge 0$, and Jensen's inequality,
\[
\begin{aligned}
\mathbb E\|T_i-\mathbb ET_i\|^{1+\delta}
&\le
2^\delta\left\{
\mathbb E\|T_i\|^{1+\delta}
+
\|\mathbb ET_i\|^{1+\delta}
\right\}  \\
&\le
2^\delta\left\{
\mathbb E\|T_i\|^{1+\delta}
+
\mathbb E\|T_i\|^{1+\delta}
\right\}  \\
&=
2^{1+\delta}\mathbb E\|T_i\|^{1+\delta}
\end{aligned}
\]
for all \(i\). Hence Assumption \textup{(A2)(v)} yields
\[
\sup_s \frac1{I_s}\sum_{i\in S_s}\mathbb E\Bigl[\sup_{\lambda\in\Delta_+}|X_i(\lambda)|^{1+\delta}\Bigr]<\infty,
\]
and similarly the sample-average Lipschitz constants of \(X_i\) have uniformly bounded \((1+\delta)\)-moments.

Fix \(\varepsilon>0\). Let \(\{\lambda_1,\dots,\lambda_N\}\subset\Delta_+\) be a finite \(\rho\)-net. Then
\[
\sup_{\lambda\in\Delta_+}\left|\frac1{I_s}\sum_{i\in S_s}X_i(\lambda)\right|
\le
\max_{1\le m\le N}\left|\frac1{I_s}\sum_{i\in S_s}X_i(\lambda_m)\right|
+
\rho\,\frac1{I_s}\sum_{i\in S_s}\|T_i-\mathbb ET_i\|.
\]
For each fixed \(m\), the variables \(X_i(\lambda_m)\) are independent and mean zero. Since \(\delta\in(0,1]\), the von Bahr--Esseen inequality (\citeyear{von1965inequalities}) gives
\[
\mathbb E\left|
\frac1{I_s}\sum_{i\in S_s}X_i(\lambda_m)
\right|^{1+\delta}
\le
\frac{2}{I_s^{1+\delta}}\sum_{i\in S_s}\mathbb E|X_i(\lambda_m)|^{1+\delta}
\le
\frac{C}{I_s^\delta},
\]
for some constant \(C<\infty\) independent of \(s\) and \(m\). Hence
$
\frac1{I_s}\sum_{i\in S_s}X_i(\lambda_m)\xrightarrow{p}0
$
for each fixed \(m\), and therefore
$
\max_{1\le m\le N}\left|\frac1{I_s}\sum_{i\in S_s}X_i(\lambda_m)\right|\xrightarrow{p}0.
$
Also, by Jensen's inequality and Assumption \textup{(A2)(v)},
\[
\mathbb E\left[\left(\frac1{I_s}\sum_{i\in S_s}\|T_i-\mathbb ET_i\|\right)^{1+\delta}\right]
\le
\frac1{I_s}\sum_{i\in S_s}\mathbb E\|T_i-\mathbb ET_i\|^{1+\delta},
\]
which is uniformly bounded in $s$, hence
\[
\frac1{I_s}\sum_{i\in S_s}\|T_i-\mathbb ET_i\|=O_p(1)
\]
via Markov's inequality. Choosing \(\rho\) small and then taking \(I_s\) sufficiently large gives
$
\sup_{\lambda\in\Delta_+}|T_s(\lambda)-\theta_s(\lambda)|\xrightarrow{p}0.
$

Next, we consider
$
Y_i(\lambda,\Gamma):=\mu_{\Gamma,i}(\lambda)-\mathbb E[\mu_{\Gamma,i}(\lambda)].
$
By Assumption \textup{(A2)(iv)},
$
|Y_i(\lambda,\Gamma)|\le M_i^\mu+\mathbb EM_i^\mu,
$
and
\[
|Y_i(\lambda,\Gamma)-Y_i(\lambda',\Gamma')|
\le
\bigl(L_i^\mu+\mathbb EL_i^\mu\bigr)\bigl(\|\lambda-\lambda'\|+|\Gamma-\Gamma'|\bigr).
\]
Moreover, using \((x+y)^{1+\delta}\le 2^\delta(x^{1+\delta}+y^{1+\delta})\) for \(x,y\ge 0\) together with Jensen's inequality,
\[
\mathbb E|M_i^\mu+\mathbb EM_i^\mu|^{1+\delta}\le C_\mu\,\mathbb E[(M_i^\mu)^{1+\delta}],
\]
and
\[
\mathbb E|L_i^\mu+\mathbb EL_i^\mu|^{1+\delta}\le C_\mu'\,\mathbb E[(L_i^\mu)^{1+\delta}],
\]
for constants \(C_\mu,C_\mu'<\infty\) independent of \(i\). Thus, these centered uniform bounds and Lipschitz constants have uniformly bounded average \((1+\delta)\)-moments over each \(S_s\). Repeating the same finite-net argument on the compact set \(\Delta_+\times\mathcal I\), using von Bahr--Esseen inequality at each net point, yields
\[
\sup_{(\lambda,\Gamma)\in\Delta_+\times\mathcal I}
\left|
\mu_{\Gamma,s}(\lambda)-\bar\mu_{\Gamma,s}(\lambda)
\right|
\xrightarrow{p}0.
\]
By the deterministic uniform convergence already proved,
\[
\sup_{(\lambda,\Gamma)\in\Delta_+\times\mathcal I}
\left|
\bar\mu_{\Gamma,s}(\lambda)-\mu_\Gamma(\lambda)
\right|
\to 0.
\]
Therefore,
\[
\sup_{(\lambda,\Gamma)\in\Delta_+\times\mathcal I}
|\mu_{\Gamma,s}(\lambda)-\mu_\Gamma(\lambda)|
\xrightarrow{p}0.
\]

The same argument, with
$
Z_i(\lambda,\Gamma):=\nu_{\Gamma,i}^2(\lambda)-\mathbb E[\nu_{\Gamma,i}^2(\lambda)],
$
gives
\[
|Z_i(\lambda,\Gamma)|\le M_i^\nu+\mathbb EM_i^\nu,
\]
\[
|Z_i(\lambda,\Gamma)-Z_i(\lambda',\Gamma')|
\le
\bigl(L_i^\nu+\mathbb EL_i^\nu\bigr)\bigl(\|\lambda-\lambda'\|+|\Gamma-\Gamma'|\bigr),
\]
and consequently
\[
\sup_{(\lambda,\Gamma)\in\Delta_+\times\mathcal I}
\left|
\nu_{\Gamma,s}^2(\lambda)-\bar\nu_{\Gamma,s}^2(\lambda)
\right|
\xrightarrow{p}0.
\]
Using again the deterministic uniform convergence already proved,
\[
\sup_{(\lambda,\Gamma)\in\Delta_+\times\mathcal I}
|\nu_{\Gamma,s}^2(\lambda)-\nu_\Gamma^2(\lambda)|
\xrightarrow{p}0.
\]

Since \(\nu_\Gamma^2(\lambda)\) is continuous and strictly positive on the compact set \(\Delta_+\times\mathcal I\), it is bounded away from zero:
\[
\inf_{(\lambda,\Gamma)\in\Delta_+\times\mathcal I}\nu_\Gamma^2(\lambda)>0.
\]
Therefore, the preceding uniform convergence implies
\[
\inf_{(\lambda,\Gamma)\in\Delta_+\times\mathcal I}\nu_{\Gamma,s}^2(\lambda)>0
\qquad\text{with probability tending to one.}
\]

Combining the preceding uniform convergences and using \(c_s/\sqrt{I_s}\to0\), the continuous mapping theorem implies that
\[
\sup_{(\lambda,\Gamma)\in\Delta_+\times\mathcal I}
|\Psi_s(\lambda,\Gamma)-\Psi(\lambda,\Gamma)|
\xrightarrow{p}0.
\]

Hence, with probability tending to one,
\[
\sup_{(\lambda,\Gamma)\in\Delta_+\times\mathcal I}
|\Psi_s(\lambda,\Gamma)-\Psi(\lambda,\Gamma)|
<
\frac12\min\{\eta_L,\eta_U\}.
\]
On this event, for every \(\lambda\in\Delta_+\),
\[
\Psi_s(\lambda,\Gamma_L)
\ge
\Psi(\lambda,\Gamma_L)-\frac12\min\{\eta_L,\eta_U\}
\ge
\frac{\eta_L}{2}>0,
\]
and
\[
\Psi_s(\lambda,\Gamma_U)
\le
\Psi(\lambda,\Gamma_U)+\frac12\min\{\eta_L,\eta_U\}
\le
-\frac{\eta_U}{2}<0.
\]
Therefore, for each \(s\in\{p,a,w\}\), with probability tending to one,
$
\inf_{\lambda\in\Delta_+}\Psi_s(\lambda,\Gamma_L)>0$ and $
\sup_{\lambda\in\Delta_+}\Psi_s(\lambda,\Gamma_U)<0.
$

\medskip

\noindent\textit{\underline{Existence, uniqueness, and uniform consistency of the sample sensitivity value $\widehat\Gamma_s$.}} Let $E_s$ denote the high-probability event on which the sample criterion $\Psi_s$ admits a unique root in $\Gamma$ for every $\lambda \in \Delta_+$. Specifically, $E_s$ is the event that:
\begin{enumerate}[label=(\alph*)]
\item \(\inf_{(\lambda,\Gamma)\in\Delta_+\times\mathcal I}\nu_{\Gamma,s}^2(\lambda)>0\);
\item For every \(\lambda\in\Delta_+\), the map
$
\Gamma\mapsto \Psi_s(\lambda,\Gamma)
$
is strictly decreasing on \(\mathcal I\);
\item
$
\inf_{\lambda\in\Delta_+}\Psi_s(\lambda,\Gamma_L)>0
$ and $
\sup_{\lambda\in\Delta_+}\Psi_s(\lambda,\Gamma_U)<0.
$
\end{enumerate}
As a consequence of the preceding uniform convergence argument, along with the positivity of the limiting variance in Assumption~\textup{(A2)(ii)}, Assumption~\textup{(A2)(iii)}, and the endpoint sign separation in Assumption~\textup{(A1)(v)}, we deduce that $\mathbb P(E_s)\to 1$.

Moreover, on \(E_s\), the map
\((\lambda,\Gamma)\mapsto\Psi_s(\lambda,\Gamma)\) is continuous on
\(\Delta_+\times\mathcal I\). Indeed,
$
T_s(\lambda)-\theta_s(\lambda)
=
\lambda^\top(\bar T_s-\bar\theta_s)
$
is linear in \(\lambda\), and
\((\lambda,\Gamma)\mapsto\mu_{\Gamma,s}(\lambda)\) and
\((\lambda,\Gamma)\mapsto\nu_{\Gamma,s}^2(\lambda)\) are finite averages of
Lipschitz functions by Assumption~\textup{(A2)(iv)}, hence continuous. On
\(E_s\), positivity gives a denominator bounded away from zero, and therefore
\(\Psi_s\) is continuous.

On \(E_s\), for each \(\lambda\in\Delta_+\), the map \(\Gamma\mapsto\Psi_s(\lambda,\Gamma)\) is continuous and strictly decreasing on \(\mathcal I\), with opposite signs at \(\Gamma_L\) and \(\Gamma_U\). Hence the equation
$
\Psi_s(\lambda,\Gamma)=0
$
has a unique solution \(\widehat\Gamma_s(\lambda)\in(\Gamma_L,\Gamma_U)\). On the complement \(E_s^c\), define \(\widehat\Gamma_s(\lambda):=\Gamma_L\) for all \(\lambda\in\Delta_+\).

We now prove uniform root consistency. Fix \(\varepsilon>0\), and define
\[
A_\varepsilon
:=
\Bigl\{
(\lambda,\Gamma)\in\Delta_+\times\mathcal I:
|\Gamma-\widetilde\Gamma(\lambda)|\ge \varepsilon
\Bigr\}.
\]
If \(A_\varepsilon=\varnothing\), the desired conclusion is immediate. Otherwise, because \(\widetilde\Gamma\) is continuous and \(\Delta_+\times\mathcal I\) is compact, \(A_\varepsilon\) is compact. Moreover, by Assumption \textup{(A1)(iv)} and positivity of \(\nu_\Gamma(\lambda)\) from Assumption \textup{(A2)(ii)},
$
\Psi(\lambda,\Gamma)=0
$ if and only if $
\Gamma=\widetilde\Gamma(\lambda).
$
Hence \(\Psi\) does not vanish on \(A_\varepsilon\), so continuity of \(\Psi\) yields
$
\delta_\varepsilon
:=
\inf_{(\lambda,\Gamma)\in A_\varepsilon}|\Psi(\lambda,\Gamma)|
>0.
$
Therefore, on the event
\[
E_s
\cap
\left\{
\sup_{(\lambda,\Gamma)\in\Delta_+\times\mathcal I}
|\Psi_s(\lambda,\Gamma)-\Psi(\lambda,\Gamma)|
<
\delta_\varepsilon
\right\},
\]
no root of \(\Psi_s(\lambda,\cdot)\) can lie outside
$
(\widetilde\Gamma(\lambda)-\varepsilon,\widetilde\Gamma(\lambda)+\varepsilon).
$
It follows that
$
\sup_{\lambda\in\Delta_+}
|\widehat\Gamma_s(\lambda)-\widetilde\Gamma(\lambda)|
<\varepsilon.
$
Since \(\mathbb P(E_s)\to1\) and the displayed uniform convergence holds in probability, we conclude that
$
\sup_{\lambda\in\Delta_+}
|\widehat\Gamma_s(\lambda)-\widetilde\Gamma(\lambda)|
\xrightarrow{p}0$ for each $s \in\{p,a,w\}.
$

It remains to show that \(\lambda\mapsto\widehat\Gamma_s(\lambda)\) is continuous on \(\Delta_+\) with probability tending to one. Fix \(s\in\{p,a,w\}\), work on the event $E_s$, and let \(\lambda_n\to\lambda\). Because \(\mathcal I\) is compact, any subsequence of \(\widehat\Gamma_s(\lambda_n)\) has a further subsequence converging to some \(\bar\Gamma\in\mathcal I\). By continuity of \(\Psi_s\), we have
$
0
=
\lim_{m\to\infty}
\Psi_s(\lambda_{n_m},\widehat\Gamma_s(\lambda_{n_m}))
=
\Psi_s(\lambda,\bar\Gamma).
$
Since \(\Gamma\mapsto\Psi_s(\lambda,\Gamma)\) is strictly decreasing on \(\mathcal I\), this zero is unique, so \(\bar\Gamma=\widehat\Gamma_s(\lambda)\). Hence \(\widehat\Gamma_s(\lambda_n)\to\widehat\Gamma_s(\lambda)\). Thus \(\widehat\Gamma_s\) is continuous on \(\Delta_+\) on \(E_s\).

The joint measurability of \(\Psi_s\), its continuity in \((\lambda,\Gamma)\), and the uniqueness of the zero in \(\Gamma\) on \(E_s\) imply that
$
(\omega,\lambda)\mapsto \widehat\Gamma_s(\omega,\lambda)
$
is measurable on \(E_s\). Together with the definition \(\widehat\Gamma_s(\lambda)=\Gamma_L\) on \(E_s^c\), this gives a jointly measurable objective that is continuous in \(\lambda\) on the compact set \(\Delta_+\) for every sample realization. Hence the Weierstrass theorem guarantees existence of a maximizer. We take \(\widehat\lambda_s\) to be any measurable selection from
$
\arg\max_{\lambda\in\Delta_+}\widehat\Gamma_s(\lambda),
$
which exists by the measurable maximum theorem \citep[Theorem~18.19]{aliprantis2006infinite}. On \(E_s^c\), where \(\widehat\Gamma_s(\lambda)\equiv \Gamma_L\), take \(\widehat\lambda_s:=\mathbf e_1\).

For $s  \in \{p, a, w\}$, note that
\[
\left|
\sup_{\lambda\in\DeltaP}\widehat\Gamma_s(\lambda)
-
\sup_{\lambda\in\DeltaP}\widetilde\Gamma(\lambda)
\right|
\le
\sup_{\lambda\in\DeltaP}
\left|\widehat\Gamma_s(\lambda)-\widetilde\Gamma(\lambda)\right|
\overset{p}{\rightarrow}0.
\]
Therefore,
$
\sup_{\lambda\in\Delta_+}\widehat\Gamma_s(\lambda)
\xrightarrow{p}
\Gamma^\star.
$

We next establish that the distance between \(\widehat\lambda_s\) and \(\mathcal M\) converges to zero. Fix any \(\lambda^\star\in\mathcal M\). Since \(\widehat\lambda_s\) maximizes \(\widehat\Gamma_s\),
$
\widehat\Gamma_s(\widehat\lambda_s)
\ge
\widehat\Gamma_s(\lambda^\star).
$
Therefore,
\begin{align*}
\widetilde\Gamma(\widehat\lambda_s)
&\ge
\widehat\Gamma_s(\widehat\lambda_s)
-
\sup_{\lambda\in\Delta_+}
\left|\widehat\Gamma_s(\lambda)-\widetilde\Gamma(\lambda)\right| \\
&\ge
\widehat\Gamma_s(\lambda^\star)
-
\sup_{\lambda\in\Delta_+}
\left|\widehat\Gamma_s(\lambda)-\widetilde\Gamma(\lambda)\right| \\
&\ge
\widetilde\Gamma(\lambda^\star)
-
2\sup_{\lambda\in\Delta_+}
\left|\widehat\Gamma_s(\lambda)-\widetilde\Gamma(\lambda)\right| \\
&=
\Gamma^\star
-
2\sup_{\lambda\in\Delta_+}
\left|\widehat\Gamma_s(\lambda)-\widetilde\Gamma(\lambda)\right|.
\end{align*}
Since \(\widetilde\Gamma(\widehat\lambda_s)\le\Gamma^\star\), it follows that
$
0
\le
\Gamma^\star-\widetilde\Gamma(\widehat\lambda_s)
\le
2\sup_{\lambda\in\Delta_+}
\left|\widehat\Gamma_s(\lambda)-\widetilde\Gamma(\lambda)\right|
\xrightarrow{p}0.
$
Hence
$
\widetilde\Gamma(\widehat\lambda_s)
\xrightarrow{p}
\Gamma^\star.
$

Finally, fix \(\varepsilon>0\), and define
$
A_\varepsilon^\lambda
:=
\left\{\lambda\in\Delta_+:
\operatorname{dist}(\lambda,\mathcal M)\ge\varepsilon
\right\},$ where $
\operatorname{dist}(\lambda,\mathcal M)
:=
\inf_{m\in\mathcal M}\|\lambda-m\|.
$
If \(A_\varepsilon^\lambda=\varnothing\), the conclusion is immediate. If \(A_\varepsilon^\lambda\) is nonempty, it is compact and disjoint from \(\mathcal M\). Since \(\widetilde\Gamma\) is continuous and \(\mathcal M\) is its argmax set,
$
\sup_{\lambda\in A_\varepsilon^\lambda}\widetilde\Gamma(\lambda)
<
\Gamma^\star.
$
Thus there exists \(\kappa_\varepsilon>0\) such that
$
\sup_{\lambda\in A_\varepsilon^\lambda}\widetilde\Gamma(\lambda)
\le
\Gamma^\star-\kappa_\varepsilon.
$
Consequently,
\begin{align*}
\mathbb P\left\{\operatorname{dist}(\widehat\lambda_s,\mathcal M)\ge\varepsilon\right\}
&\le
\mathbb P\left\{
\widetilde\Gamma(\widehat\lambda_s)
\le
\Gamma^\star-\kappa_\varepsilon
\right\}
\to 0.
\end{align*}
Therefore,
$
\operatorname{dist}(\widehat\lambda_s,\mathcal M)\xrightarrow{p}0.
$
This proves \textup{(i)}.

\medskip

\noindent\textit{\underline{Equal design sensitivity of the whole-sample and split-sample procedures.}} By Theorem~\ref{minmax-thm} and Corollary~\ref{corr1}~\ref{minimax-eq-simplex}, the whole-sample procedure \eqref{cohen-sensanalysis} rejects at level $\Gamma$ if and only if $\sup_{\lambda\in\Delta_+}\Psi_w(\lambda,\Gamma)>0$. On the event $E_w$, for each fixed $\lambda$, the map $\Gamma\mapsto\Psi_w(\lambda,\Gamma)$ is continuous and strictly decreasing with unique zero $\widehat\Gamma_w(\lambda)$, so $\Psi_w(\lambda,\Gamma)>0$ if and only if $\Gamma<\widehat\Gamma_w(\lambda)$. Therefore, the sensitivity value of \eqref{cohen-sensanalysis} is
$
\sup_{\lambda\in\Delta_+}\widehat\Gamma_w(\lambda)
=
\widehat\Gamma_w(\widehat\lambda_w),
$
where the last equality follows from the definition of $\widehat\lambda_w$. Hence,
\[
\left|
\widehat\Gamma_w(\widehat\lambda_w)-\Gamma^\star
\right|
=
\left|
\sup_{\lambda\in\DeltaP}\widehat\Gamma_w(\lambda)
-
\sup_{\lambda\in\DeltaP}\widetilde\Gamma(\lambda)
\right|
\le
\sup_{\lambda\in\DeltaP}
\left|\widehat\Gamma_w(\lambda)-\widetilde\Gamma(\lambda)\right|
\xrightarrow{p}0.
\]
Thus, 
$
\widehat\Gamma_w(\widehat\lambda_w)
\xrightarrow{p}
\Gamma^\star.
$

The split-sample procedure \eqref{p2} estimates \(\widehat\lambda_p\) on the planning sample and evaluates the sensitivity value on the analysis sample, so its sensitivity value is \(\widehat\Gamma_a(\widehat\lambda_p)\). We have
\begin{align*}
\left|
\widehat\Gamma_a(\widehat\lambda_p)-\Gamma^\star
\right|
&\le
\left|
\widehat\Gamma_a(\widehat\lambda_p)-\widetilde\Gamma(\widehat\lambda_p)
\right|
+
\left|
\widetilde\Gamma(\widehat\lambda_p)-\Gamma^\star
\right| \\
&\le
\sup_{\lambda\in\DeltaP}
\left|\widehat\Gamma_a(\lambda)-\widetilde\Gamma(\lambda)\right|
+
\left|
\widetilde\Gamma(\widehat\lambda_p)-\Gamma^\star
\right|.
\end{align*}
The first term converges to zero in probability by uniform root consistency for the analysis sample. The second term converges to zero in probability by the preceding value-consistency conclusion applied to the planning-sample maximizer \(\widehat\lambda_p\). Hence,
$
\widehat\Gamma_a(\widehat\lambda_p)
\xrightarrow{p}
\Gamma^\star.
$

Therefore both sensitivity values converge in probability to \(\Gamma^\star\). Since, on events whose probabilities tend to one, rejection at a fixed \(\Gamma\in\Ical\) occurs if and only if the corresponding sensitivity value exceeds \(\Gamma\), it follows that, for every fixed \(\Gamma\in\mathcal I\) with
\(\Gamma<\Gamma^\star\), both procedures reject with probability tending to one,
while for every fixed \(\Gamma\in\mathcal I\) with
\(\Gamma>\Gamma^\star\), both procedures reject with probability tending to zero. Thus both procedures have design sensitivity
$
\Gamma^\star
=
\max_{\lambda\in\DeltaP}\widetilde\Gamma(\lambda)
$, and we are done.
\end{proof}

}

\clearpage

\section{Additional Simulations}

We emulate the setting described in Section \ref{sec: sims} of the main text, increasing the number of matched pairs $I$ from $300$ to $1000$. Results are provided in Figure \ref{sparse-I1000}. 

\begin{figure}[t!]
    \centering
    \includegraphics[width = \textwidth]{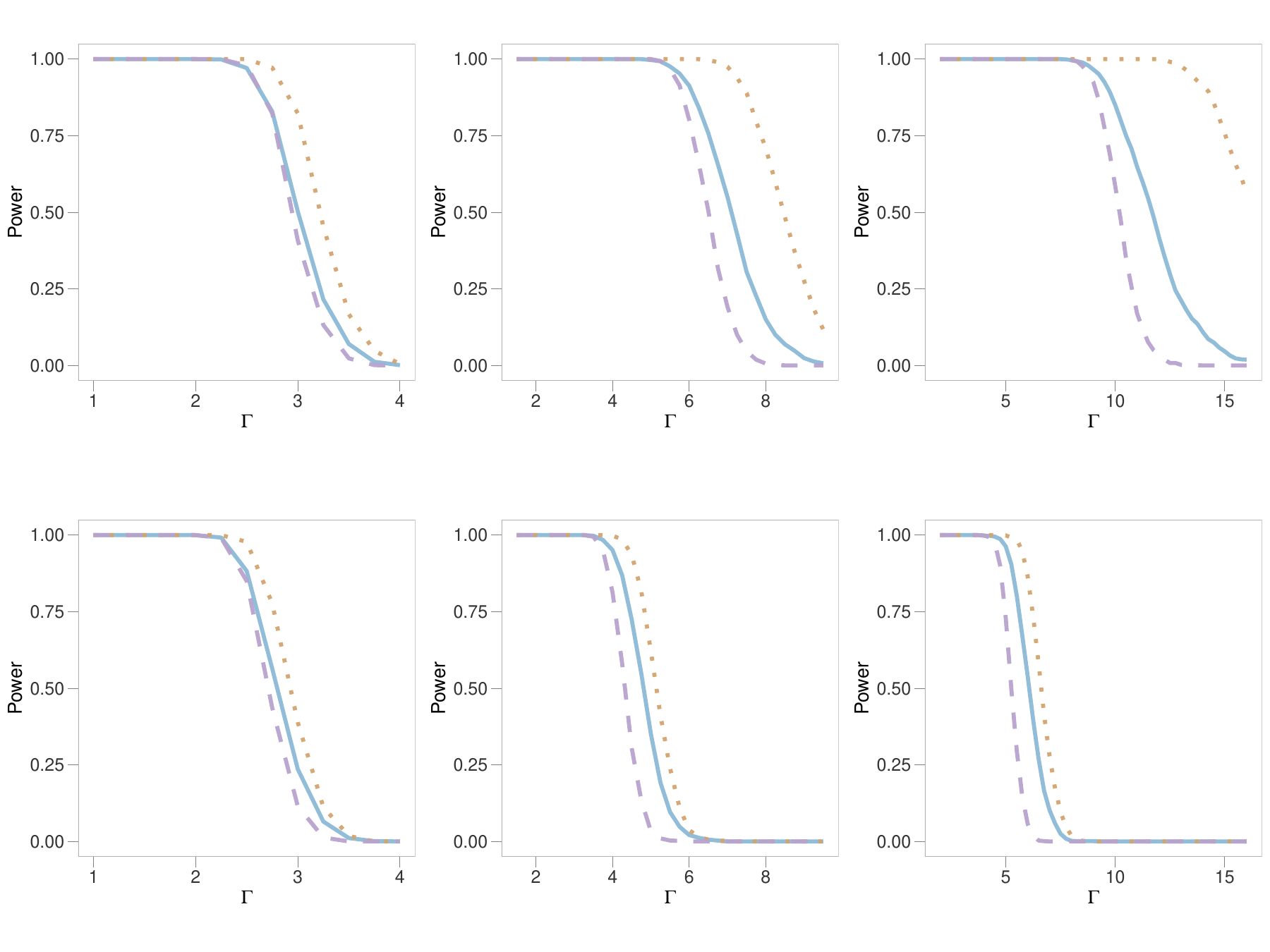}
    \caption{Power comparisons between the method of \cite{cohen2020multivariate} (dashed, purple), the sample splitting approach (solid, blue), and the oracle test (dotted, beige) as $\Gamma$ increases with $I = 1000$. The left column has $K = 5$; the center has $K = 15$; and the right has $K = 25$. The first row has $\rho = 0$ and the second row has $\rho = 0.2$.}
    \label{sparse-I1000}
\end{figure}

Comparing Figures \ref{sparse-I300} and \ref{sparse-I1000}, the difference in performance appears to diminish as $I$ grows larger. This finding is related to Theorem \ref{convergence-thm}, which shows that all three methods will eventually approach the same limiting value of design sensitivity.

\section{Details of Data Application} \label{appendix:data-app}

\subsection{Variable details}

We describe the variables from the NHANES datasets which were used for processing, defining the exposure, gathering confounders, and creating outcomes of interest. We list the corresponding variable name in the NHANES dataset in parentheses. Multiple variable names are listed for certain variables since some naming conventions changed over the course of the NHANES.\\

-- Data Processing:
\begin{itemize}
    \item Diagnosed with diabetes (\texttt{DIQ010})
    \item Currently pregnant (\texttt{RIDEXPRG})
\end{itemize}
\vspace{1em}
-- Exposure:
\begin{itemize}
    \item Ratio of family income to poverty (FPIR;  \texttt{INDFMPIR})
\end{itemize}
\vspace{1em}
-- Measured Confounders:
\begin{itemize}
    \item Six-month time period of examination (\texttt{RIDEXMON})
    \item Gender of respondent (\texttt{RIAGENDR})
    \item Age of respondent (\texttt{RIDAGEYR})
    \item Race/ethnicity of respondent (\texttt{RIDRETH1})
    \item Insurance status of respondent (\texttt{HID010, HIQ011})
\end{itemize}
\vspace{1em}
-- Outcome Variables:
\begin{itemize}
    \item Diet (discussed in further detail in Appendix~\ref{appendix:diet})
    \item Cotinine (\texttt{LBXCOT})
    \item Glycated hemoglobin (HbA1c; \texttt{LBXGH})
    \item Estimated glomerular filtration rate (eGFR)
    \begin{itemize}
        \item Measured via Schwartz bedside formula \citep{schwartz2009new}) using height (\texttt{BMXHT}) and creatinine (\texttt{LBXSCR, LBDSCR})
    \end{itemize}
    \item Moderate leisure-time physical activity (LTPA)
        \begin{itemize}
        \item Measured using frequency (\texttt{PADTIMES} in early waves, \texttt{PAQ670} in later waves) and duration (\texttt{PADDURAT} in early waves, \texttt{PAD675} in later waves)
    \end{itemize}
    \item Non-HDL cholesterol
        \begin{itemize}
        \item Measured as difference between \texttt{LBDTCSI} and either \texttt{LBDHDLSI} or \texttt{LBDHDDSI}
    \end{itemize}
    \item Systolic blood pressure (BP)
        \begin{itemize}
        \item Measured as average of three or four BP measurements (\texttt{BPXSY1, BPXSY2, BPXSY3, BPXSY4})
        \end{itemize}
    \item Vigorous leisure-time physical activity (LTPA)
        \begin{itemize}
        \item Measured using frequency (\texttt{PADTIMES} in early waves, \texttt{PAQ655} in later waves) and duration (\texttt{PADDURAT} in early waves, \texttt{PAD660} in later waves)
    \end{itemize}
    \item Waist-to-height ratio (\texttt{BMXWAIST / BMXHT})
\end{itemize}

\subsection{Covariate balance} \label{appendix:smdplt} Matching was done using the \texttt{MatchIt} package in R \citep{ho2011matchit}. Satisfactory covariate balance after matching can be seen in Figure~\ref{fig:SMDplot}.

\begin{figure}[H]
    \centering
    \includegraphics[width = \textwidth]{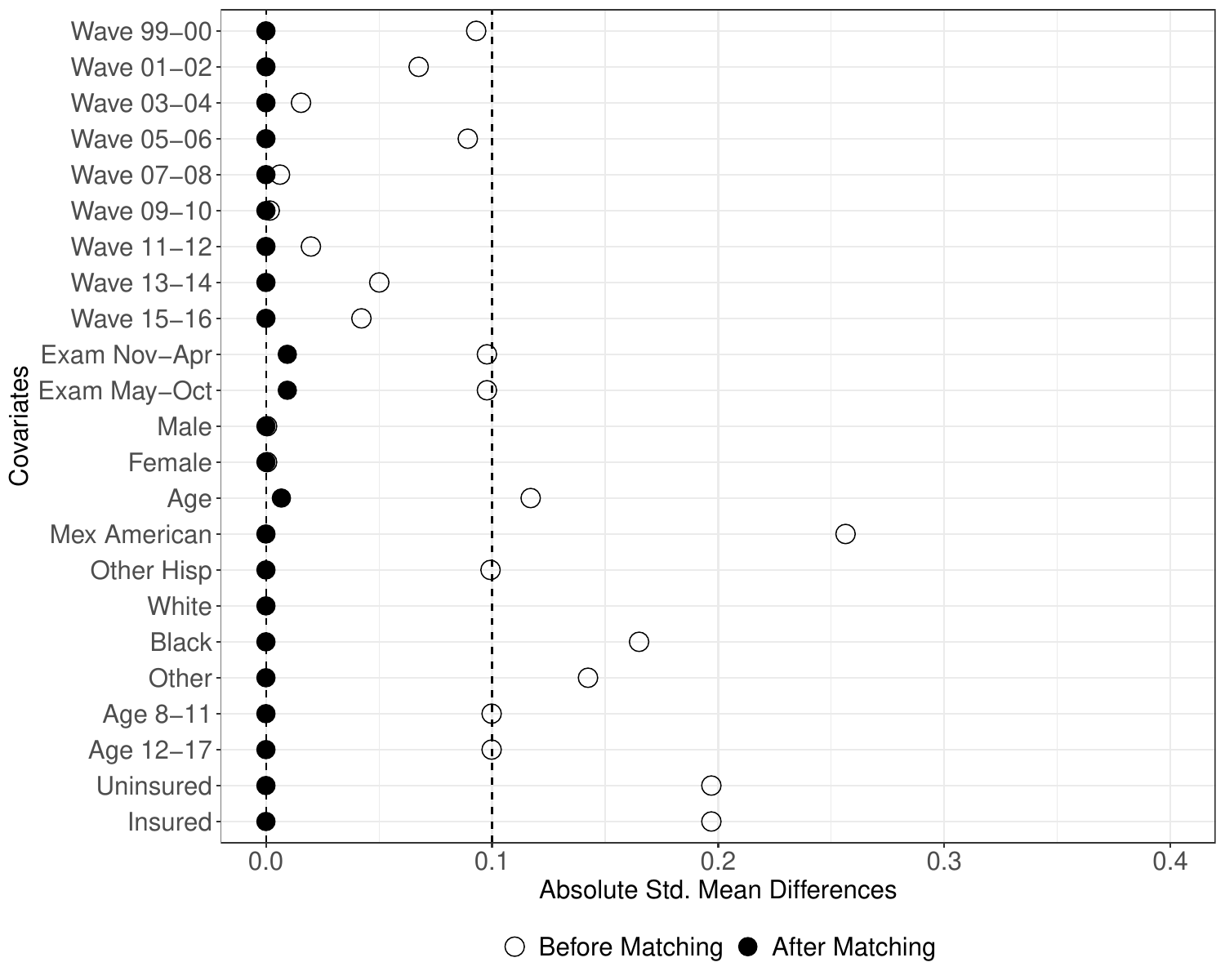}
    \caption{Love plot displaying the absolute standardized mean differences across covariates.}
    \label{fig:SMDplot}
\end{figure}

\subsection{Creating a robust nutrition composite} \label{appendix:diet}
We constructed a harmonized Healthy Eating Index--2015 (HEI-2015) measure from NHANES 1999--2016 Day~1 24-hour dietary recalls. Individual food records were linked to their corresponding food-pattern equivalency files (MPED for 1999--2004; FPED for 2005--2016) at the respondent--line level using respondent identifier and line number, aggregated to respondent-level cup/ounce equivalents, and converted to HEI-2015 component densities per 1{,}000~kcal using total Day~1 energy from the dietary totals file. Legumes were allocated between vegetable and protein components following HEI-2015 rules. In NHANES 1999--2004, MPED respondent--line linkages are incomplete and unmatched lines carry non-negligible energy; we therefore computed HEI only when $\ge 95\%$ of Day~1 line-level kcal matched to MPED, whereas in 2005--2016 FPED linkage gaps were rare and carried essentially zero energy in our QC diagnostics. The resulting variable (\texttt{HEI\_Total}, range 0--100) reflects overall adherence to the Dietary Guidelines for Americans, with higher scores indicating higher diet quality.

We constructed a line-level intake table containing \texttt{SEQN}, line number, food code, and per-line energy (\texttt{kcal\_line}), merged respondent--line food-pattern equivalents from MPED/FPED, and quantified join failures at the line level. A food line was classified as ``unmatched'' if all HEI mapping variables were missing after the merge. We computed (i) the unmatched line fraction and (ii) the unmatched energy fraction, defined as the fraction of total line-level energy sitting on unmatched lines. For matched lines, food-pattern equivalent variables were converted to numeric and missing values were set to zero (interpreting missing within a matched line as zero contribution for that component). We retained unmatched lines as missing at the line level, quantified their energy share, and restricted scoring in 1999–2004 to respondents with at least 95\% of Day-1 energy on matched lines. We aggregated to the respondent level by summing food-pattern equivalents across food lines and simultaneously computed:
(i) \texttt{kcal\_dr\_sum}, the sum of \texttt{kcal\_line} across all Day~1 lines,
(ii) \texttt{kcal\_mapped}, the sum of \texttt{kcal\_line} across matched lines only,
and (iii) counts of total and matched lines.
We then merged respondent-level nutrient totals from the Dietary Totals File (including total energy, sodium, saturated fat, MUFA, and PUFA) and retained the Day~1 dietary recall weight (\texttt{WTDRD1}) for later survey-weighted analyses.

For HEI density calculations (per 1{,}000~kcal), we used total Day~1 energy from the Dietary Totals File as the primary denominator, with a fallback to \texttt{kcal\_dr\_sum} when total energy was missing. Let $kcal$ denote the chosen total energy and define $per1000 = kcal/1000$. Nutrient-derived quantities were computed as:
(i) sodium density (g/1{,}000~kcal), $(\text{sodium}_{mg}/1000)/per1000$;
(ii) saturated fat as percent energy, $100 \times (9 \cdot \text{SFA}_{g})/kcal$;
and (iii) the fatty acid ratio $(\text{MUFA}_{g} + \text{PUFA}_{g})/\text{SFA}_{g}$, treated as missing when any fat input was missing or when all fat grams were zero.

Because MPED (1999--2004) and FPED (2005--2016) differ in variable names and sometimes component definitions, we harmonized a common set of food-group totals used for scoring via cycle-appropriate coalescing rules. For fruits, we used total fruit and total fruit juice where available, and otherwise used the MPED proxy for whole fruit (sum of citrus/melons/berries and other fruit). Whole fruit was computed as $\max(\text{Total fruit}-\text{Fruit juice},0)$ when both were available, and was capped not to exceed total fruit when both were observed. Analogous harmonization was performed for vegetables, dark green vegetables, whole and refined grains, dairy, and protein-related components. Added sugars were harmonized across cycles and converted to percent energy under the assumption that the added sugars variable is in teaspoons; we used 1~tsp $\approx 4.2$~g and 4~kcal/g (i.e., 16.8~kcal per tsp). HEI-2015 treats legumes (beans and peas) as counting toward total protein foods until the standard is met; remaining legumes count toward vegetables and contribute to the ``greens and beans'' component. For FPED cycles, we used explicit vegetable-legume (cup-equivalent) and protein-legume (oz-equivalent) variables. For MPED cycles, we used the MPED legume cup-equivalent variable and imputed protein-legume ounce-equivalents using the FPED conversion 1~cup legumes $=$ 4~oz-eq protein foods. We computed non-legume protein density using (i) FPED protein-food totals (which exclude legumes by design) when available, and (ii) a conservative MPED proxy based on meat/poultry/fish plus nuts and soy when a direct total was unavailable. Let $\text{Prot}_{\text{nonleg}}$ denote non-legume protein density (oz-eq/1{,}000~kcal) and $\text{Leg}_{\text{prot}}$ denote legume-as-protein density. We allocated legumes to protein up to the deficit relative to the HEI target (2.5~oz-eq/1{,}000~kcal):
$
\text{need} = \max\{2.5 - \text{Prot}_{\text{nonleg}}, 0\},$ and $
\text{leg\_as\_protein} = \min\{\text{Leg}_{\text{prot}}, \text{need}\}.
$
Remaining legumes were assigned to vegetables. For FPED cycles, total vegetables already exclude legumes, so we added vegetable-allocated legume remainder to total vegetables. For MPED cycles, where total vegetables may include legumes, we removed total legumes then added back the vegetable-allocated remainder. ``Greens and beans'' was computed as dark green vegetables plus vegetable-allocated legumes and was capped not to exceed total vegetables. We computed HEI-2015 component scores using standard cutpoints and linear interpolation. Adequacy components were scored proportionally from 0 to the maximum points and capped at the maximum; moderation components were scored using a piecewise linear rule that assigns maximum points at or below the ``good'' cutpoint and 0 points at or above the ``bad'' cutpoint. The total score was computed as the sum of 13 component scores, yielding \texttt{HEI\_Total} on a 0--100 scale.

Our primary estimand is the {mean of person-level HEI scores} computed from each respondent's Day~1 recall
$
\theta_{\text{ind}} \;=\; \mathbb{E}\!\left[\text{HEI}_i\right],
$
where $\text{HEI}_i$ is the HEI-2015 score computed from respondent $i$'s Day~1 intake.
This differs from the alternative {population-ratio} (or ``HEI of the average diet'') approach sometimes used for NHANES surveillance summaries, which first aggregates food groups and energy over individuals (typically with survey weights) and then computes a single HEI score from those aggregated densities. Because HEI is a nonlinear function of component densities with capping and piecewise linear scoring, $\mathbb{E}[\text{HEI}_i]$ generally differs from $\text{HEI}(\mathbb{E}[\text{densities}_i])$, and the direction of the difference depends on the distribution of intakes relative to component cutpoints. Our analysis uses the individual-first estimand because it aligns with person-level modeling and individual-level covariates/outcomes. A respondent was scored only if a valid energy denominator was available ($kcal>0$ and $per1000>0$). For 1999--2004 (MPED cycles), we additionally required high linkage coverage to avoid biased food-group densities when substantial energy fell on unmatched lines. Specifically, we computed a coherent mapping coverage metric using the same line-level energy system:
\[
\text{map\_coverage\_dr} \;=\; \frac{\texttt{kcal\_mapped}}{\texttt{kcal\_dr\_sum}},
\]
and set \texttt{score\_ok} to \texttt{TRUE} only when $\text{map\_coverage\_dr} \ge 0.95$.
If MUFA/PUFA/SFA inputs required for the fatty acid ratio were missing, the fatty acids component (and thus \texttt{HEI\_Total}) was set to missing by design rather than imputing nutrient values.

\subsection{HEI components on the pilot sample} \label{appendix:diettbl} Pilot sample summary statistics for HEI outcomes by FPIR stratum, age group, and gender. Outcomes are designated as either adequacy or moderation components of the HEI, or the total HEI score.  A higher value reflects closer alignment with recommendations set forth by the the Dietary Guidelines for Americans.

\setlength{\tabcolsep}{4pt}
\renewcommand{\arraystretch}{1}
\setlength{\arrayrulewidth}{0.2pt}
\setlength{\LTleft}{0pt}
\setlength{\LTright}{0pt}
\tiny
\begin{longtable}{@{\extracolsep{\fill}}lcccc|cccc@{}}
\label{tab:edatbl_summary_hei}\\
\toprule
& \multicolumn{4}{c}{\shortstack{FPIR $<$ 1\\(N=1006)}} & \multicolumn{4}{c}{\shortstack{FPIR $>$ 1\\(N=1006)}} \\
\rule{0pt}{2.0em} & & & & & & & & \\
Outcome/Strata & \shortstack{Lower\\Quartile} & Median & \shortstack{Upper\\Quartile} & Trimean & \shortstack{Lower\\Quartile} & Median & \shortstack{Upper\\Quartile} & Trimean \\
\midrule
\endfirsthead
\multicolumn{9}{l}{\textit{Table \thetable\ (continued)}}\\
\toprule
& \multicolumn{4}{c}{\shortstack{FPIR $<$ 1\\(N=1006)}} & \multicolumn{4}{c}{\shortstack{FPIR $>$ 1\\(N=1006)}} \\
\rule{0pt}{2.0em} & & & & & & & & \\
Outcome/Strata & \shortstack{Lower\\Quartile} & Median & \shortstack{Upper\\Quartile} & Trimean & \shortstack{Lower\\Quartile} & Median & \shortstack{Upper\\Quartile} & Trimean \\
\midrule
\endhead
\bottomrule
\endfoot
\bottomrule
\endlastfoot
\hspace*{0.8em}\textbf{Adequacy} & & & & & & & & \\
\rule{0pt}{0.35em} & & & & & & & & \\
\hspace*{0.8em}- \textit{Total Fruits} & & & & & & & & \\
\hspace*{2.2em}8-11 & 0.13 & 2.22 & 4.83 & 2.35 & 0.12 & 2.22 & 5.00 & 2.39 \\
\hspace*{4.4em}Female & 0.21 & 2.19 & 4.83 & 2.36 & 0.10 & 2.02 & 4.77 & 2.23 \\
\hspace*{4.4em}Male & 0.09 & 2.25 & 4.91 & 2.37 & 0.30 & 2.47 & 5.00 & 2.56 \\
\hspace*{2.2em}12-17 & 0.00 & 1.37 & 4.35 & 1.77 & 0.00 & 1.30 & 4.46 & 1.77 \\
\hspace*{4.4em}Female & 0.00 & 1.46 & 4.25 & 1.79 & 0.06 & 1.36 & 5.00 & 1.95 \\
\hspace*{4.4em}Male & 0.00 & 1.12 & 4.46 & 1.67 & 0.00 & 1.27 & 3.79 & 1.58 \\
\rule{0pt}{1.1em} & & & & & & & & \\
\hspace*{0.8em}- \textit{Whole Fruits} & & & & & & & & \\
\hspace*{2.2em}8-11 & 0.00 & 1.30 & 5.00 & 1.90 & 0.00 & 2.10 & 5.00 & 2.30 \\
\hspace*{4.4em}Female & 0.00 & 2.36 & 5.00 & 2.43 & 0.00 & 1.95 & 5.00 & 2.23 \\
\hspace*{4.4em}Male & 0.00 & 0.55 & 5.00 & 1.53 & 0.00 & 2.76 & 5.00 & 2.63 \\
\hspace*{2.2em}12-17 & 0.00 & 1.19 & 5.00 & 1.85 & 0.00 & 1.04 & 5.00 & 1.77 \\
\hspace*{4.4em}Female & 0.00 & 1.63 & 5.00 & 2.07 & 0.00 & 1.32 & 5.00 & 1.91 \\
\hspace*{4.4em}Male & 0.00 & 1.02 & 5.00 & 1.76 & 0.00 & 0.69 & 5.00 & 1.60 \\
\rule{0pt}{1.1em} & & & & & & & & \\
\hspace*{0.8em}- \textit{Total Vegetables} & & & & & & & & \\
\hspace*{2.2em}8-11 & 0.77 & 1.74 & 3.05 & 1.82 & 0.74 & 1.51 & 2.68 & 1.61 \\
\hspace*{4.4em}Female & 0.97 & 1.99 & 3.28 & 2.06 & 0.75 & 1.61 & 2.54 & 1.63 \\
\hspace*{4.4em}Male & 0.67 & 1.50 & 2.78 & 1.61 & 0.73 & 1.46 & 2.78 & 1.61 \\
\hspace*{2.2em}12-17 & 0.75 & 1.83 & 3.18 & 1.90 & 1.00 & 1.95 & 3.34 & 2.06 \\
\hspace*{4.4em}Female & 0.64 & 1.83 & 3.18 & 1.87 & 1.03 & 2.06 & 3.56 & 2.18 \\
\hspace*{4.4em}Male & 0.83 & 1.82 & 3.17 & 1.91 & 0.96 & 1.85 & 3.10 & 1.94 \\
\rule{0pt}{1.1em} & & & & & & & & \\
\hspace*{0.8em}- \textit{Greens and Beans} & & & & & & & & \\
\hspace*{2.2em}8-11 & 0.00 & 0.00 & 0.00 & 0.00 & 0.00 & 0.00 & 0.00 & 0.00 \\
\hspace*{4.4em}Female & 0.00 & 0.00 & 0.00 & 0.00 & 0.00 & 0.00 & 0.00 & 0.00 \\
\hspace*{4.4em}Male & 0.00 & 0.00 & 0.00 & 0.00 & 0.00 & 0.00 & 0.00 & 0.00 \\
\hspace*{2.2em}12-17 & 0.00 & 0.00 & 0.00 & 0.00 & 0.00 & 0.00 & 0.00 & 0.00 \\
\hspace*{4.4em}Female & 0.00 & 0.00 & 0.00 & 0.00 & 0.00 & 0.00 & 0.00 & 0.00 \\
\hspace*{4.4em}Male & 0.00 & 0.00 & 0.00 & 0.00 & 0.00 & 0.00 & 0.00 & 0.00 \\
\rule{0pt}{1.1em} & & & & & & & & \\
\hspace*{0.8em}- \textit{Whole Grains} & & & & & & & & \\
\hspace*{2.2em}8-11 & 0.00 & 0.42 & 2.67 & 0.87 & 0.00 & 0.97 & 3.44 & 1.35 \\
\hspace*{4.4em}Female & 0.00 & 1.16 & 3.32 & 1.41 & 0.00 & 1.09 & 3.60 & 1.44 \\
\hspace*{4.4em}Male & 0.00 & 0.00 & 2.15 & 0.54 & 0.00 & 0.79 & 3.31 & 1.22 \\
\hspace*{2.2em}12-17 & 0.00 & 0.00 & 1.71 & 0.43 & 0.00 & 0.00 & 2.52 & 0.63 \\
\hspace*{4.4em}Female & 0.00 & 0.00 & 2.16 & 0.54 & 0.00 & 0.00 & 2.72 & 0.68 \\
\hspace*{4.4em}Male & 0.00 & 0.00 & 1.57 & 0.39 & 0.00 & 0.00 & 2.29 & 0.57 \\
\rule{0pt}{1.1em} & & & & & & & & \\
\hspace*{0.8em}- \textit{Dairy} & & & & & & & & \\
\hspace*{2.2em}8-11 & 4.55 & 7.78 & 10.00 & 7.53 & 3.88 & 7.10 & 10.00 & 7.02 \\
\hspace*{4.4em}Female & 4.59 & 8.02 & 10.00 & 7.66 & 3.81 & 6.89 & 10.00 & 6.90 \\
\hspace*{4.4em}Male & 4.50 & 7.58 & 10.00 & 7.42 & 4.07 & 7.21 & 10.00 & 7.12 \\
\hspace*{2.2em}12-17 & 2.35 & 5.47 & 8.87 & 5.54 & 2.63 & 6.08 & 9.50 & 6.07 \\
\hspace*{4.4em}Female & 2.16 & 4.99 & 8.69 & 5.21 & 2.38 & 5.93 & 9.23 & 5.87 \\
\hspace*{4.4em}Male & 2.49 & 5.77 & 9.22 & 5.82 & 2.93 & 6.12 & 9.82 & 6.25 \\
\rule{0pt}{1.1em} & & & & & & & & \\
\hspace*{0.8em}- \textit{Total Protein Foods} & & & & & & & & \\
\hspace*{2.2em}8-11 & 2.33 & 4.01 & 5.00 & 3.84 & 2.28 & 4.10 & 5.00 & 3.87 \\
\hspace*{4.4em}Female & 2.29 & 3.91 & 5.00 & 3.78 & 2.53 & 4.12 & 5.00 & 3.94 \\
\hspace*{4.4em}Male & 2.57 & 4.02 & 5.00 & 3.90 & 2.05 & 3.98 & 5.00 & 3.75 \\
\hspace*{2.2em}12-17 & 2.45 & 4.15 & 5.00 & 3.94 & 2.50 & 4.23 & 5.00 & 3.99 \\
\hspace*{4.4em}Female & 2.46 & 3.97 & 5.00 & 3.85 & 2.44 & 4.11 & 5.00 & 3.91 \\
\hspace*{4.4em}Male & 2.35 & 4.32 & 5.00 & 4.00 & 2.64 & 4.43 & 5.00 & 4.13 \\
\rule{0pt}{1.1em} & & & & & & & & \\
\hspace*{0.8em}- \textit{Seafood and Plant Proteins} & & & & & & & & \\
\hspace*{2.2em}8-11 & 0.00 & 0.01 & 3.01 & 0.76 & 0.00 & 0.06 & 2.45 & 0.64 \\
\hspace*{4.4em}Female & 0.00 & 0.01 & 3.25 & 0.82 & 0.00 & 0.00 & 2.13 & 0.53 \\
\hspace*{4.4em}Male & 0.00 & 0.00 & 2.89 & 0.72 & 0.00 & 0.12 & 2.65 & 0.72 \\
\hspace*{2.2em}12-17 & 0.00 & 0.00 & 1.88 & 0.47 & 0.00 & 0.00 & 2.11 & 0.53 \\
\hspace*{4.4em}Female & 0.00 & 0.00 & 1.89 & 0.47 & 0.00 & 0.01 & 1.79 & 0.45 \\
\hspace*{4.4em}Male & 0.00 & 0.00 & 1.86 & 0.46 & 0.00 & 0.00 & 2.52 & 0.63 \\
\rule{0pt}{1.1em} & & & & & & & & \\
\hspace*{0.8em}- \textit{Fatty Acids} & & & & & & & & \\
\hspace*{2.2em}8-11 & 0.72 & 3.25 & 6.15 & 3.34 & 0.90 & 3.06 & 6.67 & 3.42 \\
\hspace*{4.4em}Female & 0.55 & 3.54 & 6.29 & 3.48 & 1.02 & 3.18 & 7.39 & 3.69 \\
\hspace*{4.4em}Male & 0.82 & 3.05 & 5.99 & 3.23 & 0.80 & 3.02 & 6.08 & 3.23 \\
\hspace*{2.2em}12-17 & 1.45 & 3.70 & 7.22 & 4.01 & 1.09 & 3.82 & 7.84 & 4.14 \\
\hspace*{4.4em}Female & 1.47 & 3.75 & 7.27 & 4.06 & 1.51 & 4.14 & 7.88 & 4.42 \\
\hspace*{4.4em}Male & 1.42 & 3.62 & 6.96 & 3.90 & 0.83 & 3.31 & 7.61 & 3.76 \\
\rule{0pt}{1.1em} & & & & & & & & \\
\hspace*{0.8em}\textbf{Moderation} & & & & & & & & \\
\rule{0pt}{0.35em} & & & & & & & & \\
\hspace*{0.8em}- \textit{Refined Grains} & & & & & & & & \\
\hspace*{2.2em}8-11 & 0.51 & 4.18 & 8.27 & 4.28 & 1.25 & 4.42 & 7.46 & 4.38 \\
\hspace*{4.4em}Female & 1.33 & 4.97 & 8.94 & 5.05 & 1.27 & 4.36 & 7.60 & 4.40 \\
\hspace*{4.4em}Male & 0.02 & 3.66 & 7.60 & 3.74 & 1.24 & 4.49 & 7.28 & 4.38 \\
\hspace*{2.2em}12-17 & 0.83 & 4.75 & 8.36 & 4.67 & 0.71 & 4.82 & 8.30 & 4.67 \\
\hspace*{4.4em}Female & 1.09 & 5.05 & 8.07 & 4.81 & 1.25 & 4.88 & 8.50 & 4.88 \\
\hspace*{4.4em}Male & 0.58 & 4.47 & 8.49 & 4.50 & 0.03 & 4.78 & 8.09 & 4.42 \\
\rule{0pt}{1.1em} & & & & & & & & \\
\hspace*{0.8em}- \textit{Sodium} & & & & & & & & \\
\hspace*{2.2em}8-11 & 2.61 & 5.76 & 8.27 & 5.60 & 2.00 & 5.21 & 8.04 & 5.12 \\
\hspace*{4.4em}Female & 2.81 & 6.14 & 8.39 & 5.87 & 2.12 & 5.29 & 7.73 & 5.11 \\
\hspace*{4.4em}Male & 2.07 & 5.27 & 8.21 & 5.21 & 2.00 & 5.17 & 8.16 & 5.13 \\
\hspace*{2.2em}12-17 & 2.28 & 5.69 & 9.00 & 5.67 & 1.82 & 5.20 & 8.30 & 5.13 \\
\hspace*{4.4em}Female & 2.61 & 5.98 & 8.92 & 5.88 & 1.63 & 5.02 & 8.30 & 4.99 \\
\hspace*{4.4em}Male & 2.19 & 5.56 & 9.02 & 5.58 & 1.95 & 5.33 & 8.39 & 5.25 \\
\rule{0pt}{1.1em} & & & & & & & & \\
\hspace*{0.8em}- \textit{Added Sugars} & & & & & & & & \\
\hspace*{2.2em}8-11 & 2.60 & 5.15 & 8.08 & 5.24 & 1.65 & 5.01 & 7.42 & 4.78 \\
\hspace*{4.4em}Female & 2.61 & 5.37 & 7.72 & 5.27 & 1.99 & 5.34 & 7.85 & 5.13 \\
\hspace*{4.4em}Male & 2.60 & 4.95 & 8.24 & 5.19 & 1.33 & 4.79 & 7.23 & 4.54 \\
\hspace*{2.2em}12-17 & 0.26 & 4.05 & 6.91 & 3.81 & 0.30 & 4.19 & 7.16 & 3.96 \\
\hspace*{4.4em}Female & 0.00 & 3.77 & 6.95 & 3.62 & 0.27 & 4.23 & 7.55 & 4.07 \\
\hspace*{4.4em}Male & 0.46 & 4.11 & 6.86 & 3.88 & 0.33 & 3.97 & 6.98 & 3.82 \\
\rule{0pt}{1.1em} & & & & & & & & \\
\hspace*{0.8em}- \textit{Saturated Fats} & & & & & & & & \\
\hspace*{2.2em}8-11 & 2.90 & 5.40 & 8.02 & 5.43 & 2.95 & 5.63 & 8.52 & 5.68 \\
\hspace*{4.4em}Female & 2.69 & 5.93 & 8.33 & 5.72 & 2.63 & 5.53 & 8.16 & 5.47 \\
\hspace*{4.4em}Male & 3.07 & 5.26 & 7.41 & 5.25 & 3.16 & 5.75 & 8.96 & 5.90 \\
\hspace*{2.2em}12-17 & 3.14 & 6.27 & 9.34 & 6.26 & 3.10 & 6.38 & 8.91 & 6.19 \\
\hspace*{4.4em}Female & 3.02 & 6.13 & 9.34 & 6.16 & 3.14 & 6.21 & 9.13 & 6.17 \\
\hspace*{4.4em}Male & 3.41 & 6.43 & 9.34 & 6.40 & 3.10 & 6.45 & 8.74 & 6.18 \\
\rule{0pt}{1.1em} & & & & & & & & \\
\hspace*{0.8em}\textbf{Total Score} & & & & & & & & \\
\rule{0pt}{0.35em} & & & & & & & & \\
\hspace*{0.8em}- \textit{Total} & & & & & & & & \\
\hspace*{2.2em}8-11 & 37.12 & 44.38 & 51.72 & 44.40 & 36.60 & 44.22 & 52.71 & 44.44 \\
\hspace*{4.4em}Female & 37.23 & 45.76 & 54.53 & 45.82 & 36.17 & 43.64 & 53.22 & 44.17 \\
\hspace*{4.4em}Male & 37.05 & 43.78 & 49.63 & 43.56 & 37.12 & 44.62 & 52.26 & 44.66 \\
\hspace*{2.2em}12-17 & 35.67 & 42.21 & 50.86 & 42.74 & 35.56 & 43.45 & 50.65 & 43.28 \\
\hspace*{4.4em}Female & 36.14 & 41.52 & 49.79 & 42.24 & 35.77 & 44.02 & 51.16 & 43.75 \\
\hspace*{4.4em}Male & 35.34 & 42.93 & 51.15 & 43.09 & 34.80 & 42.88 & 49.79 & 42.59 \\
\end{longtable}
\normalsize

\subsection{Choosing a robust test statistic}
For each individual and intersection null hypothesis, we grid search over $15$ candidate test statistics on the pilot sample and choose the statistic that yields the highest sensitivity value. We consider the class of $m$-statistics proposed by \cite{rosenbaum2007sensitivity, rosenbaum2013impact} and set the parameter $\lambda = 1/2$ for each candidate statistic. Observations are scaled by the $\lambda$-quantile of the absolute pair differences; for  $\lambda = 1/2$, this corresponds to the median of all paired absolute differences. We vary the parameter ``outer'' between $1.5$ and $3.5$, while varying the parameter ``inner'' between $0$ and $3$. Inner trimming is done to increase design sensitivity and outer trimming is used to enhance resistance to outliers. Then, absolute pair differences smaller than inner*median receive weight $0$, absolute pair differences larger than median*outer receive weight $1$, and between inner*median and outer*median weights increase linearly from $0$ to $1$. Note that setting (inner, outer) $= (0, 2.5)$ amounts to a choice of $m$-statistic with Huber’s $\psi$-function, whereas (inner, outer) $= (0.5, 2.5)$ corresponds to Rosenbaum's ``InnerTrim'' method.

\subsection{Estimated linear combinations on pilot sample} \label{appendix:lambda-estimates}

Table~\ref{tab:estimated-lambdas} reports the estimated pilot sample linear
combinations used in the data analysis. Each column gives the estimated weights
$\widehat{\lambda}_{\mathrm{plan}}$ for the corresponding stratum, with
entries rounded to two decimal places. Outcomes assigned zero weight in a particular stratum were not
carried forward for testing in that stratum.

\begin{table}[htbp]
\centering
\caption{Estimated pilot sample linear combinations by age and sex stratum. Entries are components of $\widehat{\lambda}_{\mathrm{plan}}$.}
\label{tab:estimated-lambdas}
\footnotesize
\begin{tabular}{lccc}
\multicolumn{4}{c}{{Ages 8--11}}\\
\midrule
Outcome & Pooled & Female & Male \\
\midrule
Waist-to-Height       & 0.00 & 0.30 & 0.00 \\
Cotinine              & 1.00 & 0.45 & 1.00 \\
Systolic BP           & 0.00 & 0.09 & 0.00 \\
Non-HDL Cholesterol   & 0.00 & 0.00 & 0.00 \\
Diet                  & 0.00 & 0.16 & 0.00 \\
\midrule \addlinespace[1.5em]
\multicolumn{4}{c}{{Ages 12--17}}\\
\midrule
Outcome & Pooled & Female & Male \\
\midrule
Waist-to-Height       & 0.00 & 0.00 & 0.05 \\
Moderate LTPA         & 0.00 & 0.06 & 0.00 \\
Vigorous LTPA         & 0.00 & 0.36 & 0.00 \\
Cotinine              & 1.00 & 0.28 & 0.70 \\
Systolic BP           & 0.00 & 0.03 & 0.00 \\
HbA1c                 & 0.00 & 0.11 & 0.06 \\
Non-HDL Cholesterol   & 0.00 & 0.08 & 0.00 \\
eGFR                  & 0.00 & 0.00 & 0.01 \\
Diet                  & 0.00 & 0.08 & 0.18 \\
\bottomrule
\vspace{2em}
\end{tabular}
\end{table}

For example, cotinine was the only variable selected on the pilot sample for our global null tests of young boys,  while almost $70\%$ of the normalized mass of the linear combination vector was designated to cotinine in the analysis of adolescent boys. Among young boys, cotinine was the only variable selected on the pilot sample for our tests of this age/sex-specific global null. Cotinine similarly took around $45\%$ of the normalized mass of the contrast vector in the young female stratum, with the rest predominantly allocated to waist-to-height ratio and diet. Likewise, in adolescent boys, almost $70\%$ of the normalized mass was designated to cotinine and almost $65\%$ was assigned to cotinine and vigorous physical activity in adolescent girls. We remark that eGFR was forced to zero in this latter population since we pre-specified testing the alternative of decreased filtration rate in poorer individuals, while the pilot sample reflected a slight increase; see Table~\ref{tab:edatbl_summary_outcomes}.

\subsection{Results for different levels of $\Gamma$}

We present results at various levels of control for potential unmeasured confounding across the remaining pages.

\begin{figure}[p]
    \centering
    \includegraphics[scale=.22]{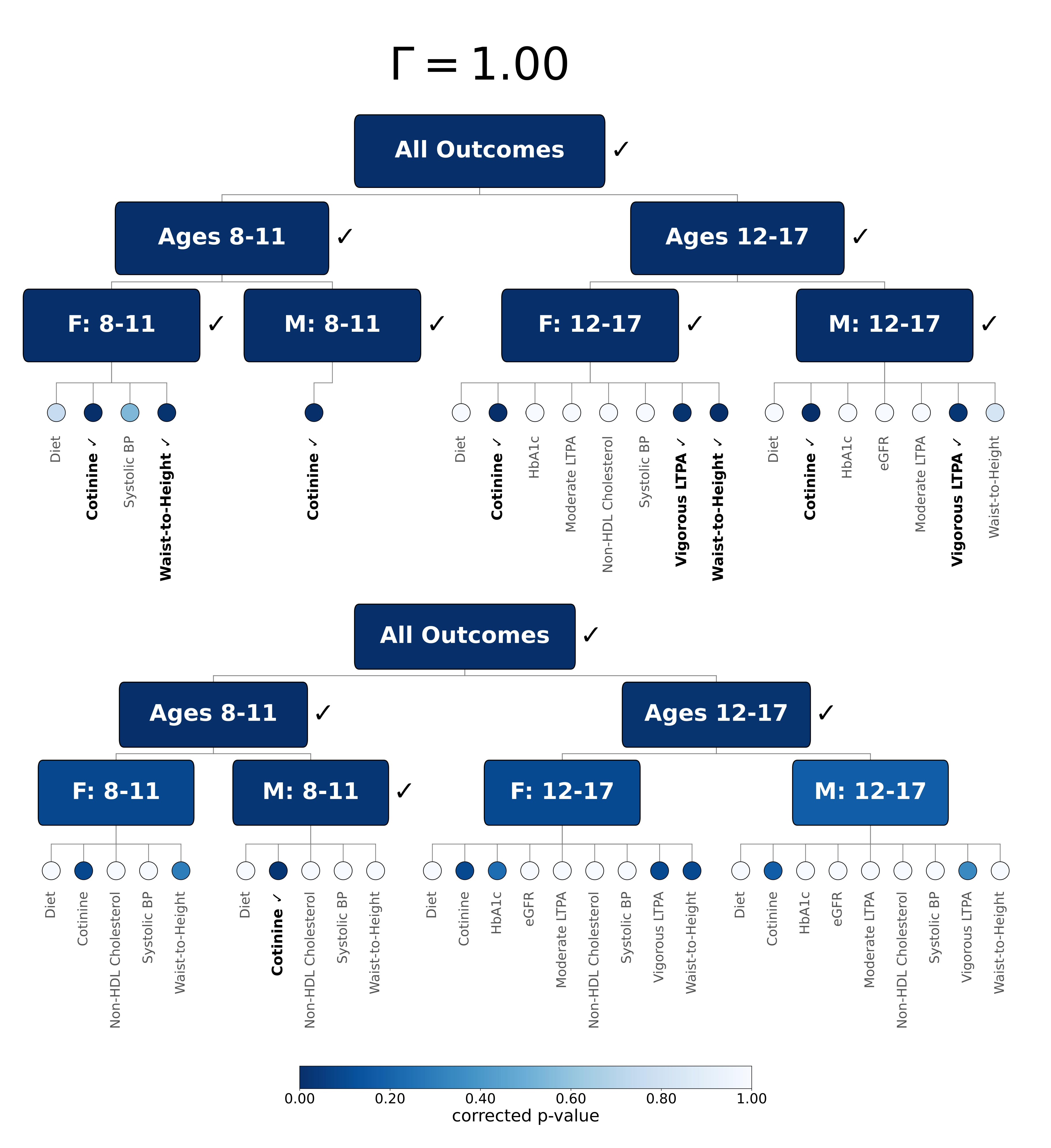}
    \caption{Rejected hypotheses and $p$-values for our method (top row) and the full sample comparator of \cite{cohen2020multivariate} (bottom row) at $\Gamma  = 1.00$.}
    \label{app-results-1.00}
\end{figure}
\newpage
\begin{figure}[p]
    \centering
    \includegraphics[scale=.22]{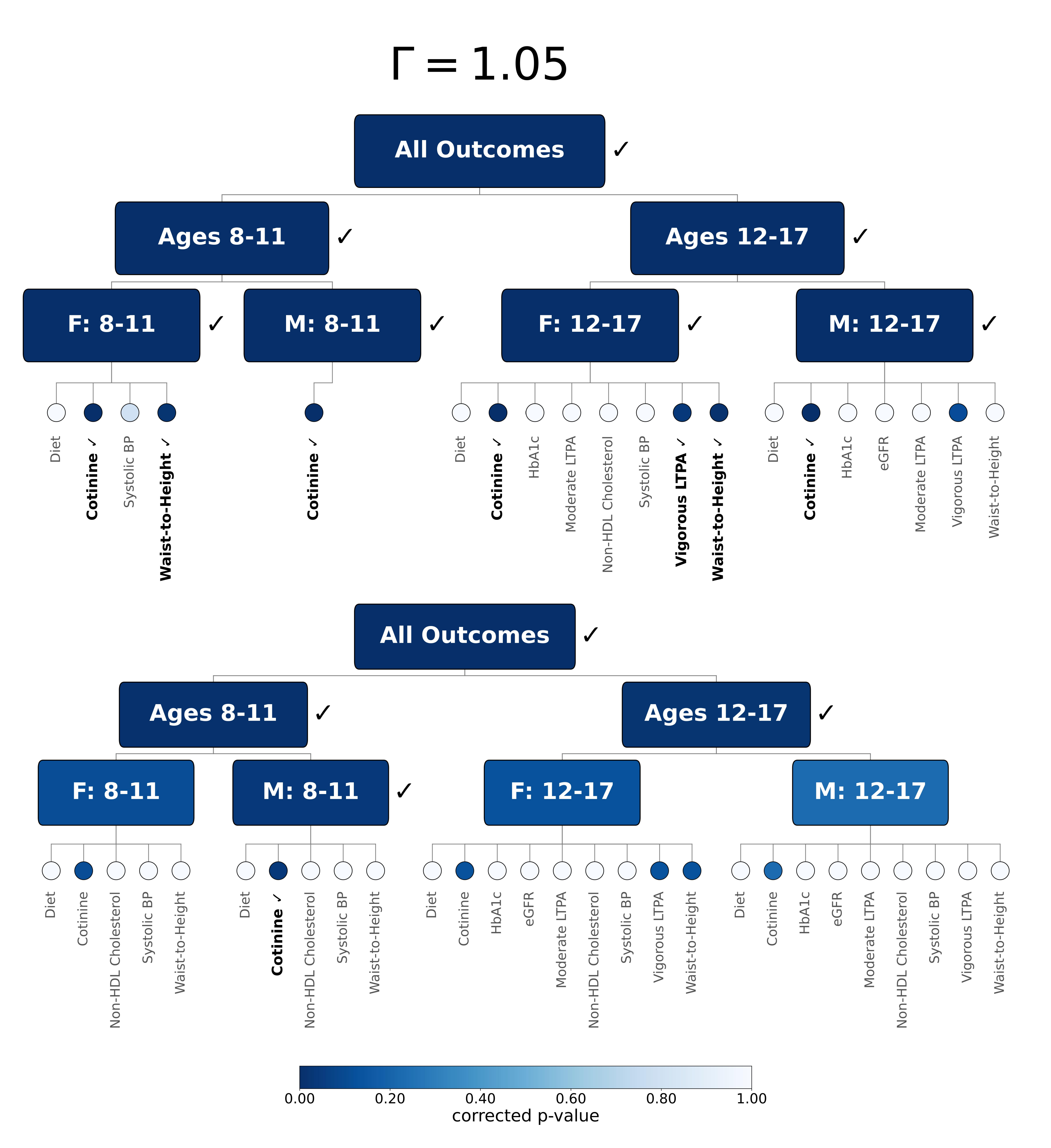}
    \caption{Rejected hypotheses and $p$-values for our method (top row) and the full sample comparator of \cite{cohen2020multivariate} (bottom row) at $\Gamma  = 1.05$.}
    \label{app-results-1.05}
\end{figure}
\newpage
\begin{figure}[p]
    \centering
    \includegraphics[scale=.22]{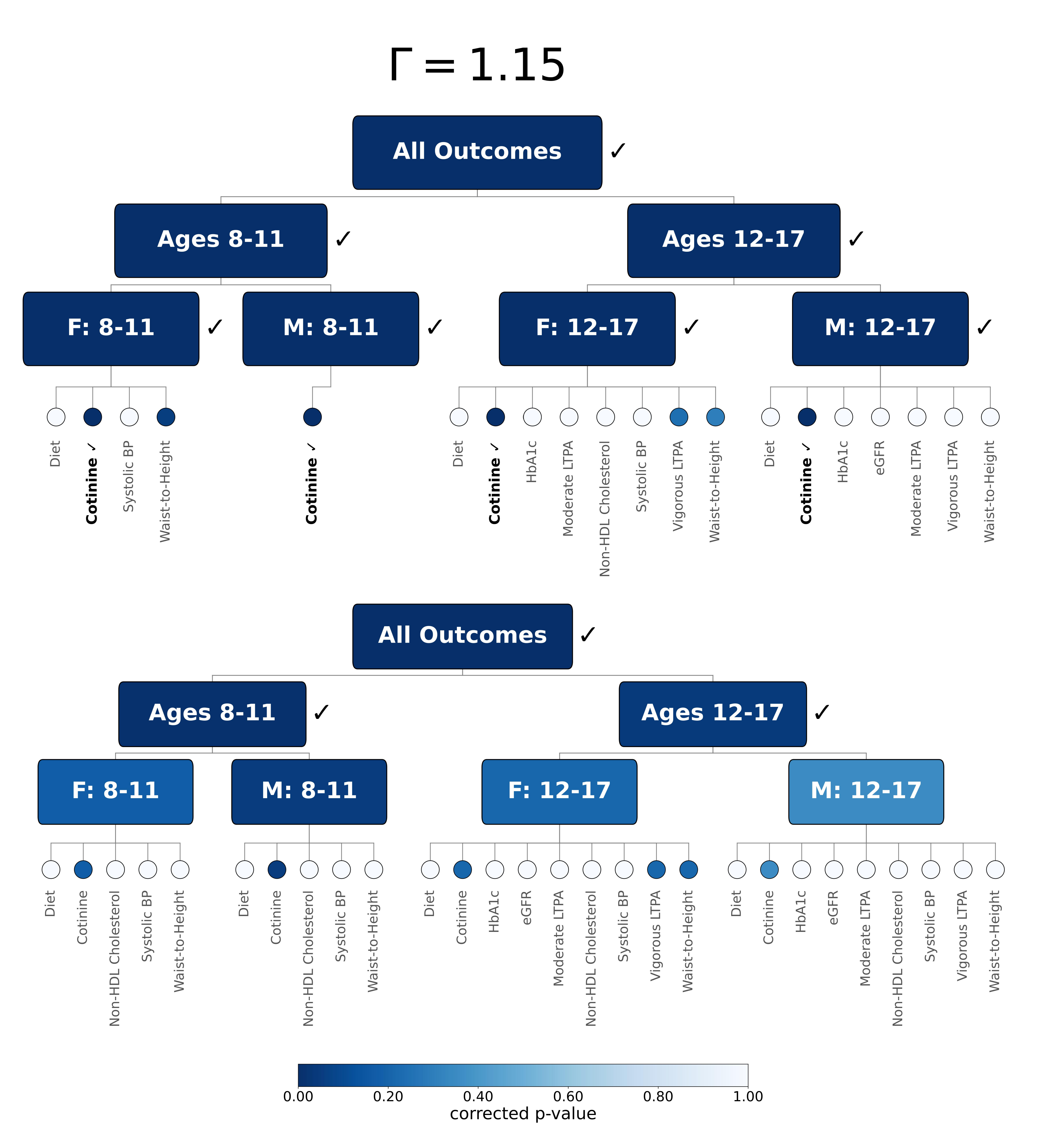}
    \caption{Rejected hypotheses and $p$-values for our method (top row) and the full sample comparator of \cite{cohen2020multivariate} (bottom row) at $\Gamma  = 1.15$.}
    \label{app-results-1.15}
\end{figure}
\newpage
\begin{figure}[p]
    \centering
    \includegraphics[scale=.22]{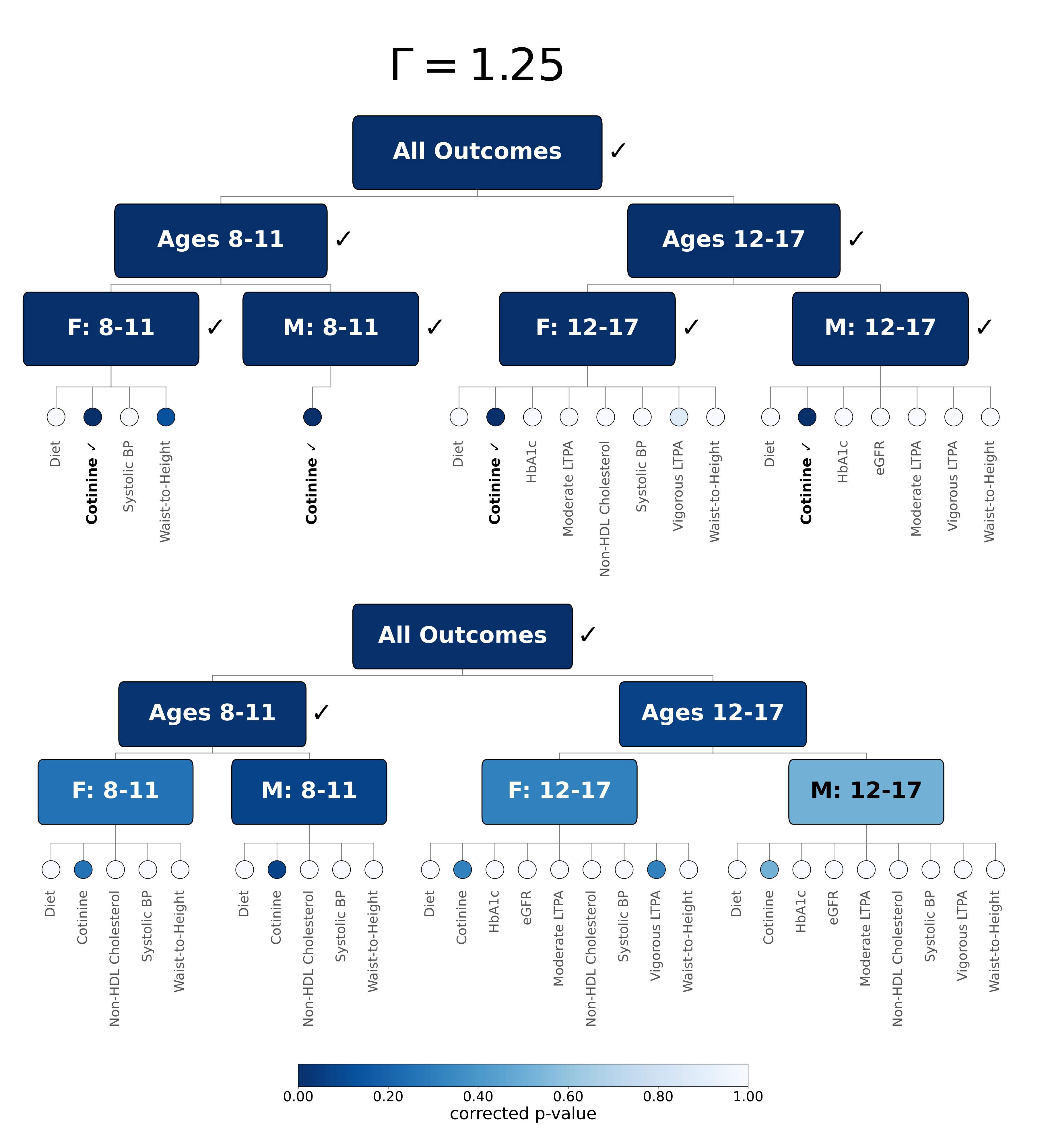}
    \caption{Rejected hypotheses and $p$-values for our method (top row) and the full sample comparator of \cite{cohen2020multivariate} (bottom row) at $\Gamma  = 1.25$.}
    \label{app-results-1.25}
\end{figure}
\newpage
\begin{figure}[p]
    \centering
    \includegraphics[scale=.22]{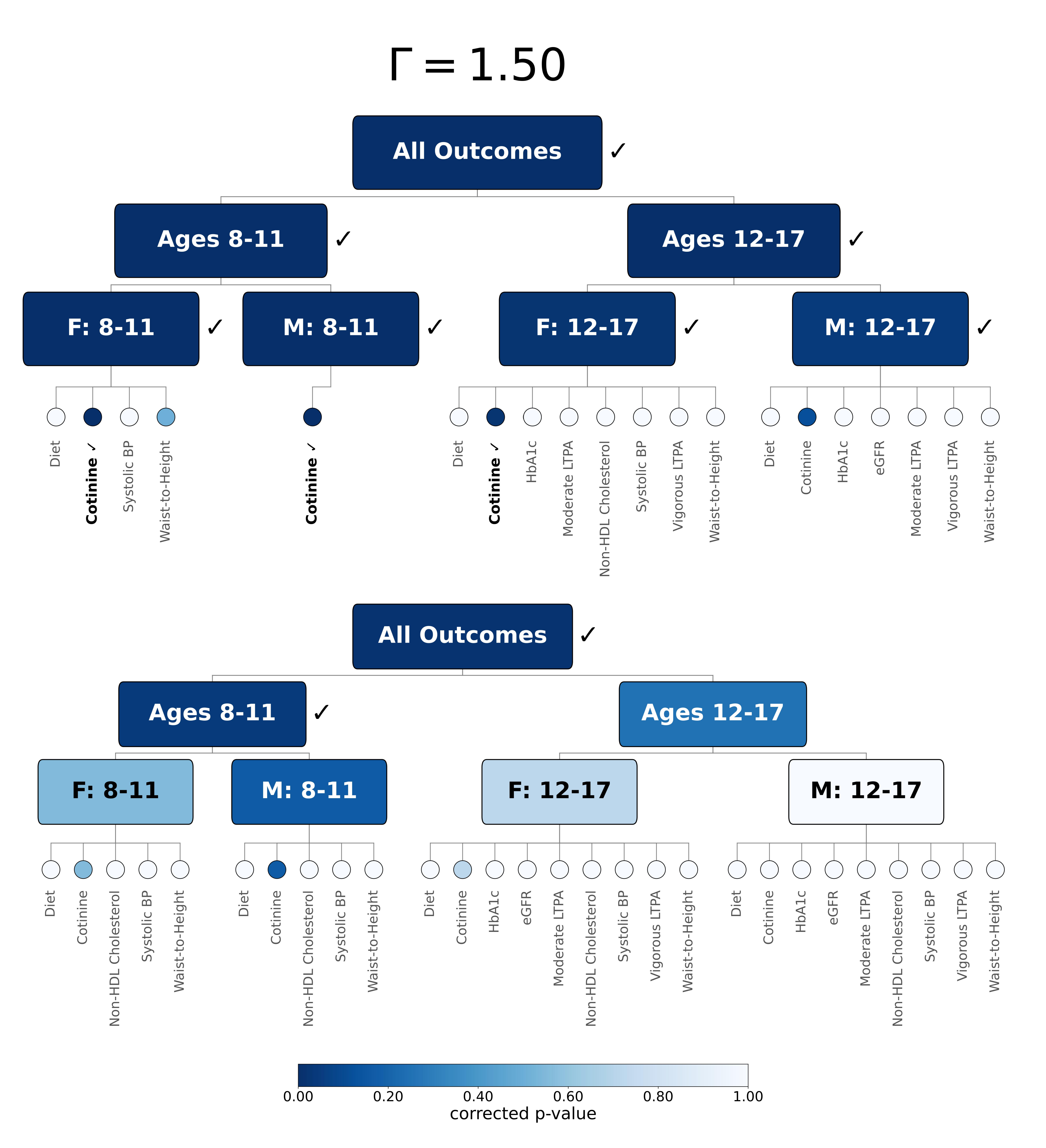}
    \caption{Rejected hypotheses and $p$-values for our method (top row) and the full sample comparator of \cite{cohen2020multivariate} (bottom row) at $\Gamma  = 1.50$.}
    \label{app-results-1.50}
\end{figure}
\newpage
\begin{figure}[p]
    \centering
    \includegraphics[scale=.22]{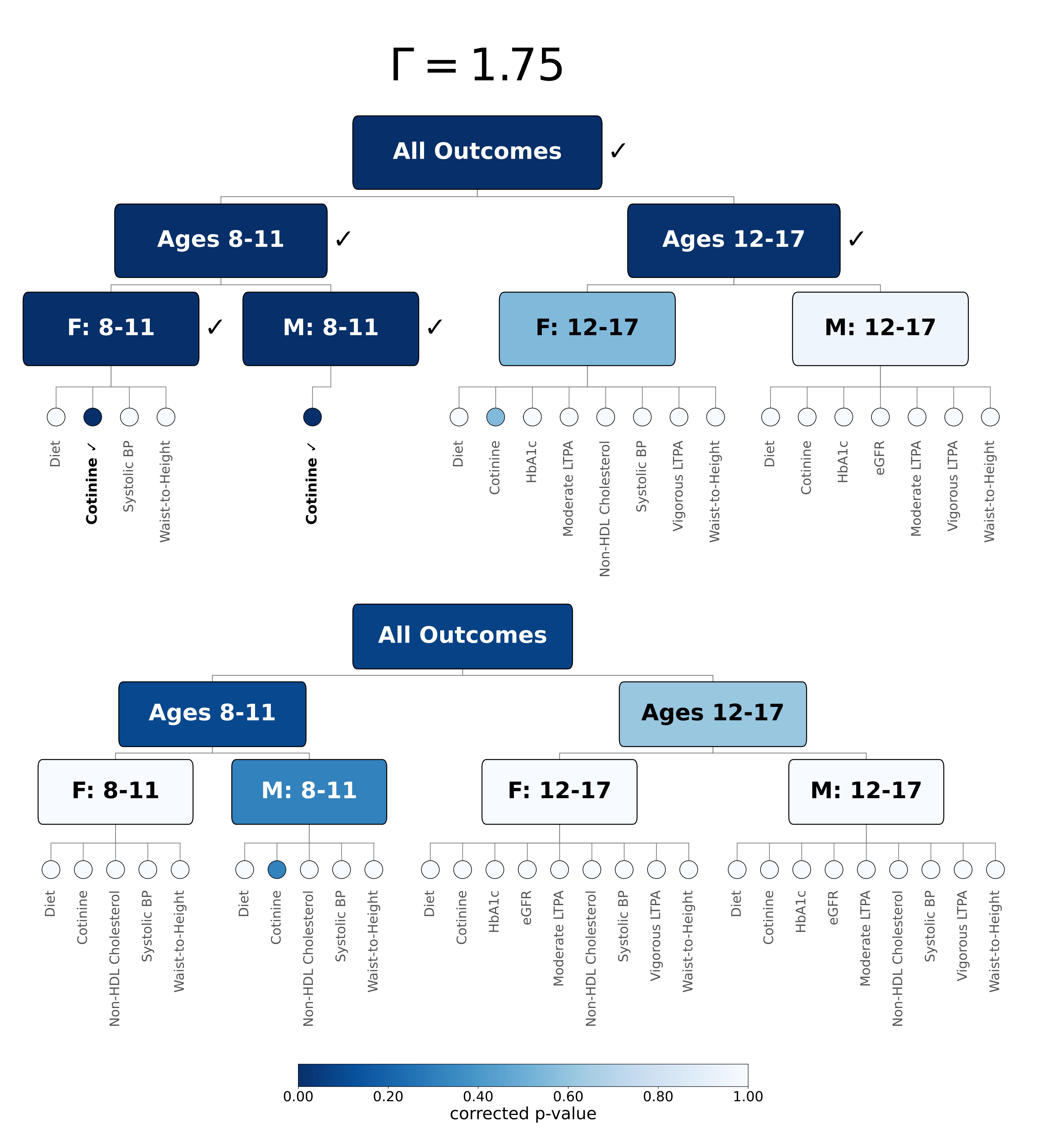}
    \caption{Rejected hypotheses and $p$-values for our method (top row) and the full sample comparator of \cite{cohen2020multivariate} (bottom row) at $\Gamma  = 1.75$.}
    \label{app-results-1.75}
\end{figure}
\newpage
\begin{figure}[p]
    \centering
    \includegraphics[scale=.22]{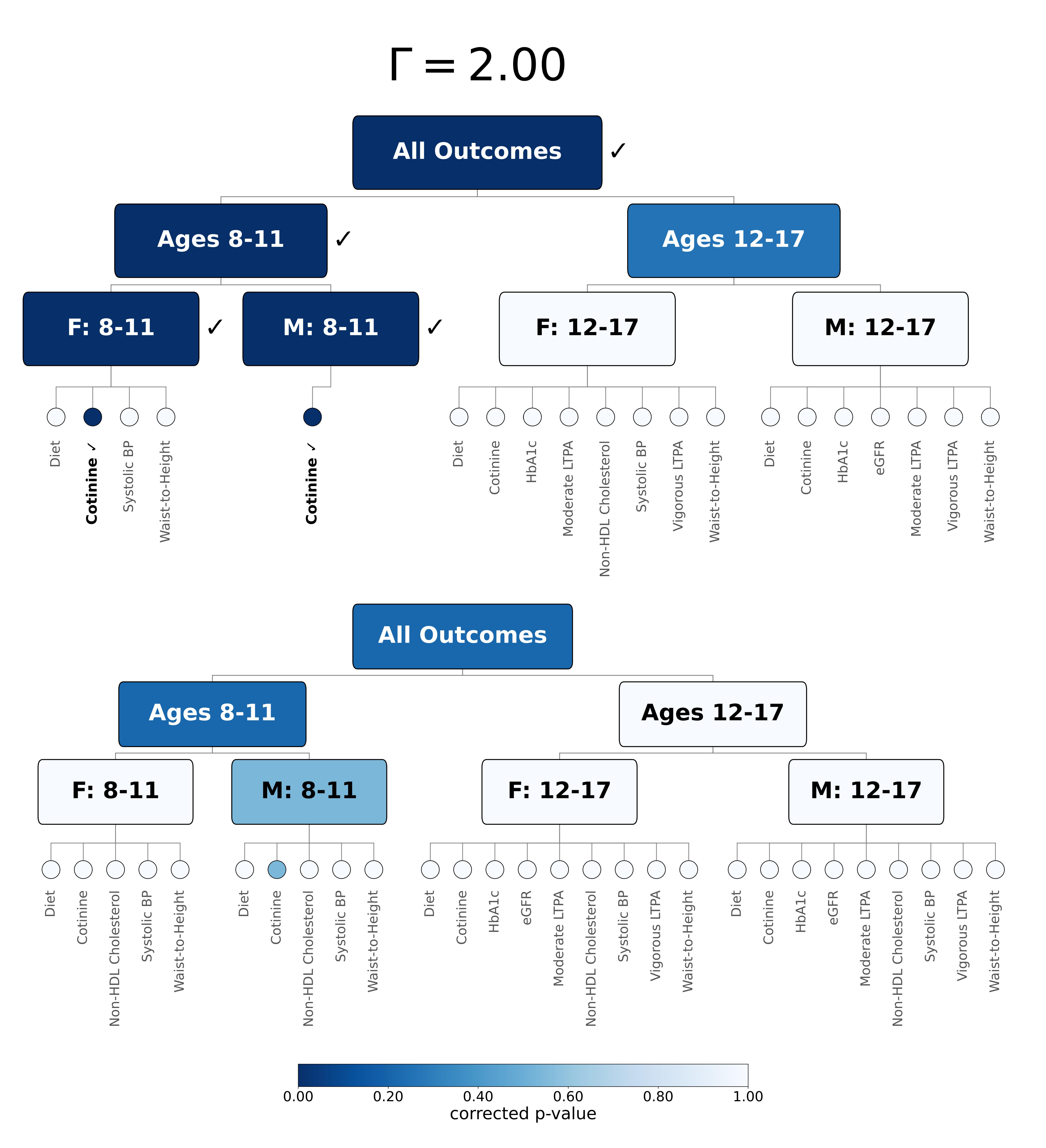}
    \caption{Rejected hypotheses and $p$-values for our method (top row) and the full sample comparator of \cite{cohen2020multivariate} (bottom row) at $\Gamma  = 2.00$.}
    \label{app-results-2.00}
\end{figure}
\newpage
\begin{figure}[p]
    \centering
    \includegraphics[scale=.22]{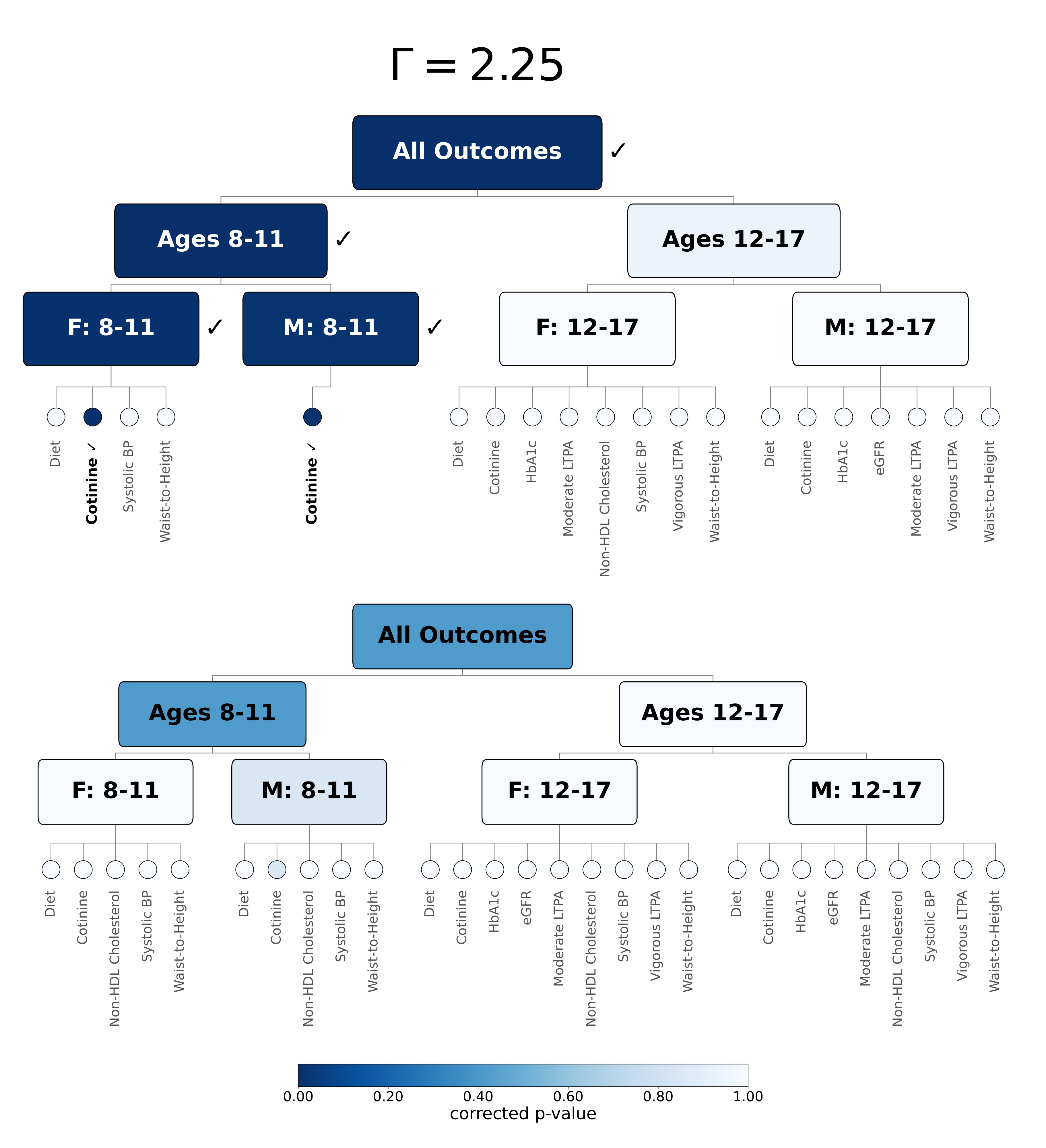}
    \caption{Rejected hypotheses and $p$-values for our method (top row) and the full sample comparator of \cite{cohen2020multivariate} (bottom row) at $\Gamma  = 2.25$.}
    \label{app-results-2.25}
\end{figure}
\newpage
\begin{figure}[p]
    \centering
    \includegraphics[scale=.22]{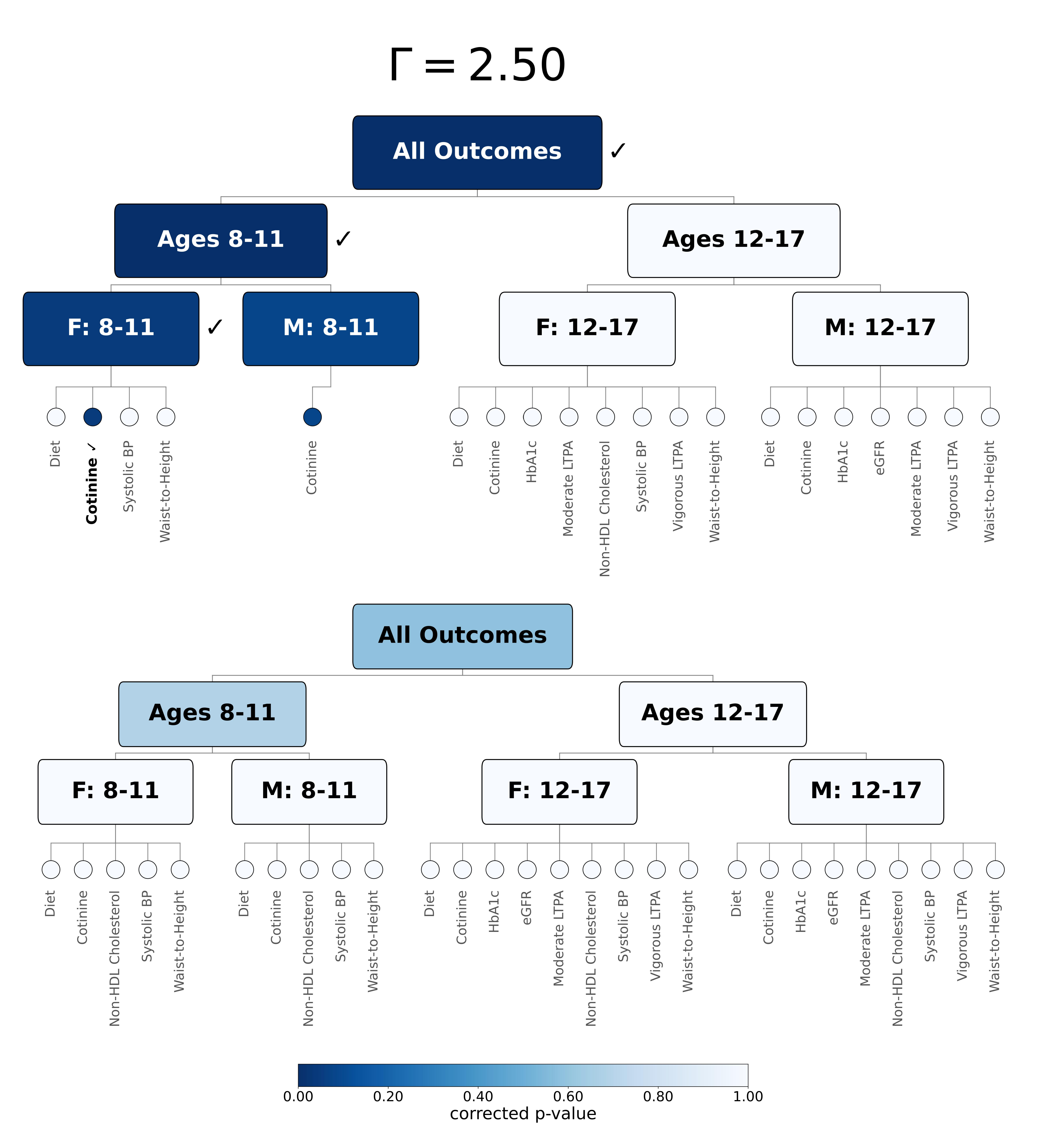}
    \caption{Rejected hypotheses and $p$-values for our method (top row) and the full sample comparator of \cite{cohen2020multivariate} (bottom row) at $\Gamma  = 2.50$.}
    \label{app-results-2.50}
\end{figure}

\clearpage

\renewcommand{\refname}{References}
\putbib

\end{bibunit}
\end{document}